\def\R{\mathbb{R}}
\def\Z{\mathbb{Z}}
\def\C{\mathbb{C}}
\def\Q{\mathbb{Q}}
\def\ot{\otimes}
\def\s{\sigma}
\def\del{\delta}
\def\G{\Gamma}
\def\bk{\mathbf{k}}
\def\bs{\boldsymbol{\sigma}}
\def\bv{\mathbf{v}}
\def\BZ{BZ}
\def\kdk{,\dots,}
\def\deg{\mathrm{deg}}
\newtheorem{theorem}{Theorem}[section] 
\newtheorem{proposition}{Proposition}[section] 
\newtheorem{lemma}{Lemma}[section] 
\newtheorem{corollary}{Corollary}[section] 
\theoremstyle{definition}
\newtheorem{definition}{Definition}[section]
\title[Topological Quantum Phase diagrams]{Quantum Phase diagrams and transitions for Chern topological insulators}
\author{Ralph Kaufmann$^{1,2,3}$}%
 \email{rkaufman@math.purdue.edu}
\address{$^1$ Department of Mathematics, Purdue University, West Lafayette, Indiana, 47907.}
\address{$^2$ Department of Physics and Astronomy, Purdue University, West Lafayette, Indiana, 47907.}
\address{$^3$ PQSEI, Purdue University, West Lafayette, Indiana, 47907.}
\author{Mohamad Mousa$^{2}$}
  \email{mmousa@purdue.edu}
\author{Birgit Wehefritz–Kaufmann$^{1,2,3}$}%
 \email{ebkaufma@purdue.edu}
\date{\today}
\begin{document}
 
\begin{abstract}
Topological invariants such as Chern classes are by now a standard way to classify topological phases. Introducing and varying parameters in such systems leads to phase diagrams, where the Chern classes may jump when crossing a critical locus. These systems appear naturally when considering slicing of higher dimensional systems or when considering systems with parameters. 

As the Chern classes are topological invariants, they can only change if the ``topology breaks down''. We give a precise mathematical formulation of this phenomenon and show that synthetically any phase diagram of Chern topological phases can be designed and realized by a physical system,  using covering, aka.\ winding maps.
Here we provide explicit families realizing arbitrary Chern jumps. The critical locus of these maps is described  by the classical rose curves. These realize the lower bound on the number of Dirac points necessary obtained from viewing them as local charges. We treat several concrete models and show that they have the predicted generic behavior.

In particular, we focus on different types of lattices and tight--binding models, and show that effective winding maps, and thus higher Chern numbers, can be achieved using $k$--th nearest neighbors. 
We give explicit formulas for a family of 2D lattices using imaginary quadratic field extensions and their norms.  Our study includes the square, triangular, honeycomb and Kagome lattices.

\end{abstract}

\maketitle
\section{Introduction}
The use of topological invariants in condensed matter physics has by now been well established.
The most prominent among these are Chern classes and especially the first Chern class. These characteristic classes are defined for bundles and take values in cohomology.
To associate them to a given physical system, one has to find the appropriate geometric setting that contains the bundle of which one takes the Chern class. Following Berry \cite{Berry1984}, in a non-degenerate situation, this can be done by looking at a line bundle defined by a non--degenerate state and then integrating the Berry curvature obtained from the Berry connection which corresponds to adiabatic transport. As Chern classes are topological invariants, they do not actually depend on the choice of the connection, and one can use any other convenient one, see, e.g. Simon \cite{Simon1983} for examples.
In condensed matter, this bundle is usually obtained via Bloch theory where the base of the bundle is the Brillouin zone and the bundle is the bundle of occupied bands, for details on the geometry see, e.g. \cite{KKWK16}.
Another method is to utilize projectors and K-theory \cite{Connes1994,Bellissard}. This has the advantage of carrying over to the noncommutative case, which physically appears when considering non--trivial magnetic fields.

One area of continued further study is the realization of higher Chern numbers and their appearance in phase diagrams.
Studies in the physics literature have indicated that adding long-range interactions can result in a higher Chern number. In particular,  \cite{Sticlet2013} contains a systematic numerical and analytical study of extending Haldane's model in graphene, which showed that the Chern number generally increases with adding more distant neighbors. 
Additionally, other methods have been proposed to realize tight-binding lattices with higher Chern numbers, such as adding bands (orbitals) for the unit cell, and stacking multilayers of the system. An important note is that adding orbitals and stacking are related. After stacking, there are typically more atoms in the new unit cell. The atoms from different layers can be thought of as different orbitals \cite{Sarma2012}.
Studies of the appearance of higher Chern numbers along these lines can be found in
\cite{Chen2011,Sarma2012,Alase2021,Mondal2022,Sticlet2012,Sticlet2013,Bena2011,Fruchart2013,Lee2015,Eslam2022,Woo2024,Wang2015}.

Mathematically, there are three standard ways to construct bundles with higher Chern numbers that are at ones' disposal: forming direct sums, taking tensor products, and pulling back along higher degree maps, see \S\ref{par:background} for details.
Using this background and mathematical tools,
we present a general method for studying the
appearance of higher Chern number phases and show that any quantum phase diagram in which the phases correspond to non--degenerate ground states with given Chern number can be realized. This is achieved using concrete standard families of Hamiltonians and will allow to design such systems. We expect this to have applications in condensed matter physics and quantum computation.

To make contact with existing geometries and systems, we then focus on the particular structure of 2D discrete tight-binding lattices that realize topological insulators. These models have a quantized Hall conductivity given by the first Chern number \cite{Thouless1982}. It is natural for physics models to have interactions that are local in space and whose magnitude depends on the physical Euclidean distances between lattice points. This is the physical basis for the general study of commensurate sublattices, see \S\ref{par:highernbh}. In \S\ref{par:commensurate}, we specialize to several lattices that model concrete condensed matter systems and find that these instantiations match  both the general mathematical models and analysis, as well as the higher order neighborhood constructions.

In summary, our analysis answers the following basic questions:
\begin{itemize}
        \item [Q1:] What Chern classes, transitions, and phase diagrams can be achieved? 
   \item[Q2:] Can these be effectively understood and computed?
        \item [Q3:] Can one effectively design phase diagrams and realize them?
\item [Q4:] How can these be implemented on a lattice using lattice geometry?
 \end{itemize}
The answers we provide are as follows:\\
{\bf A1}: Arbitrarily high Chern numbers can appear. These are already realized by basic spin Hamiltonians as studied by Berry \cite{Berry1984}. Using Clebsch-Gordon rules, the bundles corresponding to higher spins can also be obtained from $spin\frac{1}{2}$, by using tensor products and projectors, see \S\ref{par:spin}. 
 We show that all Chern numbers and all transitions are possible and that they can effectively be obtained from the Chern number $\pm 1$ $spin\frac{1}{2}$ systems by pull--back  along higher degree covering maps.
The 1--parameter family describing a single wall transition from Chern number $d$ to Chern number $d'$ is just a suspension of a family of curves.
        The critical locus is the suspension of the critical curve which is a classical rose curve $r=\cos(k\theta)$, whose history goes back to \cite{Grandi}, with $k=\frac{d-d'}{d+d'}$ parameterized over $[0,\delta\pi]$ where $\delta=|d-d'|$.
        The number of crossings, which are Dirac points in 2 band systems, is $\delta$,
        which is the number of times $(0,0)$ is crossed.
        This number is the minimal for the two-band systems. Details are in \S\ref{par:rose}.\\
{\bf A2:}  Based on classical differential topology, the computation can be done via the curvature of a connections, as mentioned above. In special cases, the bundle is a pull--back from a bundle with known Chern class, such as one of the examples above. Then  the  Chern number can be computed by the mapping degree of the function which is used to pull--back. The quickest computation of the degree is done with a ray method, which we review in \S\ref{subsection:calcdegree}. This allows one to understand and ``read off'' the transitions and provide graphical versions to represent them.\\
{\bf A3:} As we show, any prescribed phase diagram of transitions between topological phases with different Chern numbers whose crossings are polytopic can be realized, see Theorem \ref{thm:main}.  Concretely, if the critical locus near a point of the phase diagram corresponds to a polytope fan, then these families can be interpolated using the normal cone.  These standard families are minimal for 2--band systems, in the sense that they have a minimal number of level crossings.

\noindent{\bf A4:}  The implementation is abstractly given by successive quotients by lattices and sublattices, see \S \ref{par:supercell}. Physically,  these can come by including interaction terms with higher neighbor interactions. The effects of moving to sublattices, or dually to bigger unit cells, on the band structure  is discussed in \S\ref{par:bands}. 
Concretely, the theory is implemented in physical lattice systems by using commensurate lattices of higher order nearest neighbors, see \S\ref{par:lattices}.
We classify these for lattices formed by the integers $\mathcal{O}(\sqrt{-d})$ in imaginary quadratic field extensions in Appendix \ref{par:quadratic}. The results are summarized in Theorem \ref{theorem:main2}. In particular. for 
$d=1$ these are the Gaussian integers, aka.\ the square lattice, and for $d=3$ the Eisenstein integers, aka.\ the triangular lattice.
The method is to use knowledge about primes in these fields. This classifications lets us identify infinite families of commensurate neighbors. We extend the analysis to sublattices, such as the hexagonal and Kagome lattices, see \S\ref{par:commensurate}.

\section{Geometry}
We introduce the geometric cast of characters for the analysis of topological phase diagrams.
\subsection{Background}
\label{par:background}
The following is a review of the material, we need to formally write and prove the statements. Although most of it is standard,  we include the presentation for completeness and to gather it in one place, as it comes from different perspectives and sources.
\subsubsection{Ground state line--bundle and n-band systems}
Given a parametrized family of Hamiltonians,
with spectrum bounded from below, and a non--degenerate ground state, one obtains a line bundle of this ground state on the parameter space. 
This generalizes to the case of the lowest $n$ bands,
which form a rank $n$ vector bundle.
If the bands do not cross, the respective Eigenfunctions yield line bundles $L_i,i=1\kdk n$, and the vector bundle splits as $V=L_1\oplus \dots \oplus L_n$.
In the standard condensed matter setup, $V$ is the Bloch bundle and $B$ is the Brillouin zone. The two standard cases are $B=S^n$, the $n$--sphere, or $B=T^n$, the $n$--torus.
\subsubsection{Chern classes}

Cohomological invariants of vector bundles $\pi:E\to B$ are given by the Chern classes  $c_i(E)\in H^{2i}(B,\Z)$. In the usual 2D case $B=S^2,T^2$, there is only the class $c_1\in H^2(S^2)\simeq \Z$ or $c_1\in H^2(T^2)\simeq \Z$. The isomorphism with $\Z$ is given by choosing an orientation. By choosing this isomorphism, the invariant becomes integer--valued.
More generally, for a smooth $n$ dimensional connected manifold $H^n(M^n)\simeq \Z$. If $M$ is even-dimensional $n=2m$,
one would have $c_m(V)\in H^{2m}(M^{2m})\simeq \Z$.  

Choosing an orientation fixes the isomorphism with $\Z$ concretely as follows. An orientation is given by a choice of the fundamental class $[M^n]\in H_n(B)$, and to obtain Chern numbers one caps with the fundamental class $-\cap [M]:H^n(M)\to H^0(M)=\Z$.  Since one can evaluate any degree $n$--cohomology class in this way, if  $n=2m$ is even and if $i_1+i_2+\dots i_k=m$, then $(c_{i_1}\cup c_{i_2}\cdots \cup c_{i_k})\cap[M]\in \Z$. These are the Chern numbers.
It is convenient to put together the Chern classes into a polynomial, $c_t(V)=\sum_k c_k(V) t^k$, where $c_0=1$.
Note that there is a vanishing result for Chern classes which states that $c_i(E)=0$ if $rk(E)<i$.

Then the Chern classes satisfy $c_t(V\oplus W)=c_t(V)c_t(W)$. In particular, for a line bundle $c_t(L)=1+tc_1(L)$ 
 and hence $c_1(L_1\oplus L_2)=c_1(L_1)+c_1(L_2)$. 

The Chern character which is defined for a line bundle as $ch(L)=\exp(c_1(L))$ and then extended by the splitting principle takes values in the even cohomology with $\Q$ coefficients $H^{ev}(B,\Q)=\bigoplus_i H^{2i}(B,\Q)$. That is, if $x_1,\dots x_n$ are the Chern roots of $V$ --that is the formal expression $c_t(V)=\prod_j(1+x_jt)$-- then $ch(V)=\sum_j e^{x_i}$;  the  coefficients are elementary symmetric functions in the $x_i$ which can be identified with Chern classes, see \cite{Hirzebruch}:
$\operatorname {ch} (V)=\operatorname {rk} (V)+c_{1}(V)+{\frac {1}{2}}(c_{1}(V)^{2}-2c_{2}(V))+{\frac {1}{6}}(c_{1}(V)^{3}-3c_{1}(V)c_{2}(V)+3c_{3}(V))+\cdots$. 
It is a ring homomorphism, viz.\ it satisfies 
$ch(E\oplus F)=ch(E)+ch(F)$ and $ch(E\otimes F)=ch(E)ch(F)$ where the product is the cup product. From this it follows that for line bundles $c_1(L_1\otimes L_2)=c_1(L_1)+c_1(L_2)$.

\subsubsection{Chern--Weil theory}
Note that in the differentiable case, by Chern--Weil theory, see e.g.\ \cite{KobayashiNomizu}, these cohomology classes have representatives in terms of differential forms of degree $2i$.  Concretely, picking a connection $\nabla$ and denoting its curvature form $\Omega$, then
$ \det \left(\frac {i\Omega}{2\pi}t +I\right) = \sum_k c_k(V) t^k$. The surprising fact is that this is independent of the choice of connection. Using $ det(X)=\exp(\mathrm {tr} (\ln(X)))$, the  de Rham Chern forms which represent the Chern classes are explicitly given by

\begin{multline}
    c^{dR}_t(V)= \left[ I 
       + i \frac{\mathrm{tr}(\Omega)}{2\pi} t 
       +   \frac{\mathrm{tr}(\Omega^2)-\mathrm{tr}(\Omega)^2}{8\pi^2} t^2\right.\\ 
       \left. 
       + i \frac{-2\mathrm{tr}(\Omega^3)+3\mathrm{tr}(\Omega^2)\mathrm{tr}(\Omega)-\mathrm{tr}(\Omega)^3}{48\pi^3} t^3
       + \cdots  \right]
\end{multline}

The Chern character in this notation is
$  {ch} (V)=\left[\operatorname {tr} \left(\exp \left({\frac {i\Omega }{2\pi }}\right)\right)\right]$.
If $\omega_1,\dots \omega_m$ are the forms representing $c_i(E)$, then the cup product is given by the wedge product of forms and  capping is given by integration. The Chern number is $\int_{[M]}\omega_{i_1}\wedge \dots \omega_{i_k}$.

It is important to note that the Chern classes in cohomology and the Chern numbers do not depend on the choice of connection.

\subsubsection{Pull--back and mapping degree}
Being characteristic classes means that Chern classes behave well under pull--back. If $f:B'\to B$ is a continuous map, then $c_i(f^*(V))=f^*(c_i(V))$, where on the left-hand side the bundle is pulled back and on the right-hand side the cohomology class is pulled back. This holds on the form level as well. Given a map $f:M^n\to N^n$ 
between two compact oriented $n$--manifolds, one can define the mapping degree by $f_*[M]=deg(f)[N]$, and for the Chern numbers we have $(f^*c_{i_1}\cup f^*c_{i_2}\cdots \cup f^*c_{i_k})\cap[M]=deg(f)(c_{i_1}\cup c_{i_2}\cdots \cup c_{i_k})\cap[N]$.

In the non--compact case one can compute the degree of a proper map using compactly supported forms \cite{BottTu}.
This is needed if one thinks about maps to $\R^n\setminus \{O\}$. Any map from a compact manifold to $\R^n\setminus \{O\}$ is proper and one can compute the degree by pulling back a local generator which is given by a bump function. 
Hopf's Theorem \cite{MilnorBook} states that the mapping degree of a map $f:M\to S^n$ is a complete invariant for the homotopy groups $[M,S^n]$.
In particular, the mapping degree of the map $f_d:z\to z^d:S^1\to S^1\subset \C$, or $\theta \mapsto d\theta$ in polar coordinates, is of degree $d$ and hence any map $f:S^1\to S^1$ is homotopic to one of the maps $f_d$.

Concretely for the spaces $T^2$ and $S^2$, we can  raise the Chern number by using the following maps: $f_{d_1,d_2}=f_{d_1}\times f_{d_2}:T^2\to T^2, (\theta_1,\theta_2)\mapsto (d_1\theta_1, d_2\theta_2)$ whose degree is $d_1d_2$. This follows from the fact that the map $f_d$ has degree $d$.
Likewise, a degree $d$ map on $S^2$ is induced by the $f_d$ map via suspension $Sf:SS^1=S^2\to S S^1=S^2$.
Recall that the suspension of a space is given by  $SX= (X\times [0,1])/(X\times \{0\}){\big /}(X\times \{1\})$, viz.\ identifying $X\times \{0\}$ to one point and $X\times \{1\}$ to another. The suspension of a map is given by  $f([x,t])=[f(x),t]$.
In particular,  in spherical coordinates, $Sz^d$ is the map $(\phi,\theta)\mapsto (d\phi,\theta):SS^1=S^2\to SS^1=S^2$.

\subsubsection{Calculating degrees}\label{subsection:calcdegree}
There are several ways to compute the mapping degree. 
For compact $n$ manifolds and $f:M^n\to N^n$, one can pull back any $n$--form $\omega$, then $deg(f)=\frac{\int_M f^*\omega}{\int_N\omega}$. A standard choice for $\omega$ is the volume form.

In particular, if $N=S^2$ and $\omega=xdy\wedge dz-y dx\wedge dz+z dy\wedge dz$ is the standard volume form,
then $\int_{S^2} \omega=4\pi$. If $f$ in a coordinate patch of $N$ is given by
 $f(k_x,k_y)=(f_x,f_y,f_z):T^2\to S^2\subset \R^3$, then $f^*{\omega}=f\cdot(\frac{\del}{\del k_x}f\times \frac{\del}{\del k_y}f)$. Given any map $h:N^2\to \R^3\setminus\{O\}$, we can normalize it to $f=\frac{h}{||h||}$ and then $deg(h)=deg(f)$. 
If for instance $M=T^2$ with coordinates $k_x,k_y$ and
 $h:T^2\to \R^3\setminus \{O\}$, then we obtain:

\begin{equation}
\label{eq:omegadeg}
deg(h)=deg(f)=\frac{1}{4 \pi} \int_{\mathrm{T^2}}  f\cdot\left(\partial_{k_x}f \times \partial_{k_y} f\right) dk_x\, dk_y
\end{equation}

Another standard theory for computing the mapping degree is as follows, see e.g.\ \cite[p 40ff]{BottTu}, and \cite{MilnorBook}.
By Sard's Theorem, the set of critical values of a smooth map
$f : M \to N$ between two manifolds has measure zero.
Picking a regular point $q$ and a preimage $p$, the map $f$ is a local diffeomorphism $\phi_p$. Let $sgn(\phi_p)=\pm 1$ be the sign of the determinant of the Jacobian at $p$. I.e.\ $1$ if $\phi_p$ preserves orientation and $-1$ if it reverses orientation. The mapping degree is then given 
by 
\begin{equation}
\label{eq:degree}
    \sum_{p\in f^{-1}(q)}sgn(\phi_p)
\end{equation}
Note in the non--compact case $f$ needs to be a proper map.

Using this method for a map $f:B^n\to \R^{n+1}\setminus \{O\}$ considered in the homotopy class $\tilde f\in [B^n,S^n]$, one has to identify the $p\in \tilde f^{-1}$ at a regular point and the local orientation at these points. Being a local diffeomorphism, $f$ is an immersion and by Sard's theorem one can isotope the map to have transversal intersections on a set of measure $0$. This means that we have an immersion. After this preparation, one can use the ray method to compute the degree.
For this consider the image $Im(f)\subset \R^3\setminus \{O\}$ and a ray $R=(0,\infty)\mathbf{v}, \mathbf{v}\in S^2$ which intersects $Im(f)$ transversely outside the locus of self--intersections.  Then there is a bijection of points $p\in {\tilde f}^{-1}(\mathbf{v})\leftrightarrow R\cap Im(f)$. The local orientation can be found by computing the orientation of the frame $f_*(e_i),\mathbf{v}$ in $\R^{n+1}$, where $e_i$ is a basis of $TB_p$.  

Note that in physics literature the points $p\in {\tilde f}^{-1}(\mathbf{v})\leftrightarrow R\cap Im(f)$, especially for the choice of the positive z--axis as the ray are sometimes called pre-Dirac points. For maps $B^n\to \R^{n+1}$ Dirac/Weyl points are the points in $B^n$ that are preimages of the origin.
Pre-Dirac points will already have two coordinates zero, whence the name. A more robust definition would be, points near the origin that in the limit of the phase transition have the origin as their limit.

Consider $g:B^n\to R^{n+1}\setminus \{O\}\sim S^n$ ---here and in the following $\sim$ means homotopic--- and let $f=Sg, g:SB'\to S\R^{n}\setminus \{O\}\sim S^{n+1}$, then by the formula \eqref{eq:degree} $deg(f)=deg(g)$. Again, by  Hopf's Theorem, we always have that $f\sim Bg$ for some $g$ as the mapping degree is the only homotopy invariant.
Specializing to a map $g:S^1\to \R^2\setminus \{O\}$, we can see 
that $deg(g)=wind_0(g)$ is the classical winding number around $0$ of $g$  
viewed as a closed curve. If $g$ is in transverse position, this 
can be computed by the ray method with $\mathbf{v}\in S^1$. We 
then get for $f\sim Sg:SS^1=S^2\to S(\R^2\setminus \{O\})\sim  SS^1=S^2$, $deg(f)=deg(g)$.

In particular, if $\R^2\cap im(f)=\bigcup_i c_i$ and let the 
orientation of $c_i$, be such that the orientation $\mathbf{t}$ 
of $(\mathbf{t}, \mathbf{v})$ is that of $(f_*(e_1),f_*(e_2),
\mathbf {v})$ for at most transversely intersecting closed 
curves, then $deg(f)=\sum_i wind_0(\pm c_i)$. By the previous 
arguments 
a) the winding number of the multicurve is the degree of the map $f$ 
by the ray method in $\R^3\setminus\{O\}$ for a $\mathbf{v}$ 
in the equator of $S^2$ and b) any map $f\simeq bg$ by  Hopf's Theorem and the degree of $g$ is the winding number of the equator. To move from a) to b) one can use a homotopy on the map $f$. This consolidates the multicurve into one curve. The difference between a) and b) is that in b) one automatically has the right orientation, while in a) one has to first compute it.
Note that if one does not compute the orientation, then the method in b) only gives an integer mod 2, as one does not know the signs of the local contributions, cf.\ \cite{MilnorBook}.
The contribution in the jump of the Chern number is given by the local charge as defined in \cite[Section 1.5]{KKWK20}.
In particular, for a two--band crossing the local charges are $\pm 1$, see \cite[Corollary 2.4 ]{KKWK20}.

This analysis shows that 
\begin{enumerate}
    \item Chern classes jump only when the degenerate locus is crossed. This happens when one closes a gap, and that is the description of a quantum phase transition.
    \item Generically the Chern numbers will jump by $1$ for each  gap closing at a non--higher--multiplicity point.
    \item  They can jump by multiples if either a non--transversal intersection point passes through the degenerate locus or several bands cross at once.
\end{enumerate}

\subsubsection{Momentum space and Families of Hamiltonians} \label{subsec:momentumspace}

Consider a family of Hamiltonians depending on parameters in a base space $B$, that is $H:B\to Herm_d$. Consider the trivial bundles $E=B\times \C^d\to B$. This has a fiberwise action $E_b\to E_b:(b,v)\mapsto (b,H(b)v)$ and
decomposes $E=\bigoplus_{i=1}^lV_i$ into block vector bundles $V_i$ with $\sum_i rk(V_i)=d$, where the Eigenvalues in 
$V_i$ and $V_j$ never cross if $i\neq j$ and do cross inside the $V_i$.
Let $B_0$ be the subspace on which the family is non--degenerate. Over $B_0$ the bundle splits into line bundles $V=\bigoplus_{i=1}^d L_i$. As the total bundle is trivial, $\sum_i c_1(L_i)=0$, see \cite{KKWK16} for more details.

On $B_0$, the topological invariants are then the Chern classes $c_1(L_i)\in H^2(B_0,\Z)$. It is often assumed that $B_0$ is $B$, but we are also interested in the case that there are
defects, as we wish to study phase diagrams.
To compute the Chern number, one can utilize  the adiabatic connection following Berry \cite{Berry1984}, but can alternatively calculate with any connections, see e.g.\ \cite{Simon1983}.
For a 2D connected compact component $B_0^i$ of $B_0$,
for the $n$th band  $C_n^i=\frac{1}{2 \pi} \int_{B_0^i}  F_{12}(k)dS$. Where the curvature, aka.\ field strength tensor for the connection $A$ is given by $F_{12}(k)=\frac{\partial}{\partial k_1} A_2(k)-\frac{\partial}{\partial k_2} A_1(k)$. Choosing a Hermitian metric, $A_\mu(k)=-i\left\langle n_{\mathbf{k}}\right| \frac{\partial}{\partial k_\mu}\left|n_{\mathbf{k}}\right\rangle$, where, $\left|n_{\mathbf{k}}\right\rangle$ is a normalized local section, aka.\ normalized wavefunction of the $n$th band. Note, such a choice is possible and the value of the Chern number does not depend on the choice.
For the non--compact components, we consider, as usual, their one--point compactifications.

If we have a map $f^*:B'\to B$ and a family of Hamiltonians $B\to Herm_d$, we have the pull back family of Hamiltonians $f^*H=Hf:B'\to Herm_d$ given by $b\mapsto H(f(b))$ 
and the bundles defined by this family are the pull backs under $f$ of the bundles of the original family. E.g.\ 
if $V=\bigoplus_{i=1}^d L_i=B\times \C^d$, 
then $B'\times \C^d=\bigoplus_{i=1}^d f^*(L_i)$ with first Chern classes $f^*c_1(L_i)$ and Chern numbers $deg(f)c_1(L_i)$.

    Note that any family is homotopic to a traceless family, and for non--degenerate families the homotopy can be chosen to stay in the space of non--degenerate families.
    The homotopy is given by $(1-t)H-t\frac{1}{d}(H-tr(H)I)$.
    The scaling has the effect of shifting all Eigenvalues by the same amount at the same time, so there are no new crossings.
    Thus, we can assume that the families are traceless, when considering homotopy invariants.

To obtain phase transitions which are detected by changes in Chern numbers, by varying parameters, one has to pass through families that have  degenerate locus $B_{deg}=B\setminus B_0$. We will assume that this is not wild ---the technical details are below--- and, for convenience, that $B_0$ has finitely many components.

\subsection{Standard examples with higher Chern numbers and universality}

\subsubsection{Spin Hamiltonians}
\label{par:spin}
The standard example of a Hamiltonian family is given by the family $H(\mathbf{k})=\mathbf{k}\cdot \boldsymbol{\sigma}$, where $\boldsymbol{\sigma}=(\sigma_x,\sigma_y,\sigma_z)$ is a spin $s\in \Z[\frac{1}{2}]$ which is of dimension $d=2s+1$.
This is a family on $\R^3$ which is only degenerate at the origin. Restricting it to $S^2\subset \R^3$, the trivial bundle $S^2\times \C^d\to S^2$ splits as a direct sum of line bundles $L_i, i=-s,\dots, s$ corresponding to the Eigenvalues $-s,\dots, s$, whose Chern numbers are $\int_{S^2}c_1(L_i)=2i$, \cite{Berry1984,Simon1983}.
The ground state bundle for this Hamiltonian has $c_1=-2s$.
Flipping $H$ to $-H$ the ground state bundle has $c_1=2s$.
For example, for $s=\frac{1}{2}$, we obtain two bundles with Chern numbers $\pm 1$.
The bundle with Chern number $1$ is classically known as the Hopf bundle. This is the associated vector bundle to the Hopf fibration $S^1\to S^3\to S^2$, aka.\ the Bloch sphere obtained by quotienting out a global phase from states in $\mathbb{C}^2$.

Thus there is a line bundle over $S^2$ with any given Chern number that comes from a Hamiltonian family. We can pull back this bundle to $T^2$ under the standard  degree--$1$ map which contracts the one--skeleton $S^1\vee S^1$ to a point without altering the Chern number. More generally, if $\BZ$ is an oriented 2D connected CW complex with one 2-cell, collapsing its 1-skeleton gives a degree--$1$ map $\pi:\BZ\to S^2$, along which we can pull back. This yields a line bundle over any such $\BZ$ with a specified Chern class.

Note that taking tensor products of these bundles is equivalent to the Clebsch--Gordon rules and thus taking tensor products and splitting off bundles via projection also yields these higher Chern number bundles. E.g.\  $Spin\frac{1}{2}\otimes Spin\frac{1}{2}=Spin \,1\oplus Spin \,0$ where the first is a rank 3 bundle that splits into line bundles with Chern numbers $-2,0,2$ over $S^2$ and the second summand is a trivial line bundle. Namely, if $L$ is the bundle with $c_1(L)=1$ then $V=L\oplus \bar L$, $V\ot V=(L\ot L)\oplus (L\ot \bar L)\oplus (\bar L\ot L)\oplus (\bar L\ot \bar L)$ where the summands have first Chern numbers $1+1=2,1-1=0,-1+1=0,-1-1=-2$.

\subsubsection{Universality}
More generally, we can pull back the standard spin $s$ representations as families $\R^3\to Herm_d$, $d=2s+1$ via maps 
$f:B\to \R^3$. For traceless $2\times 2$, $d=2,s=\frac{1}{2}$ families the spin family is universal.

\begin{theorem}\label{theorem:ksigma}
For any base space $B$, any given family of $2\times 2$ Hermitian Hamiltonians $B\to Herm_2$ is homotopic to the standard family $H(\mathbf{k})=\mathbf{k}\cdot \boldsymbol{\sigma}$, where $\boldsymbol{\sigma}=(\s_x,\s_y,\s_z)$ is given by the Pauli--matrices. And, moreover, non--degenerate families are pulled backs via maps $f:B\to \R^3\setminus \{O\}$. 
\end{theorem}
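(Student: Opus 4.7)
The plan is to exploit the fact that $\{I,\sigma_x,\sigma_y,\sigma_z\}$ is an $\mathbb{R}$-basis of $\mathrm{Herm}_2$, so a family of $2\times 2$ Hermitian Hamiltonians is nothing more than a map to $\mathbb{R}^4$, and the standard spin--$\tfrac12$ family is essentially the identity on the traceless part.

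First I would write, for each $b\in B$, the unique decomposition
\begin{equation}
H(b)=a_0(b)\,I+\mathbf{a}(b)\cdot\boldsymbol{\sigma},\qquad a_0(b)=\tfrac12\mathrm{tr}(H(b))\in\mathbb{R},\ \mathbf{a}(b)\in\mathbb{R}^3,
\end{equation}
noting that $a_0$ and the components of $\mathbf{a}$ depend continuously on $H(b)$ because the decomposition is a fixed $\mathbb{R}$-linear isomorphism $\mathrm{Herm}_2\cong\mathbb{R}^4$ (recovered by tracing against $I/2$ and $\sigma_j/2$). This already realizes the family as a continuous map $B\to\mathbb{R}^4$, or equivalently as the pull-back of the universal family $\tilde H:(a_0,\mathbf{k})\mapsto a_0 I+\mathbf{k}\cdot\boldsymbol{\sigma}$ along $(a_0,\mathbf{a}):B\to\mathbb{R}^4$.

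Next I would invoke the trace-killing homotopy already recorded in \S\ref{subsec:momentumspace}, namely $H_t=H-t\,a_0\,I$, which at $t=0$ gives $H$ and at $t=1$ gives $\mathbf{a}\cdot\boldsymbol{\sigma}$. Because this homotopy only shifts both eigenvalues by the same real scalar $ta_0(b)$, it preserves the gap, so a non-degenerate family stays non-degenerate throughout. Combined with the first step, this shows every family is homotopic to one of the form $b\mapsto \mathbf{a}(b)\cdot\boldsymbol{\sigma}$, i.e.\ to the pull-back of the standard family $H_{\mathrm{std}}(\mathbf{k})=\mathbf{k}\cdot\boldsymbol{\sigma}$ along the continuous map $f=\mathbf{a}:B\to\mathbb{R}^3$.

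For the second assertion, I would use that $(\mathbf{a}\cdot\boldsymbol{\sigma})^2=|\mathbf{a}|^2 I$, so the eigenvalues of $\mathbf{a}(b)\cdot\boldsymbol{\sigma}$ are $\pm|\mathbf{a}(b)|$. Non-degeneracy at $b$ is therefore equivalent to $\mathbf{a}(b)\neq 0$, so a non-degenerate family factors through $f:B\to\mathbb{R}^3\setminus\{O\}$, as claimed. There is no real obstacle in this proof, since the whole argument is linear algebra plus the already-stated trace homotopy; the only place that needs a moment of care is verifying that the trace homotopy can be applied without leaving the non-degenerate locus, which follows from the uniform eigenvalue shift.
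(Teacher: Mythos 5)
Your proposal is correct and follows essentially the same route as the paper: expand in the Pauli basis, contract the identity coefficient by a straight-line homotopy (your $H-ta_0I$ is exactly the paper's $(1-t)f_0$), and identify degeneracy with $\mathbf{a}(b)=0$ via the eigenvalues $\pm|\mathbf{a}(b)|$. Your added remark that the uniform eigenvalue shift keeps the homotopy inside the non-degenerate locus is a point the paper records separately in \S\ref{subsec:momentumspace} rather than in the proof itself.
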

\begin{proof}
The identity matrix $\sigma_0=I$ and the Pauli matrices 
$\sigma_1,\sigma_2,\sigma_3$ are a basis of the Hermitian 
$2\times 2$ matrices and the latter are a basis for 
the traceless Hermitian matrices. Using this basis, a 
map $H:B\to Herm_2$ can be written as $\sum_{i=0}^3 
f_i\sigma_i$. The straight line homotopy $(1-t)f_0$ 
to $0$ makes the map homotopic to $\sum_{i=1}
^3f_i\sigma_i$. The map $B\to \R^3$ is then given  by 
$b\mapsto (f_1(b),f_2(b),f_3(b))$. The degenerate points in such a family are the inverse images of the origin $O$, as $\lambda_1=\lambda_2=-\lambda_1$ means that $\lambda_1=\lambda_2=0$.
\end{proof}

\begin{corollary}
    The Chern numbers of the line bundles corresponding to non--degenerate $2\times 2$ family of Hamiltonians on a base space $B$ can be computed by pull back. In particular, if the base space is $2$--dimensional, then $c_1$ can be computed as the mapping degree of the function $f$ from the theorem above.
\end{corollary}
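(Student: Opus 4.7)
The plan is to combine Theorem \ref{theorem:ksigma} with the naturality of Chern classes and the degree formula from \S\ref{subsection:calcdegree}. By the theorem, any non--degenerate $2\times 2$ family $H:B\to Herm_2$ is homotopic to $f\cdot\boldsymbol{\sigma}$ for some $f:B\to\R^3\setminus\{O\}$. Since the line bundles $L_\pm$ attached to the Eigenvalues of the universal spin--$\tfrac12$ family over $\R^3\setminus\{O\}$ are well defined (non--degeneracy holds off the origin), the bundles on $B$ are precisely the pull--backs $L_\pm^B=f^*L_\pm$. By naturality of characteristic classes this gives
\begin{equation*}
c_1(L_\pm^B)=f^*c_1(L_\pm)\in H^2(B,\Z),
\end{equation*}
which is the first assertion of the corollary.

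For the second assertion, I would use the standard deformation retract $r:\R^3\setminus\{O\}\to S^2$, $r(x)=x/\|x\|$, which is a homotopy equivalence. Hence $f\simeq \tilde f:=r\circ f:B\to S^2$ in $\R^3\setminus\{O\}$, and $c_1(L_\pm^B)=\tilde f^*c_1(L_\pm|_{S^2})$. From \S\ref{par:spin}, the ground state bundle for the spin--$\tfrac12$ family over $S^2$ has $c_1(L_-|_{S^2})=-1$ (and $c_1(L_+|_{S^2})=+1$), both generating $H^2(S^2,\Z)\simeq\Z$.

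Now assume $B$ is a 2--dimensional compact oriented manifold. Capping with the fundamental class and using the pull--back identity for mapping degree recalled after equation \eqref{eq:degree} yields
\begin{equation*}
c_1(L_\pm^B)\cap[B]=\bigl(\tilde f^*c_1(L_\pm|_{S^2})\bigr)\cap[B]=\deg(\tilde f)\,\bigl(c_1(L_\pm|_{S^2})\cap[S^2]\bigr)=\pm\deg(\tilde f),
\end{equation*}
so $c_1$ is computed by the mapping degree of $f$ (equivalently of $\tilde f$, since the two are homotopic as maps into $\R^3\setminus\{O\}$, and degree is a homotopy invariant). Concretely, one may evaluate $\deg(\tilde f)$ by the integral formula \eqref{eq:omegadeg} applied to $\tilde f=f/\|f\|$, or by the ray method of \S\ref{subsection:calcdegree}.

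There is no serious obstacle: the only mild point is to justify replacing $f:B\to\R^3\setminus\{O\}$ by $\tilde f:B\to S^2$ when speaking of a degree, and this is handled by the deformation retraction $r$. The non--compact case is absorbed into the statement via the usual convention of one--point compactification from \S\ref{subsec:momentumspace}, or by computing the degree of a proper map with compactly supported forms as discussed there.
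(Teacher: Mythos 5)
Your argument is correct and matches the paper's (implicit) reasoning: the paper leaves the corollary as an immediate consequence of Theorem \ref{theorem:ksigma} together with naturality of $c_1$ under pull--back, and supplies exactly the normalization homotopy $(1-t)f+tf/\|f\|$ (spectral flattening) that you use to replace $f:B\to\R^3\setminus\{O\}$ by $\tilde f:B\to S^2$ before invoking the degree formula. No substantive difference in approach.
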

For the mapping degree we note that  $\R^3\setminus \{O\}$ is homotopic to  $[B,\R^3\setminus \{O\}]\simeq [B,S^2]$.
The map is induced by the  homotopy $(1-t)f+tf/||f||$. The effect of this homotopy on $H$ is spectral flattening, which means that the family of Hamiltonians is 
$(1-t)H+tH/|\lambda|$, where $\lambda,-\lambda$ are the Eigenvalues of $H$. The resulting Eigenvalues after the homotopy that is at $t=1$ are then just $\pm 1$.
\begin{corollary}
    For any bundle $V$ over a compact connected 2--manifold $\BZ$ that splits as a sum of line bundles $V\simeq L_1\oplus \dots \oplus L_n$
    with first Chern classes $c^1_1,\dots, c_1^n$, there are functions $f_i:\BZ\to S^2$ with $deg(f_i)=c_1^i$ such that  $V\simeq\bigoplus_i f_i^*(H)$, where $H$ is the Hopf bundle, that is the $-H_{spin\frac{1}{2}}$ ground state bundle. 
    Alternatively, using the degree $1$ map $\pi:\BZ\to S^2$, $L_i=\pi^*(\tilde L_i)$
    where $\tilde L_i=g_{c_1^i}^*H$.
    This can be identified with the highest state bundle of the spin $s=\frac{1}{2}c_1^i$ system.
\end{corollary}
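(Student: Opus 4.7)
The plan is to combine two standard classification results: line bundles on a compact oriented $2$--manifold $\BZ$ are classified up to isomorphism by $c_1 \in H^2(\BZ,\Z)\cong\Z$ (via $BU(1)=K(\Z,2)$), and Hopf's Theorem classifies $[\BZ,S^2]$ by mapping degree. Together with naturality of Chern classes under pull--back, these convert the prescribed numerical data $c_1^i$ into actual maps whose pull--backs of the Hopf bundle recover each $L_i$ up to isomorphism.

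Working one line bundle at a time, for each prescribed integer $c_1^i$ Hopf's Theorem furnishes $f_i:\BZ\to S^2$ with $\deg(f_i)=c_1^i$, unique up to homotopy. Naturality gives
\[
c_1(f_i^*H)\;=\;f_i^*c_1(H)\;=\;\deg(f_i)\;=\;c_1^i\;=\;c_1(L_i),
\]
so by the line--bundle classification $L_i\cong f_i^*H$. Summing over $i$ yields $V\cong\bigoplus_i f_i^*H$, which is the first assertion.

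For the factored form, use the degree--$1$ collapse $\pi:\BZ\to S^2$ constructed in \S\ref{par:spin} and the self--maps $g_{c_1^i}:S^2\to S^2$ of degree $c_1^i$ obtained as suspensions of $z\mapsto z^{c_1^i}$. Set $\tilde L_i:=g_{c_1^i}^*H$; multiplicativity of degree under composition gives $\deg(g_{c_1^i}\circ\pi)=c_1^i$, so $c_1(\pi^*\tilde L_i)=c_1^i$, and therefore $\pi^*\tilde L_i\cong L_i$ by the classification. The spin identification then follows from the computation recalled in \S\ref{par:spin}: the highest band of the spin--$s$ family $\mathbf{k}\cdot\boldsymbol{\sigma}$ on $S^2$ has Chern number $2s$, so choosing $s=\tfrac{1}{2}c_1^i$ produces a line bundle on $S^2$ with first Chern class $c_1^i$, hence isomorphic to $\tilde L_i$.

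I expect the genuine content to sit in invoking the classification theorem for complex line bundles; the remainder is bookkeeping with naturality and mapping degrees. The only hypotheses that need care are orientability of $\BZ$ (to fix signs in the identification $H^2(\BZ,\Z)\cong\Z$) and the existence of a CW structure with a single top $2$--cell (to build $\pi$); both are standard in the Brillouin--zone setting of interest, so I do not anticipate a serious obstacle.
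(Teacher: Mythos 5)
Your proposal is correct and matches the paper's intended argument: the paper states this corollary without a separate proof, relying on exactly the ingredients you assemble — Hopf's Theorem classifying $[\BZ,S^2]$ by degree, the classification of line bundles by $c_1$, naturality of Chern classes under pull--back, the degree--$1$ collapse map $\pi$, and the spin--$s$ Chern number computation $c_1 = 2s$ from \S\ref{par:spin}. Your explicit caveat about orientability is consistent with the paper's standing assumption that $\BZ$ is a compact orientable surface.
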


This in particular also allows to reconstruct bundles stemming from given families of Hamiltonians using pull--backs of pairs of line  bundles stemming from $spin\frac{1}{2}$. 
As the total bundle is trivial in this case, the last line bundle $L_d$ is equivalent to the quotient of the trivial $d$--dimensional bundle $\tau_d$ by $L_1\oplus\dots\oplus L_{d-1}$ which has first Chern number $-(c^1_1+\dots+ c_1^{d-1})$.

\begin{lemma}
    Let $H\ot \bar H$ be the  bundles of the $spin \frac{1}{2}$ system, a fixed $n$ and
 a collection of Chern numbers $c_1^i,i=1,\dots n$ with $\sum_i c_1^i=0$.
Then 
\begin{equation}
    H^{\ot c^1_1}\oplus\dots \oplus H^{\ot c^{n-1}_1}
\oplus( \bar H^{\ot c^1_1}\otimes\dots \otimes H^{\ot c^{n-1}_1})
\end{equation}
is the trivial rank--$n$ vector bundle over $S^2$ which splits as line bundles with the given Chern classes.
\end{lemma}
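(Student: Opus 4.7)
The plan is straightforward: compute the first Chern class of every summand using additivity of $c_1$ under tensor products of line bundles, observe that the listed bundle manifestly splits as a sum of line bundles, and then invoke the classification of complex vector bundles on $S^2$ by $c_1$ to conclude that the rank-$n$ total bundle with vanishing first Chern class is trivial.

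For the first step, I would use $c_1(L \ot L') = c_1(L) + c_1(L')$ as recalled earlier in the excerpt, together with $c_1(H) = 1$ and $c_1(\bar H) = -c_1(H) = -1$ (the latter following from $H \ot \bar H$ being trivial, equivalently from $ch$ being a ring homomorphism). By induction on the number of factors, $c_1(H^{\ot m}) = m$ and $c_1(\bar H^{\ot m}) = -m$ for every nonnegative integer $m$. Consequently the $i$th summand $L_i := H^{\ot c_1^i}$ has $c_1(L_i) = c_1^i$ for $i = 1, \dots, n-1$, and the final summand, read as $\bar H^{\ot c_1^1} \ot \cdots \ot \bar H^{\ot c_1^{n-1}}$ (the apparent typographical slip in the displayed formula notwithstanding), has first Chern class $-\sum_{i=1}^{n-1} c_1^i$, which equals $c_1^n$ by the standing hypothesis $\sum_i c_1^i = 0$.

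For the second step, each $L_i$ is a line bundle since a tensor product of line bundles is a line bundle, so the total bundle $V = L_1 \oplus \cdots \oplus L_n$ is a rank-$n$ bundle that already displays the required splitting. For triviality, the Whitney sum formula gives $c_1(V) = \sum_i c_1(L_i) = \sum_i c_1^i = 0$. Complex rank-$n$ bundles on $S^2$ are classified by their first Chern class: the clutching construction identifies isomorphism classes with $\pi_1(U(n)) \simeq \Z$, and the isomorphism is realized by $c_1$. Hence $V \cong S^2 \times \C^n$ is trivial. The only mildly non-routine input is this last classification result on $S^2$; the rest is bookkeeping with multiplicativity of $c_t$, and there is no substantive obstacle.
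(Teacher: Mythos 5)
Your proof is correct, and it reaches triviality by a different route than the paper. The paper's own argument is a one--line cancellation: it regroups the summands so that the conjugate pairs appear explicitly and then quotes the single geometric fact that $H\oplus \bar H$ is the trivial rank--$2$ bundle (the spin--$\tfrac12$ Bloch bundle $S^2\times\C^2$ split into its eigenbundles); iterating the resulting identity $L\oplus L'\cong (L\ot L')\oplus \underline{\C}$ collapses the whole sum to $\underline{\C}^n$ once $\sum_i c_1^i=0$. You instead compute $c_1$ of each summand by additivity under tensor product and then invoke the classification of rank--$n$ complex bundles over $S^2$ by clutching functions, $\pi_1(U(n))\simeq\Z$ detected by $c_1$. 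The two arguments need essentially equivalent topological input --- the paper's iterated cancellation is itself only valid over a $2$--complex, where one is in the stable range and $c_1$ is a complete invariant --- but your version makes that input explicit, whereas the paper's is telegraphic to the point of being hard to parse as written. Two small merits of your write--up worth keeping: you correctly read the last summand as $\bar H^{\ot c_1^1}\ot\cdots\ot\bar H^{\ot c_1^{n-1}}$ (the displayed formula in the statement has a typographical slip in its final factor) and identify its Chern number with $c_1^n$ via the hypothesis $\sum_i c_1^i=0$; just make sure the convention $H^{\ot(-m)}:=\bar H^{\ot m}$ is stated, since the $c_1^i$ may be negative.
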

 \begin{proof}
 This follows from the fact that the bundle is $\bigoplus_{ij} H^{\ot c^1_i}\ot \bar H^{\ot c_j^1}$ and $H\oplus \bar H$ is trivial.
\end{proof}
 
\begin{theorem}
    A given non--degenerate $n$--band  
    structure $V=L_1\oplus\dots \oplus L_n\to \BZ$,
    where  $L_1$ are line bundles with Chern 
    numbers $c_1^i$ and  $V$ is trivial can be 
    realized via $spin \frac{1}{2}$ bundles using the above construction.
\qed
\end{theorem}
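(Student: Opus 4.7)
The plan is to reduce to the model situation on $S^2$ handled by the preceding lemma, then transfer the resulting bundle back to $\BZ$ via pull--back along the standard degree--$1$ collapse map $\pi:\BZ\to S^2$ introduced earlier in the spin--Hamiltonian discussion. The key observation is that everything the lemma constructs is natural under pull--back and preserves direct sum decompositions, so only a matching of line--bundle summands remains.

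First I would note that the triviality of $V$ forces $\sum_{i=1}^n c_1^i=0$, since the first Chern class is additive on direct sums and vanishes on the trivial bundle. This is exactly the hypothesis of the preceding lemma, which produces on $S^2$ a trivial rank--$n$ bundle $W$ splitting as a sum of tensor powers of the Hopf bundle $H$ and its conjugate $\bar H$, with first Chern numbers $c_1^1,\dots,c_1^n$ in the prescribed order. Setting $V':=\pi^*W$, naturality of Chern classes together with $\deg(\pi)=1$ gives a trivial rank--$n$ bundle on $\BZ$ that splits as a direct sum of line bundles with first Chern numbers $c_1^1,\dots,c_1^n$.

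The remaining step is to identify $V'$ with $V$ summand by summand. For this I would invoke the classification of complex line bundles on a closed oriented $2$--manifold (or, more generally, on the oriented connected $2$--complexes with one top cell considered in the paper) by their first Chern class in $H^2(\BZ,\Z)\simeq \Z$. This classification is essentially the only nontrivial input in the argument and follows from $BU(1)=K(\Z,2)$; with it in hand, each summand of $V'$ is isomorphic to the corresponding $L_i$, so $V'\simeq V$ as a direct sum of line bundles. The main potential obstacle is therefore not topological but rather making sure the summand--by--summand identification is canonical enough to preserve the given splitting, which is automatic on a $2$--complex where the isomorphism class of a line bundle is determined by a single integer.

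Finally, Theorem \ref{theorem:ksigma} realizes each Hopf--power line bundle $H^{\otimes m}$ (or $\bar H^{\otimes |m|}$) as a ground-- or excited--state bundle of a $\mathbf{k}\cdot\boldsymbol{\sigma}$--type family on $S^2$, and the pull--back along $\pi$ realizes its analogue on $\BZ$ as a pull--back of the same family. This is precisely the sense in which $V$ is realized via $spin\frac{1}{2}$ bundles, completing the construction.
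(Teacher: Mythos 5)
Your proposal is correct and follows essentially the same route the paper intends: the theorem is stated with an immediate \qed because it is meant to follow directly from the preceding lemma (the trivial rank--$n$ model on $S^2$ built from powers of the Hopf bundle) combined with the corollary's pull--back along the degree--$1$ collapse map $\pi:\BZ\to S^2$ and the classification of line bundles over a $2$--complex by their first Chern class. You have simply made explicit the steps the paper leaves implicit (that triviality of $V$ forces $\sum_i c_1^i=0$, and that the summand--by--summand identification is via $H^2(\BZ,\Z)\simeq\Z$), which is a faithful filling-in rather than a different argument.
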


This is commensurate with the analysis for $su(3)$ Hamiltonians yielding  three-band models  in \cite{Lee2015}.


\subsection{Tight-binding Lattice Hamiltonians}
One standard setup is to consider a periodic Hamiltonian $H(x)=H(x+l)$ where $l\in \Lambda\subset \R^n$ a translational lattice group $\Lambda \simeq \Z$, cf.\ e.g.\ \cite{KKWK20}; we will call this the mathematical lattice. Using Fourier transform in momentum space, this give a family of Hamiltonians $H(k)$ $k\in \R^n/\Z^n=T^n=(S^1)^{\times n}$. If the space is discretized into a site lattice $\Gamma$, we will call this the physical lattice, which is  translationally invariant $l\Gamma=\Gamma$ for all $l\in \Lambda$. $\G$ is actually a graph, which means that it also has edges. If only the vertices are given, then the edges are defined to be between nearest neighbors.  Then the elementary cell is $\bar \Gamma=\Gamma/\Lambda$.  We set $\pi:\G\to \bar\G$ the projection, $\Gamma_v=\pi^{-1}(v)$ and $d=|\bar \Gamma|$ the number of sites in an elementary cell.

A basic example is then given by the standard Harper Hamiltonian. The Hilbert space is $\mathcal{H} =\ell^2(\Gamma)$, were $\ell^2(\Gamma)$ are the $\ell^2$ functions on the vertices and the Hamiltonian is given by translation along the edges. Splitting the Hilbert space 
as $\bigoplus_{v\in \bar \Gamma} \ell^2(\Gamma_v)$, the Hamiltonian becomes a $d\times d$ Hermitian family of matrices on momentum space: $H(k):T^n\to Herm_d$.
Note that as defined this is an operator valued matrix. To have a ``true'' matrix, that is an element in $End(\mathcal{H})$ of some Hilbert space, one needs to fix a spanning tree.
Practically this is done by choosing a root vertex and set $E$ of generating directed edges, then 

\begin{equation}
\label{eq:Harper}
H=\sum_{e\in E} U_e+U^{\dagger}_e
\end{equation}
where $U_e$ are Zak  (magnetic) translation operators, see e.g.\ \cite{Bellissard}.
There are several such choices that are related by re-gauging transformations, see \cite{KKWKsym}. 
This can be seen as a family of Hamiltonians over the Brillouin zone. The corresponding Bloch  bundle is the trivial bundle $E=T^n\times \C^d\stackrel{\pi}{\to} T^n$, where $\pi=\pi_1$ is the projection to the first factor. Being trivial, this carries no information, but on the locus $T^n_{0}\subset T^n$
where the Hamiltonians are non-degenerate, $E$ splits as $\bigoplus_{i=1}^d L_i$ where $L_i$ 
are the Eigenbundles to 
the Eigenvalues $\lambda_i$. Since the Eigenvalues are real and do not cross, we can order them $\lambda_1<\lambda_2\dots <\lambda_d$.
One can add different orbitals for more bands and on--site potentials for additional diagonal elements.

The basic example is the $\Z^2$ lattice whose Hamiltonian $H=U_1+U_1^*+U_2+U_2^*$ where $U_i$ is the translations along $e_i$.

For the hexagonal lattice, with vectors $f_1,f_2,f_3=-f_1-f_2$ moving from the A to the B sites, the Hamiltonian is
\begin{equation}
    H_{hex}=\begin{pmatrix}
        0&U+V+W\\
        U^\dagger+V^\dagger+W^\dagger&0
    \end{pmatrix}
\end{equation}
where $U=T_{e_1},V=T_{e_2},W=T_{e_3}=(UV)^{-1}$ as $f_1+f_2+f_3=0$.
Using Fourier transform, or equivalently characters on the $C^*$ algebra, such a matrix becomes
\begin{equation}
    H_{hex}=\begin{pmatrix}
        0&1+e^{ia}+e^{ib}\\
        1+e^{-ia}+e^{-ib}&0
    \end{pmatrix}
\end{equation}
This family has no $\s_z$ component and 
it is also degenerate at the Dirac points  
$(e^{2\pi i\frac{ 1}{3}} , e^{-2\pi i\frac{ 1}{3}} )$ and $(e^{-2\pi i\frac{ 1}{3}} , e^{2\pi i\frac{ 1}{3}} )$. Therefore, there will be no non--zero Chern class
as discussed in \cite[Section 4.11]{KKWK16}. 
To get a non--trivial Chern number, following Haldane, one can  invoke higher order interactions, see \S\ref{par:haldane}.

As an example of how to construct a model with an arbitrary Chern number, we can implement the strategy of using higher degree maps. In this way, we recover the following example of 
\cite{circuits2023}. 
Starting from the particular tight-binding model with 
\begin{equation}
    h_1(\bk)= \alpha \cos(k_x),  h_2(\bk)=-\beta\cos(k_y),  h_3(\bk)=m_0+\gamma_1 \sin k_x+\gamma_2 \sin k_y
\end{equation}
composing with a higher-degree map  $f_d: S^2 \to S^2$
corresponding to the map $Sz^d$, which using   coordinates $\R^3=\C\times \R$ is given by $(x+iy,z)\to ((x+iy)^d),z)$,
translates to constructing the new Hamiltonian for $d \in \mathbb{Z}^+$
\begin{equation}
\label{eq:tight}
\begin{aligned}
&\tilde h_1(\boldsymbol{k})+i \tilde  h_2(\boldsymbol{k})=\left(\alpha \cos k_x-i \beta \cos k_y\right)^d\\
&\tilde h_3(\boldsymbol{k})=m_0+\gamma_1 \sin k_x+\gamma_2 \sin k_y
\end{aligned}\end{equation} \\
Here, $\alpha, \beta, \gamma_{1,2} \text {, and } m_0 \in \R$ are parameters. 
The resulting Hamiltonian will have  Chern number $d$.


\subsubsection{Higher neighborhood interactions}
\label{par:highernbh}

One way to obtain new terms in the Hamiltonian which correspond to higher degree maps is to consider interactions which are next nearest or further neighbors. For the square lattice example \eqref{eq:tight} this was done in \cite{circuits2023}. 
The breakdown of the Hamiltonian in real space will generally involve terms for all neighbors up to the $d$th neighbors. We will give a more precise analysis in \S\ref{par:commensurate}.

To write down the Hamiltonian, given a lattice $\G\subset \R^n$ with $0\in \Gamma$ one has to find the sublattice $\G^{n}$ of $n$-th nearest neighbors.
If this sublattice is generated by the  vectors $w_1,\dots w_k$, and $\G$ is generated by the edge vectors $v_i$, then one can express the $w_i=\sum_j a_{ji}v_j$. The $n$-th nearest neighbor hopping is then given by 
\begin{equation}\label{equation:neighborhood}
    H^{(n)}=\sum_{i} \prod_j  U_j^{a_{ij}} 
\end{equation}
where the $U_j$ are the original translation operators.

In the square lattice $\Z^2$, for the 2nd nearest neighbors, of which there are four, the lattice vectors are
$w_1=(1,1),w_2=(-1,1), w_3=(-1,-1)$ 
and $w_4=(-1,-1)$. These are at distance $\sqrt{2}$ and the Hamiltonian is  
$H^{(2)}=U_1U_2+U_1U_2^*+U_1^*U_2^*+U_1^*U_2$. Transforming $V_1=U_1U_2$ and $V_2=U_1U_2^*$ we transform $H^{(2)}$ to the form $H$. This is the fact that the sublattice of second nearest neighbors is given by the sublattice $\Z(1,1)+\Z(1,-1)\subset \Z^2$.
There are again four 3rd nearest neighbors $(\pm 2,0),(0,\pm 2)$ at distance $2$ with Hamiltonian $H^{(3)}= U_1^2+U_2^2+U_1^{-2}+U_2^{-2}$.  Rewriting $V_i=U_i^2$,
we see that $H^{(3)}$ transforms into $H$. However, evaluating with a character, we have $V_1=e^{2ia}, V_2=e^{2ib}$, that is we get the pullback under $z\to z^2$.
This expresses the fact that the sublattice of 3rd nearest neighbors is $(2\Z)^2\subset \Z^2$. The full systematic treatment is in \S\ref{subsection:gaussian}.

The second order interactions for the hexagonal lattice are diagonal after gauging $W$ to $1$ and the corresponding Hamiltonian is given by 
\begin{equation}
\label{eq:hex1}
    H_{hex}^{(2)}=\begin{pmatrix}
        U+U^*+V+V^*+UV^*+VU^*&0\\0&
         U+U^*+V+V^*+UV^*+VU^*
    \end{pmatrix}
\end{equation}
Adding first and second order interactions with a coupling,
we obtain
\begin{multline}
\label{eq:hex2}
 H_{hex}+t  H_{hex}^{(2)}=\\  
 \begin{pmatrix}
        t[e^{ia}+e^{-ia}+e^{ib}+e^{-ib}+e^{ia}e^{-ib}+e^{-ia}e^{ib}]& 1+e^{ia}+e^{ib}\\1+e^{-ia}+e^{-ib}&
t[e^{ia}+e^{-ia}+e^{ib}+e^{-ib}+e^{ia}e^{-ib}+e^{-ia}e^{ib}] 
\end{pmatrix}
\end{multline}

\begin{multline}
=\begin{pmatrix}
t(2 \cos(a) +2 \cos(b)+2 \cos(a-b)) & 1 + \cos(a)+\cos(b) +i (\sin(a)+\sin(b))\\
1 + \cos(a)+\cos(b) -i (\sin(a)+\sin(b))& t(2 \cos(a) +2 \cos(b)+2 \cos(a-b))\nonumber
\end{pmatrix}
\end{multline}
\begin{multline} \nonumber
=    (1+\cos(a)+\cos(b))\sigma_x-(\sin(a)+\sin(b))\sigma_y+ t (2 \cos(a)+2\cos(b)+2 \cos(a-b))\sigma_0
\end{multline}
The full systematic treatment  is in \S\ref{par:honeycriteria}.
 
Note that the families \eqref{eq:hex1} and \eqref{eq:hex2} ---even on the non--degenerate locus-- are pull--backs with mapping degree $0$ as there are no $\sigma_z$ terms. 
To alter this, one has to introduce an asymmetry in the coupling, as Haldane did, see \S\ref{par:haldane} for details.

\subsubsection{Super--cells}
\label{par:supercell}
The way to understand how higher order neighborhood interactions generate higher Chern classes is via super--cells.
A typical situation in condensed matter theory is given by  covers coming from a sublattice in a lattice.  
Let $\Lambda'\subset \Lambda$ both be mathematical lattices of rank $n$ ---the 
standard example being $(2\Z)^n\subset \Z^n$. Then the 
quotient map 
\begin{equation}
\label{eq:covers}
    \pi: T^n=\R^n/\Lambda'\to (\R^n/
\Lambda)/(\Lambda/\Lambda') = \R^n/\Lambda'=T^n
\end{equation}
is a degree $|\Lambda/\Lambda'|$ covering.
We can now pull back a family from $T^2$ by $\pi$ to obtain higher Chern classes on $T^2$.
This corresponds to looking at a super--cell which covers the regular cell. In particular, using the volume form $\omega$ induced from the standard volume form on $\R^n$, we can compute  the degree of the map $\pi$ as $deg(\pi)=\frac{Vol(\R^n/\Lambda)}{Vol(\R^n/\Lambda')}=Vol(\Lambda/\Lambda')$.
If one has a physical lattice $\G$, then one has the two quotients $\bar \G=\G/\Lambda$ and $\bar \G'=\G/\Lambda'$ and a covering map $\pi_\G:\bar \G'\to \bar\G$ which is a discrete cover of degree $|\Lambda/\Lambda'|$.

\begin{definition}
We call a sublattice of $k$--th nearest neighbors 
$\G^{k}\subset \G\subset \R^n$ {\em commensurate}, 
if $\G'=l(k)\Gamma$, where $l(k)$ is the distance of the nearest neighbors. 
\end{definition}

In this case the cover is of degree $l(k)^n$ and the coefficients of \eqref{equation:neighborhood} satisfy $a_{ij}=l(k)$.
The following is now straightforward:
\begin{proposition}
For a commensurate $k$--th nearest neighbor lattice,
    the family of Hamiltonians $H^{(k)}:T^n\to Herm_d$
    is the pull back of the original lattice Hamiltonian under the diagonal scaling map $(\theta_1\kdk \theta_n)\mapsto l(k)(\theta_1\kdk \theta_n)$ whose degree is $l(k)^n$, and hence has Chern classes scaled by $l(k)^n$.
    \qed
\end{proposition}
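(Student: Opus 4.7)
The plan is to identify $H^{(k)}$ on the Brillouin zone with the composition $H \circ \phi$, where $\phi : T^n \to T^n$ is the diagonal scaling $(\theta_1, \ldots, \theta_n) \mapsto (l(k)\theta_1, \ldots, l(k)\theta_n)$, and then to invoke the product-of-circle-maps degree formula from \S\ref{par:background} together with the naturality of Chern classes under pullback.

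The first step is to exploit the commensurate hypothesis $\Gamma^k = l(k)\Gamma$ by choosing generators of the $k$-th nearest neighbor sublattice as $w_i = l(k) v_i$, where $v_i$ generate $\Gamma$. The coefficient matrix in \eqref{equation:neighborhood} then becomes $l(k) \cdot I$, so the translation operators entering $H^{(k)}$ are precisely the $l(k)$-th powers $U_i^{l(k)}$, occurring in the same combinatorial pattern as the $U_i$ in the original $H$. Concretely, in the single-site case $d = 1$ this gives $H^{(k)} = \sum_i U_i^{l(k)} + (U_i^{l(k)})^\dagger$ starting from $H = \sum_i U_i + U_i^\dagger$, exactly as in the square lattice third-neighbor example preceding the proposition; the multi-site case \eqref{eq:Harper} works analogously once the spanning tree of the superlattice quotient is taken to be the scaled image of the spanning tree for $\Gamma/\Lambda$.

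The second step is to Fourier-transform via characters on the $C^*$-algebra generated by the $U_j$: under the identification $U_j \leftrightarrow e^{i\theta_j}$ one has $U_j^{l(k)} \leftrightarrow e^{i l(k)\theta_j}$, so the Bloch Hamiltonian of $H^{(k)}$ at $\theta$ coincides with the Bloch Hamiltonian of $H$ at $\phi(\theta)$, giving $H^{(k)} = H \circ \phi$. The map $\phi$ is the $n$-fold product of the degree-$l(k)$ circle map $f_{l(k)}$ from \S\ref{par:background}, hence $\deg(\phi) = l(k)^n$. The conclusion now follows from the pullback formula $c_i(\phi^* L) = \phi^* c_i(L)$: integration over the fundamental class of $T^n$ multiplies Chern numbers by $\deg(\phi) = l(k)^n$.

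The main subtlety I would watch out for is gauge dependence. The operators $U_j$ and the matrix form \eqref{eq:Harper} depend on the choice of spanning tree and on re-gauging, so $H^{(k)}$ and $H \circ \phi$ may agree only after a continuous unitary conjugation in $\theta$. Such a conjugation is a homotopy within the space of non-degenerate families, and since Chern classes depend only on the homotopy class of the family, this does not affect the scaling. The same observation absorbs any discrepancy in the multi-site case when the natural spanning tree of the superlattice quotient differs from a literal rescaling of the original tree.
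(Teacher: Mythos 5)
Your proposal is correct and follows essentially the same route the paper intends: the paper leaves the proposition with a \qed, relying on the preceding square--lattice example (where $H^{(3)}$ with $V_i=U_i^2$ evaluates under characters to the pullback along $z\mapsto z^2$) and on the super--cell covering discussion, which is exactly the argument you spell out via $a_{ij}=l(k)\delta_{ij}$, $U_j^{l(k)}\leftrightarrow e^{il(k)\theta_j}$, $\deg(\phi)=l(k)^n$, and naturality of Chern classes. Your closing remark on gauge/spanning--tree dependence being absorbed by homotopy invariance is a worthwhile explicit addition that the paper only gestures at via its reference to re-gauging transformations.
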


Thus the question of finding natural lattice implementations turns into the question of finding commensurate higher order neighborhoods. For a large family of 2D--lattices this is done in \S\ref{par:commensurate}.

\subsubsection{Band structures for sublattices}
\label{par:bands}
In this section, we analyze  the effect of larger Brillouin zones, e.g.\ those coming from stacking. 
For a sublattice and a lattice, the matrix dimension of the two families of Harper, tight binding Hamiltonians differs by $|\Lambda/\Lambda'|=\frac{d'}{d}$,
where $|\G/\Lambda'|=d'$ and $|\G/\Lambda|=d$. To compare the two, one should re--sum the larger block--Hilbert spaces as
$\mathcal{H}_{\bar \G'}=\bigoplus_{v\in \bar \G'}\ell(\G_v)\simeq
\bigoplus_{w\in \bar \G}(\bigoplus_{v\in \pi_\G^{-1}}\G_{w})$.
This amounts to summing the entries given by the translation operators. 
As an example on $\Lambda=\Z^n$, we have the $1\times 1$ Harper Hamiltonian
$H=\sum_{i=1}^n U_i+U^\dagger_i$.
For the sublattice $\Lambda'=(2\Z)^2\subset \Z^2$ the operator in the Harper Hamiltonian derived from the lattice $\Z^2$ is given by a $4\times 4$ matrix $H_4=\tilde U_1+\tilde U_1^\dagger+\tilde U_2+\tilde U_2^\dagger$, where the operators $\tilde U_1$ and $\tilde U_2$ are
\begin{equation}
\tilde U_1=    \begin{pmatrix}
0&0&0&W_1\\
0&0&W_{3}^\dagger&0\\
0&W_{3}&0&0\\
W_{1}^\dagger&0&0&0\\
    \end{pmatrix}
\quad 
\tilde U_2=\begin{pmatrix}
    0&0&W_{2}&0\\
    0&0&0&W_{4}^\dagger\\
    W_{2}^\dagger&0&0&0\\
    0&W_{4}&0&0\\
\end{pmatrix}
\end{equation}
with the $W_i$ representing the translations along the edges in the new, bigger, unit cell.
 The original bundle embeds under the diagonal map on the fibers 
$\Delta^{[4]}:\C\to \C^4$. 
In particular, $H_4 (1,1,1,1)^T=H_1(1,1,1,1)$  where $H_4$ and $H_1$ are the $4\times 4$ and the $1\times 1$ families, respectively.
Note hat in $C^*$ geometry, see e.g.\ \cite{Bellissard,KKWK}, the entries are from the 4--torus and the corresponding map  $T^2\to T^4$ along which the family restricts to is
given by $W_1,W_3\to U_1$ and $W_2,W_4\to U_2$ on generators.

If one would simply start with the quotient graph $\Z^2/(2\Z)^2$,
then one would have the more general Hamiltonian
\begin{equation}
\hat H_4 =   \begin{pmatrix}
    0&0&V_2+V_6^\dagger&V_1+V_5^\dagger\\
0&0&V_3^\dagger+V_7&V_4^\dagger+V_8\\
V_2^\dagger+V_6&V_3+V_7^\dagger&0&0\\
V_1^\dagger+V_5&V_4+V_8^\dagger&0&0\\
\end{pmatrix}
       \end{equation}
This more general Hamiltonian has more operators and hence parameters.
These can  stand for more sites resulting from stacking or more orbitals.
The specialization equations that turn $\hat H_4$ into $H_1$ are given by $V_1=V_3=V_5=V_7=U_1$ and $V_2=V_4=V_6=V_8=U_2$, which corresponds to the iterated diagonal embedding $T^2\to T^4\to T^8$.

\subsubsection{Band folding}
There is more generally the possibility to push forward a family of Hamiltonians. 
This has the effect of what is known as band--folding.
Given a family of Hamiltonians $H:B'\to Herm_d$ and a covering $\pi:B'=\R^n/\Lambda'\to B=\R^n/\Lambda$ with $\Lambda'\subset \Lambda$,
for $\bv \in \Lambda/\Lambda'$ consider the translation operator $T_{\mathbf{v}}:\R^n/\Lambda'\to \R^n/\Lambda': T_\bv([l])=[l+\bv]$.
Fix one section $s_0$ of $\pi$ and set $s_\bv=T_\bv s_0$, i.e.\ the section obtained by shifting by the action of translation by $\bv$. Now consider
$\phi_*(V)=\bigoplus_{\mathbf{v}\in \Lambda/\Lambda'} s_\mathbf{v}^*V$.
The corresponding push--forward Hamiltonian in momentum space is 
given as follows:
$\pi_*(H)=\bigoplus_{\mathbf{v}\in \Lambda/\Lambda'}T_\bv(H)$, i.e.\ $\pi_*(H)(\bk)= \bigoplus_{\mathbf{v}\in \Lambda/\Lambda'}H(\mathbf{k}+\mathbf{v})$.
 This also has the effect of increasing the size of the matrix by a factor $\Lambda/\Lambda'$.
 The problem of unfolding is identifying the original band structure in the pull--back of the push--forward, $\pi^*\pi_*(H)$.

\section{Phase diagrams}
We will first give natural examples of phase diagrams, where phases are Chern phases of topological insulators, and the define the notion of an abstract phase diagram, which subsumes all of them. For this we prove that it is realizable by physical systems.

\subsection{Phase diagrams from families of Hamiltonians}
When dealing with families (or families of of families) of Hamiltonians, one obtains phase diagrams. A common situation is that one has a space of parameters $P$ which parameterizes Hamiltonians $H_p$ in real space. Each of these Hamiltonians gives a family $H_p(\bk)$ of $d$--dimensional Hermitian matrices with $\bk$ in the Brillouin zone $B$. 
At each point $p\in P$, we have the band structure of the family $H_p(\bk)$ as in \S\ref{par:background}, and if the ground state is non--degenerate, as it is for a Chern insulator, we have the first Chern class. In the case that $B$ is a  compact oriented surface, this yields the first Chern number which is a locally constant function, as it is a topological invariant. 

It will not be defined on the locus $P_{deg}$, where the ground state is degenerate, i.e.\ where we have a level crossing in the parameter space. Here, we just set the function to $0$.
All in all, we obtain a family $P\times B\to Herm_d$.
 The  dimensions of $P$ can be called synthetic dimensions in contrast to the real or momentum space dimensions. 
 
\subsection{Slicing and Phase diagrams}
Higher dimensional theories naturally lead to phase diagrams.
Given a family $B\to Herm_d$ without restriction on the dimension of $B$, as  Chern classes lie in $H^{2i}(B,\Z)$, to obtain a number, one in general has to pair with a degree $2i$ dimensional homology class.
Such degree $2i$ classes arise from the embedding of $2i$--dimensional manifolds $j:N^{2i}\subset B^m$. For these  one can evaluate $c_i(j^*(L))$ as an integral: $\int_{N}f^*(c_i(V))=\int_N c_i(V))$.
This works analogously for all Chern numbers. Thus choosing a family of embeddings, parameterized by $P$, one obtains a phase diagram. The degenerate locus is comprised of   those parameter values for which submanifold inclusions hit a degenerate point, i.e.\ $im(j)\cap B_\deg\neq \emptyset$.

A typical situation which come from 3d lattices  and is called slicing is given by an embedding $i_h:T^2\hookrightarrow T^3=T^2\times S^1$ by $(\theta_1,\theta_2)\mapsto (\theta_1,\theta_2, h)$, with the parameter space being $S^1$.
More generally, writing $T^n=T^{n-2}\times T^{2}$ with $T^{2}$ in the $i$ and $k$th slot $(i<k)$ we obtain $n\choose 2$ families with $P=T^{n-2}$.  This makes   $n-2$ of the coordinates synthetic coordinates or parameters. 

If the slices pass though the degenerate locus of the family, the Chern classes can jump, see \cite{KKWK16,KKWK20} for detailed examples. 
 If there are only finitely many degenerate points, then these numbers determine the class completely; see e.g.\ \cite{KKWK16}.

\subsection{Abstract phase diagrams}
\begin{definition}
An abstract $n$ parameter quantum phase diagram $PD$ will be given by a paracompact $n$--manifold  $P$, together with a codimension $1$ subspace $P_{deg}$ such that $P_0=P\setminus P_{deg}$ has finitely many components, and a locally constant function $C_1:P_0\to \Z$.
We allow that multiple critical lines or codimension-1 degeneracy submanifolds  meet, but the overall behavior is tame in the following technical sense. We
 restrict to the case that each point $p$ of $P_0$ is has a local neighborhood $U$ that is homeomorphic to  a neighborhood $V$ of the origin in  $\R^n$ with $U\cap B_0$ being homeomorphic to the codimension 1-cones of a polyhedral fan.
    
\end{definition}

In 1D, locally $P_\deg$ is just the 
origin, while in 2D finitely locally $P_\deg$ can be the union of finitely many rays emanating from the origin. In 3D locally $P_\deg$ can be the union of finitely many boundaries of finitely many polyhedral cones of a complete fan and so on. We recall that given a polyhedron which contains the origin, its fan has maximal cones given by the convex hull of the vertex vectors of each face of the polygon. The lower dimensional cones are then the intersection of the higher dimensional cones and correspond to the convex cones of the lower dimensional faces.

We will deal with the 2D case here and a Brillouin zone $\BZ$ is a compact orientable surface.
A family of Hamiltonians on a 2D Brillouin zone $\BZ$ parameterized by a manifold $P$ {\em will be called tame}, if together with the locus $P_{\rm deg}$ and the locally  constant function $C_1:P_0\to \Z$ given by the first number of class of the line bundle of $\BZ$ defined by the ground state it yields the data of an abstract phase diagram. 

\subsection{Designing phase diagrams} 
\begin{definition}
    Given an abstract phase diagram,
a realization is a family of Hamiltonians and ground states parameterized by $P$  which has the a degenerate locus $P_{\rm deg}$ and regions with different phases classified by the Chern number function $C_1$. This means that for $p\in P_0$, $C_1(p)$ is the first Chern number of a non--degenerate ground state. Designing a phase diagram means to pick an abstract phase diagram and realizing it. 
\end{definition}

\begin{theorem}
\label{thm:main}
   For any abstract phase diagram $PD$ and any 2D compact oriented Brillouin zone $\BZ$, there is a  family of Hamiltonians parameterized by $P$ whose phase diagram is $PD$ which when on the wall between one domain with Chern number $d$ to another domain with Chern number $d'$ has $|d-d'|$ Dirac points. This gives an effective lower bound on the number of Dirac points, which is sharp for two band systems.
\end{theorem}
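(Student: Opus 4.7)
My overall strategy is to reduce everything to a concrete geometric construction of maps $P\times \BZ \to \R^3$ and assemble them from local models dictated by the combinatorial structure of the polyhedral fan at each point of $P_{\rm deg}$.

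The first step is to invoke the universality results of \S\ref{par:spin}. By Theorem \ref{theorem:ksigma} and its corollaries, it suffices to build the desired phase diagram in the 2-band setting, because any trivial rank--$n$ bundle over $\BZ$ that splits as $L_1\oplus\cdots\oplus L_n$ with specified Chern classes is recovered as a direct sum of pull--backs of the spin--$\frac{1}{2}$ ground--state bundles. Thus I will look for a map $F\colon P\times \BZ\to \R^3$ such that for each $p\in P_0$, the restriction $F_p=F(p,\cdot)\colon \BZ\to \R^3\setminus\{O\}$ has $\deg(F_p)=C_1(p)$, and such that $F_p^{-1}(O)$ has the prescribed cardinality on walls.

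The second step is to build the local model across a single codimension--1 wall separating Chern numbers $d$ and $d'$. Following \S\ref{par:rose}, I take the suspension of a 1-parameter family of maps $S^1\to \R^2$ whose critical curve is the classical rose curve $r=\cos(k\theta)$ with $k=\frac{d-d'}{d+d'}$, parameterized over $[0,\delta\pi]$ with $\delta=|d-d'|$. Passing through $t=0$ the curve crosses the origin exactly $\delta$ times, each crossing producing a transverse preimage of $O$, i.e.\ a Dirac point on $\BZ$; on either side, the map to $S^1\subset \R^2\setminus\{O\}$ has winding $d$ and $d'$ respectively, so after suspension to $\BZ\to S^2$ the mapping degree is $d$ and $d'$. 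This realizes the desired Chern--jump with exactly $\delta$ Dirac points on the wall.

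The third step is the local model at polytopic meeting points. By hypothesis the degeneracy locus near any $p\in P_{\rm deg}$ is homeomorphic to the codimension--1 cones of a polyhedral fan $\Sigma\subset \R^n$. For each maximal cone $\sigma$ of $\Sigma$ I am prescribed a Chern number $C_1(\sigma)$. Using the normal cone construction I interpolate radially between the wall models built in step two: on each cone I use a convex combination of the single--wall suspension models corresponding to the walls of $\sigma$, scaled by a partition of unity subordinate to the fan. This gives a local family whose restriction to each component agrees (up to homotopy through non--degenerate families, which by the remark in \S\ref{subsec:momentumspace} may be assumed traceless) with the single--wall model along each facet.

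The fourth step is globalization. Since $P_0$ has finitely many components and $P_{\rm deg}$ is locally polyhedral, I cover $P$ by finitely many charts on which the local models of steps two and three apply, and glue them using a partition of unity on the pull--back maps $F\colon P\times \BZ\to \R^3$. Homotopy invariance of the mapping degree ensures the Chern number on each component of $P_0$ is preserved during the gluing.

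The final step is sharpness for 2-band systems. Here the Chern number on each side of a wall is the mapping degree of $F_p\colon \BZ\to \R^3\setminus\{O\}$, so the jump $|d-d'|$ must be matched by the algebraic count of signed preimages $\sum_{q\in F_p^{-1}(O)} \operatorname{sgn}(\phi_q)$ that is created or annihilated as $p$ crosses the wall. Hence at least $|d-d'|$ preimages of $O$ must be present on the critical wall, which is the promised lower bound, and the rose--curve model attains it.

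I expect the main obstacle to be step three: arranging the polytopic interpolation so that it is simultaneously (i) homotopic through non--degenerate maps on the interior of every maximal cone to the constant--Chern model there, (ii) exhibits the correct number of Dirac points on every facet without introducing spurious ones on higher--codimension strata, and (iii) glues compatibly when several facets meet. The rose--curve analysis of \S\ref{par:rose} handles the facet count, but propagating this control through the fan requires a careful choice of the partition of unity on the normal cone and an appeal to the fact that the space of non--degenerate families is homotopy--equivalent to the space of maps $\BZ\to \R^3\setminus\{O\}$ (by Theorem \ref{theorem:ksigma}), within which one can perform the required homotopies without altering the Chern number.
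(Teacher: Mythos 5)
Your proposal follows essentially the same route as the paper: reduction to the two--band/spin--$\frac{1}{2}$ setting via the universality of $\mathbf{k}\cdot\boldsymbol{\sigma}$ and the collapse of $\BZ$ to $S^2$, single--wall local models given by suspending the interpolation $(1-t)f_d+tf_{d'}$ whose critical curve is the rose curve with exactly $\delta=|d-d'|$ origin crossings, normal--cone/partition--of--unity interpolation at polytopic strata, and sharpness from the fact that in a two--band system each crossing is a Dirac point of local charge $\pm1$ whose signed count must account for the jump in the degree. The obstacle you flag in step three (controlling the interpolation over higher--codimension strata of the fan) is precisely the part the paper treats tersely in \S\ref{par:polyhedral}, so your assessment is accurate.
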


\begin{proof}
We first reduce to the case that $\BZ=S^2$ by using the degree $1$ map $\pi$ obtained by collapsing the 1--skeleton of $\BZ$.

The basic ingredient is the standard spin $\frac{1}{2}$ 
family $H(\bk)=-\bk\cdot \bs$, which has Chern number $C_1=1$ for the ground state. We can change the Chern number to $C_1=d$ by pulling back along a map $f_d:S^2\to S^2$ of degree $d$. Note for concreteness, we can use $f_d=Sz^d$. These are the locally constant maps on the components of $P_0$.
The local models at points of $p\in B_{\rm deg}$ are given by the construction \S\ref{par:wall}, and \S\ref{par:polyhedral} below. We can glue these together using local partitions of unity.

These local models satisfy the condition for wall crossings, as  the  $\delta=|d-d'|$ points where the Hamiltonians are degenerate are indeed Dirac points. First they are a discrete set, by \cite[Theorem 1.2]{KKWK2012}.
One can compute the local charges using the linear information \cite[Corollary 2.4]{KKWK20}. The sign of the linear information given by the sign of the Jacobian of the family around each level crossing which is $\mathrm{sgn}(d'-d)$ and hence carries local charges $\pm 1$. There are $\delta$ many of these points.
These local charges correspond to the jumps in the Chern number along a path $[-\epsilon,\epsilon]\to P$ that passes through a wall, \cite[\S 3.5.9.]{KKWK16}. The minimum number of Dirac points needed is $\delta$ which is attained.  
\end{proof}

Note that the families from the Theorem are generic in the sense that on the wall the level crossings are at isolated points, charges are $\pm 1$ and there are no annihilations, that is spurious Dirac pairs, or higher Chern number crossings. The latter two  are non--generic, as can be seen from the characteristic map, see \cite{KKWK2012}. Without the  extra conditions for wall crossings on the families, if $P$  is metrizable, one can realize an abstract phases diagram by  taking the families coming from pull--back with $f_d$ inside the connected components and then scale them by the distance to the degenerate locus. Each $\tilde{f_d}(p)=
\inf_{q\in P_{\rm deg}} dist(p,q)f_d(p)$ will then go smoothly to zero on the walls; consequently will have a continuum of degenerate Hamiltonians in these families if the walls are of positive dimension.
Non--generic crossing do appear for instance in the Kagome lattice, see \S\ref{par:kagome}, which exhibits 2-band crossings with local charge $2$, and the Gyroid, which has spin $1$ points, that is $3$ band crossings, see \cite{KKWK20}.

    

\subsubsection{Wall crossings}
\label{par:wall}

Let $I=[0,1]$. 
The following construction give rise to families on $I\times S^2$ regarded as a one parameter family. 
To get a phase transition from Chern number $d$ to $d'$  consider 
$F=[(1-t)f_d+tf_{d'}]:I\times S^2\to B^3$, where $B^3$ is 
the 3-ball, that interpolates between a degree $d$ map $f_d$ and a degree $d'$ map $f_{d'}$. 
If for some $p$,  [$(1-t)f_d(p)+tf_{d'}(p)]=O$,
then $(1-t)f_d(p)=t f_{d'}(p)$, and since $||f_d(p)||=||f_{d'}(p)||=1$
this means that $1-t=t$ and hence $t=\frac{1}{2}$. Also at $t=\frac{1}{2}$ there must be some $p$ where  $F(\frac{1}{2},p)=0$, because otherwise the degree cannot change.

Reparameterizing $[0,1]$ to $[-1,1]$ we can pull back this family to neighborhoods of points in $p\in P_{\rm deg}$ which locally  are modeled by two regions in $R^d$ separated by the wall $x_d=0$, by pulling back along the projection $\pi_2:R^{d-1}\times I\to I$.

\subsubsection{Polyhedral cone}
\label{par:polyhedral}
If the local structure near $p\in P_{\deg}$ is given by the polytope $P$,
then let $N$ be the normal fan, \cite[7.1]{ZieglerGTM}. This is the fan in the dual space $(\R^d)^*$
consisting of the cones $\{C_{F}\}_{F\in \operatorname {face} (P)}$
where 
$C_{F}=\{w\in (\mathbb {R} ^{d})^{*}\mid F\} \subseteq \{x\in P: w(x)={max}_{y\in P}w(y)\}$.  This is also the face fan of its polar polytope, \cite[7.3]{ZieglerGTM}.
If the convex generators for $C_F$ are $v_1,\dots, v_k$, we define the pull--back function  as
$\sum_{i=1^k} t_if_i$, where $f_i$ is the function in the cone which is dual to the ray $v_i$ and $t_i$ are the coordinate functions of $C_F$. This realizes all wall crossings at once, viz.\ $p=\sum_{i=1}^k t_i(p)V_i$.

\subsubsection{Standard wall crossing and rose curves}
\label{par:rose}
For the 1-parameter crossing,
choosing the standard maps $f_d=Sz^d$, where $z=e^{i\phi}\in S^1\subset \C$, we have that $F=SG$ where  $G(t,z)=[(1-t)z^d+tz^{d'}]$ as a map $[0,1]\times S^1\to \R^2$. Now $G(t,z)=(0,0)$ if and 
only if $t=\frac{1}{2}$, and in this case, the curve is $G(\frac{1}{2},z)=g(z)=\frac{1}{2}[z^d+z^{d'}]$. The number of times that the curve passes through zero, that is the number of Dirac 
points, is given as $\delta=|d-d'|$. This is 
seen by noting that $z^{d}+z^{d'}=0$ if and only if $z^{\delta}=-1$ whose solutions are the odd 
powers of a primitive $2\delta$--th roots of unity $\exp(i\frac{2\pi}{2\delta}(2k+1))$, $k=0,\dots, \delta-1$.
Physically this means that the family has $\delta$  crossings corresponding to Dirac points.
These happen at Eigenvalues $\lambda=0$.
Identifying this with the energy and in common situations, such as graphene or Weyl semi-metals, this number can also be seen as the intersection with the Fermi energy $E_F=0$.
One can also view this as pulling $|d-d'|$ strands through $0$, or alternatively by change of reference point pulling the origin through $|d-d'|$ strands to change the winding number. 
Thus $|d-d'|$ is an upper bound on the minimal number of Dirac points in a wall crossing.
As the jumps from local charges in a 2--band system are only by $\pm 1$ for each Dirac point, the number is minimal for these systems and family provides the minimal model for the phase transition.

Rewriting in real coordinates, 
we get the equations
\begin{equation}
    g(\phi)=(\cos(\frac{d-d'}{2}\phi)\cos(\frac{d+d'}{2}\phi),
\cos(\frac{d-d'}{2})\sin(\frac{d+d'}{2}\phi) )
\end{equation}
This is part of a classical curve called the rose curve, i.e.\ it satisfies the polar equation $r=\cos(k\theta)$ with $k=|\frac{d-d'}{d+d'}|$, if $d\neq -d'$.
This is seen by computing $r^2=g(z)\overline{g(z)}$ and $\theta= \arctan[\Im(g(z))/\Re(g(z))]$.
The parametrized curve is parametrized either in the parameter $\phi\in [0,2\pi]$ or in the parameter $\theta=\frac{d_1+d_2}{2}\in [0,(d_1+d_2)\pi]$.
To transverse a rose with $k=\frac{r}{s}$
with $gcd(r,s)=1$ exactly once fully, one needs to work in the interval $[0,2s]$ if both $r$ and $s$ are odd and  $[0,s]$ else. 

In the case $d=0$, $k=0$ and the curve is simply the circle $\frac{1}{2}+\frac{1}{2}z$ of radius $\frac{1}{2}$ centered at $(\frac{1}{2},0)$.
This is traversed $d$ times ---to go around once one needs the interval $[0,\pi]$, and the interval of $\theta$ is $[0,d\pi]$.

If $d=-d'$, the curve collapses to the curve $(\cos(d\phi),0)$ at the parameter $t=\frac{1}{2}$.
Note that this passes through $(0,0)$ exactly 2d times. In particular, for $d=1$ this is at $(\frac{\pi}{2},\frac{3\pi}{2})$.

For the special case $d=1,d'=2$ we have  $k=-\frac{1}{3}$ and the curve is the lima\c{c}on trisectrix on the interval $[0,3\pi]$ which traverses the whole curve once hitting the origin once. For $d=1,d'=-2$, $k=-3$ and the curve is the trifolium, for the interval $[0,\pi]$, which traverses the curve once and passes three times through the origin.
For $d=-2,d'=4$, $k=\frac{-6}{2}=-3$ we again get the trifolium, but now on the interval $[0,2\pi]$ which traverses the trifolium twice
and passes six times through the origin.

These examples are given in Figure \ref{fig:rose}.

\begin{figure}[h]

\begin{subfigure}[t]{0.3\textwidth}
{\includegraphics[height=1.2
in]{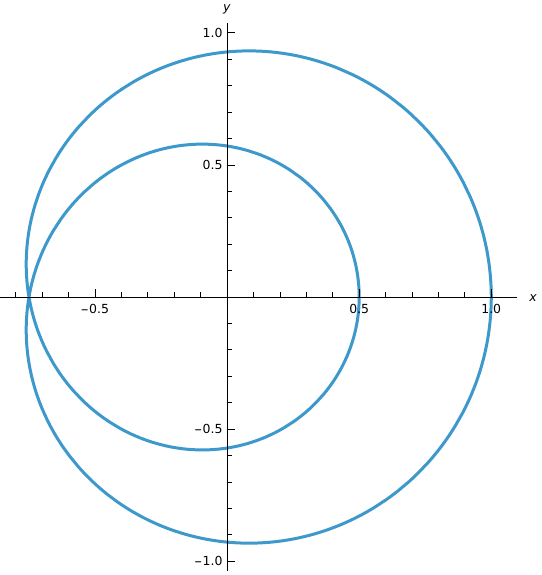}}
    \caption{$t=\frac{1}{4}, d=1, d'=2$}
\label{fig:A}
\end{subfigure}\hfill
\begin{subfigure}[t]{0.3\textwidth}
{\includegraphics[height=1.2
in]{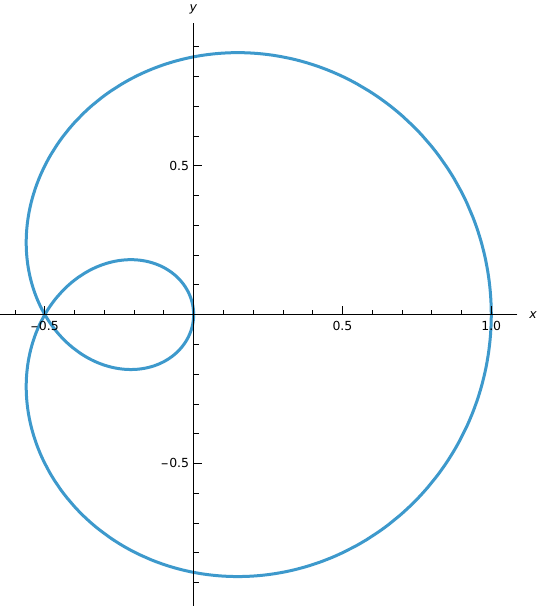}}
    \caption{$t=\frac{1}{2}, d=1, d'=2$}
\label{fig:B}
\end{subfigure}\hfill
\begin{subfigure}[t]{0.3\textwidth}
{\includegraphics[height=1.2
in]{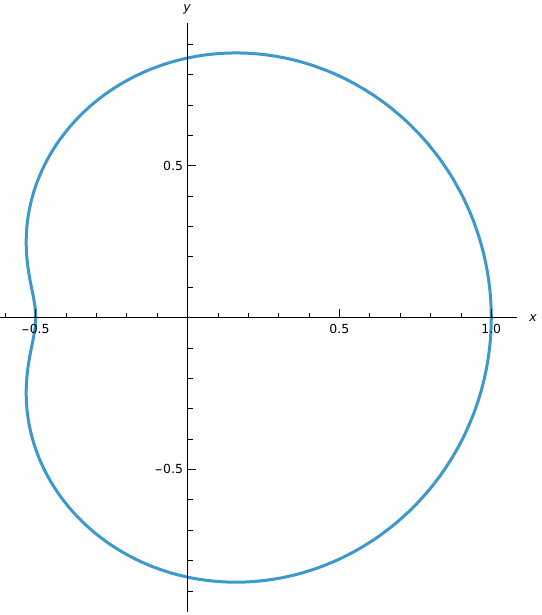}}
    \caption{$t=\frac{3}{4}, d=1, d'=2$}
\label{fig:C}
\end{subfigure}\hfill

\begin{subfigure}[t]{0.3\textwidth}
{\includegraphics[height=1.2
in]{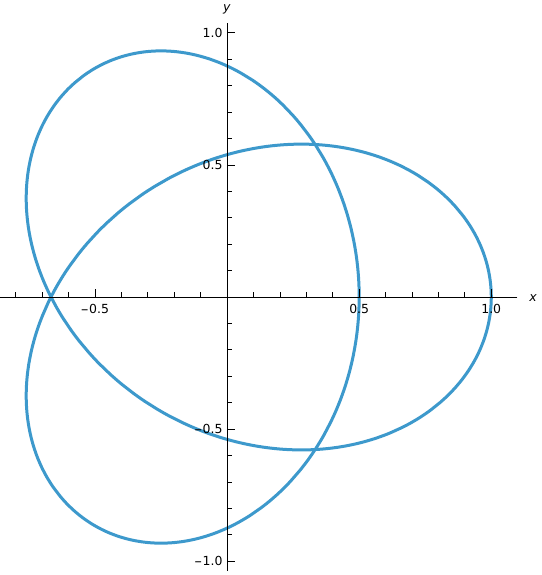}}
    \caption{$t=\frac{1}{4}, d=1, d'=-2$}
\label{fig:D}
\end{subfigure}\hfill
\begin{subfigure}[t]{0.3\textwidth}
{\includegraphics[height=1.2
in]{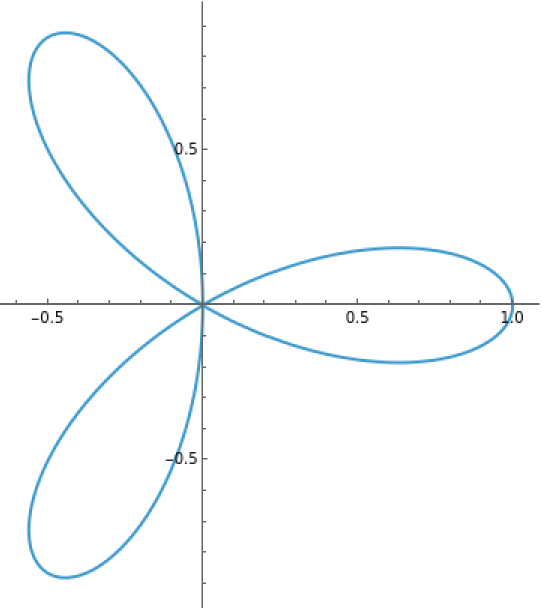}}
    \caption{$t=\frac{1}{2}, d=1, d'=-2$}
\label{fig:E}
\end{subfigure}\hfill
\begin{subfigure}[t]{0.3\textwidth}
{\includegraphics[height=1.2
in]{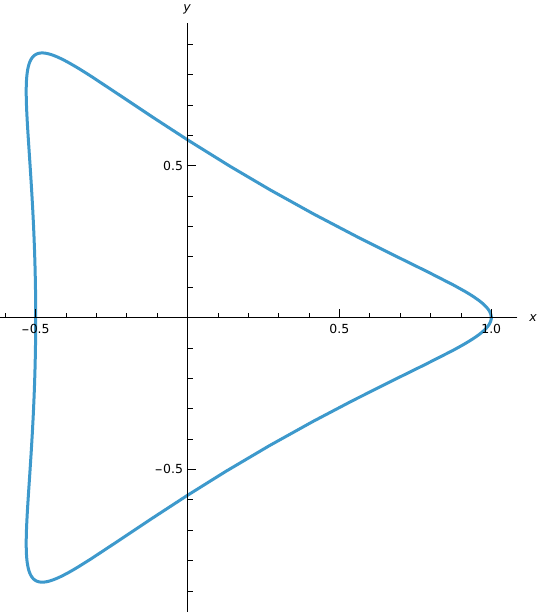}}
    \caption{$t=\frac{3}{4}, d=1, d'=-2$}
\label{fig:F}
\end{subfigure}\hfill

\begin{subfigure}[t]{0.3\textwidth}
{\includegraphics[height=1.2
in]{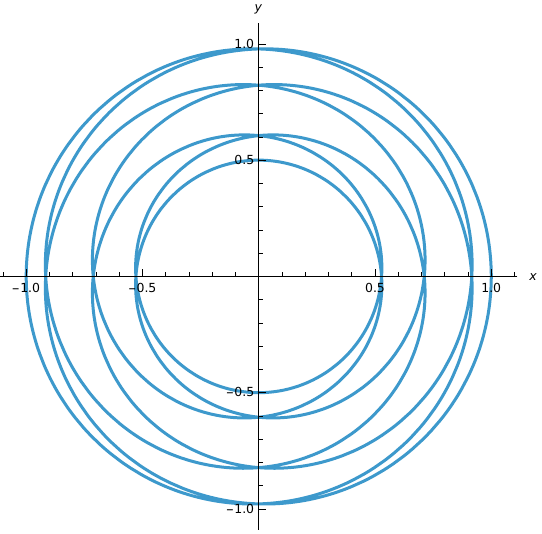}}
    \caption{$t=\frac{1}{4}, d=5, d'=7$}
\label{fig:G}
\end{subfigure}\hfill
\begin{subfigure}[t]{0.3\textwidth}
{\includegraphics[height=1.2
in]{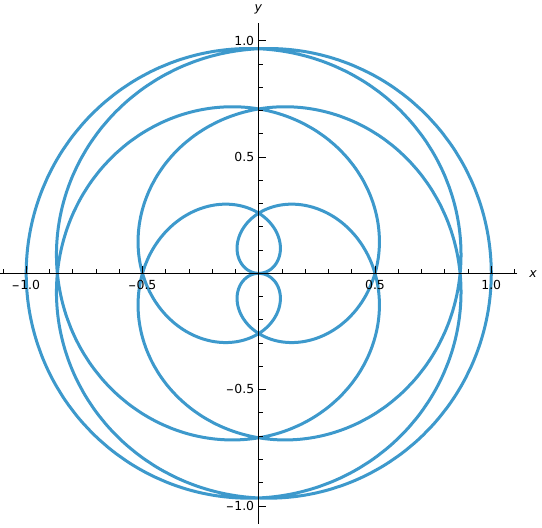}}
    \caption{$t=\frac{1}{2}, d=5, d'=7$}
\label{fig:H}
\end{subfigure}\hfill
\begin{subfigure}[t]{0.3\textwidth}
{\includegraphics[height=1.2
in]{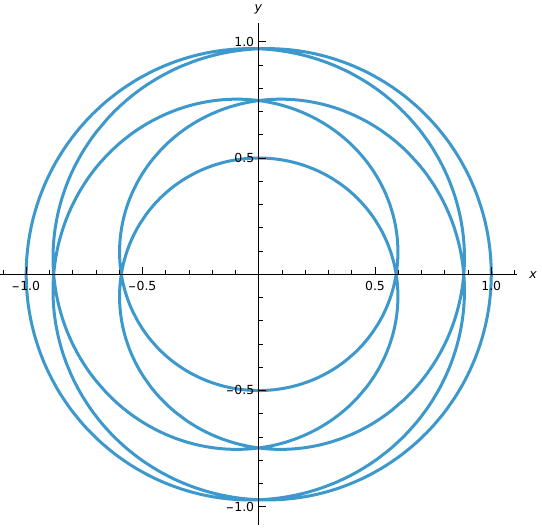}}
    \caption{$t=\frac{3}{4}, d=5, d'=7$}
\label{fig:I}
\end{subfigure}\hfill

 \caption{Rose curves for different values of $t,d$ and $d'$. 
 The origin is in the image for $t=\frac{1}{2}$. The winding numbers are constant for $0\leq t<\frac{1}{2}$ and $\frac{1}{2}<t\leq 1$. These can be readily be read off by using a ray.}\label{fig:rose}
 
\end{figure}

\subsection{Other choices and examples}
One can of course use any maps of the given degrees, not just the suspension maps to define the straight line homotopy $F$.
In the case of a jump from $+1$ to $-1$, we can use the identity $id$ as the degree $1$ map and any map
with image $S^2$ which is degree $-1$ when restricted to its image. 
  The two standard choices for this are the antipodal map $a:\bk\to -\bk$ or the map $a_z:(k_x,k_y,k_z)\to (k_x,k_y,-k_z)$. 
For $a$ there is a singularity at $t=1/2$ where the whole map collapses to the origin $F(1/2,\bk)\equiv O$. 
For $t\neq 1/2$ the image is in $\R^3\setminus \{O\}$ and the map has a mapping degree, which is $1, t<1/2$ and $-1, t>1/2$.
We obtain the family $(1-2t)\bk=\alpha\bk$, where $\alpha=1-2t\in [-1,1]$. As $\frac{|\alpha|}{2}$ is the distance of $t$ to $\frac{1}{2}$ this is an example of scaling the map to zero at the critical locus.

In the second case, also for $t\neq 1/2$ the image is in $\R^3\setminus \{O\}$ and the map has a mapping degree, which is $1, t<1/2$ and $-1, t>1/2$. For $t=1/2$ the image is the standard disc which contains the origin $O$.
Writing this as
\begin{equation}
k_x \s_x+k_y\s_y+m\sigma_z    
\end{equation}
where $m=(1-2t)k_z$, we see that the mass gap vanishes at $t=1/2$.
This particular model appears as a Dirac model, cf.\ \cite[2.3]{KLWK16}.
Note that in both cases, there is a band crossing at the critical point, $O$ is in 
the image, and this has to be the case, since otherwise the Chern class  would be
invariant. The perceived contradiction is lifted by the fact that there is no 
splitting for the two line bundles involved in the crossing, if a crossing point 
is present.

We illustrate the construction \S\ref{par:polyhedral}
for a 2-parameter family with $k$ rays. Up to isotopy fixing the origin and the incidence relations, we can arrange the rays to be in the directions of the $k$--th roots of unity $\zeta_k$. Then the normal cone is given by the odd $2k$--th roots $\zeta_{2k}^{2n+1}$. If in the chamber between $\zeta_k^{i-1}$ and $\zeta_k^i$ the $C_1=d_i$,  then in the chamber between $\zeta_{2k}^{2i-1}$ and $\zeta_{2k}^{2i+1}$ for $p=a_1\zeta_{2k}^{2i-1}+ a_2\zeta_{2k}^{2i+1}$ the function $F$ is given by 
$a_1(p)f_{d_i}+a_2(p)f_{d_{i+1}}$. Similar to the 1--parameter calculation  $F(p)=O$ means that $a_1(p)=a_2(p)$ and hence $p$ is on the ray $\zeta_{2k}^{2i}=\zeta_k^i$ and hence is on the critical locus.

We now consider  a specific 2-parameter model coming from spin--orbit coupling, cf.\ \cite{KKWK16}, given  by 
\begin{equation}
\label{eq:MBphase}
    H(\bk)=(\sin (k_x), \sin(k_y),M-B(\sin^2(k_x/2)+\sin^2(k_y/2))\cdot \bs
\end{equation}

\begin{figure}[htbp]
 \includegraphics[width=0.48\textwidth]{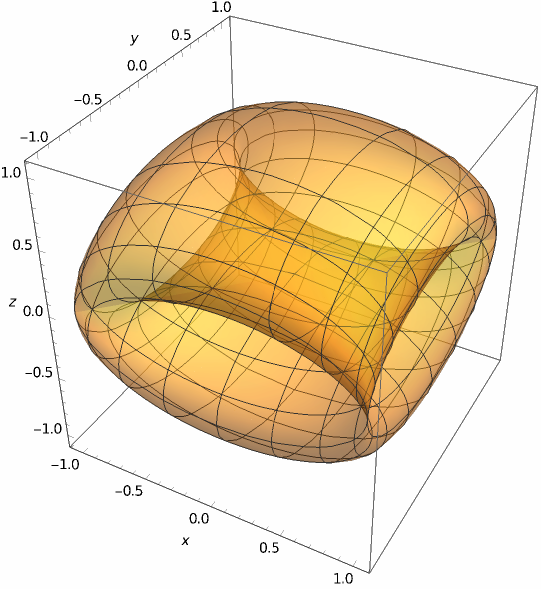}\caption{\label{fig:surface} Surface plot of the function given in Eq.~(\ref{eq:MBphase}) for $M=B=1$.}
\end{figure}

The function from $T^2\to \R^3$ is shown in Figure \ref{fig:surface}. It misses the origin if $M\neq0,M\neq B,M\neq 2B$.
The intersection with the z-axis is given by 
$(0,0,M), (0,\pi,M-B), (\pi,0,M-B), (\pi,\pi, M-2B)$. The indices can be computed as $\cos(k_x)\cos(k_y)$ which are $+1,-1,-1,+1$ respectively.
There are 6 regions; taking into account how many of the 4 points above lie on the ray  $\lambda(0,0,1),\lambda>0$, we obtain Table \ref{tab:phases}.
The corresponding phase diagram is shown in Figure \ref{fig:phasediagram}.
It is a straightforward check that the wall transitions are minimal and of rose--type.

\begin{figure}[htbp]
 \includegraphics[width=0.48\textwidth]{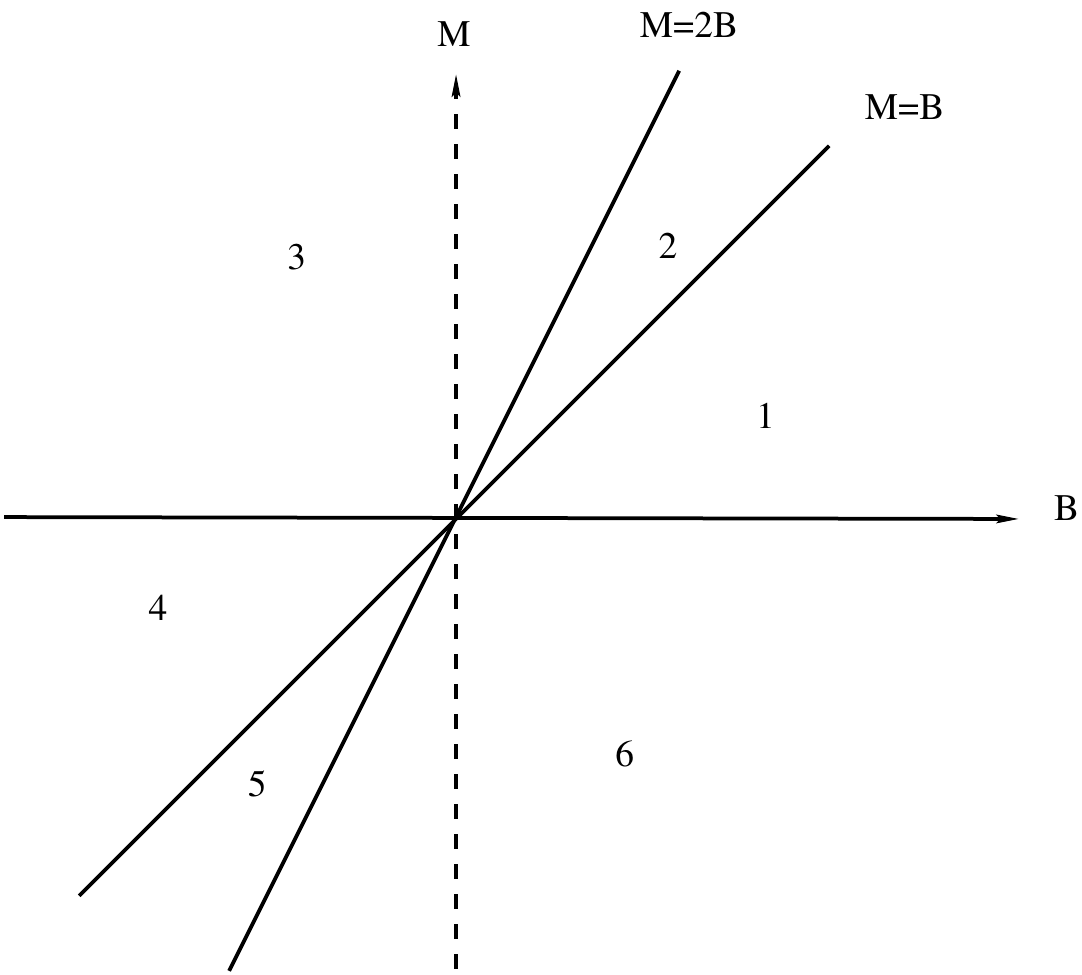}\caption{\label{fig:phasediagram} Phase diagram corresponding to the phases listed in Table \ref{tab:phases}.}
\end{figure}

\begin{table}[h]
    \centering
    \begin{tabular}{c|l|r}
    Region&Hyperplanes&Chern\\
    &&number\\
    \hline
        $1$&$B>M, M>0$ & $1$ \\
        $2$ & $2B>M, M>B$&$-1$\\
        $3$&$M>0, M>2B$&$0$\\
        $4$&$0>M,M>B$&$-1$\\
        $5$&$M<B,M<2B$&$1$\\
        $6$&$M<2B, M<0$&0
    \end{tabular}
    \caption{Chern number of the open regions in the phase diagram of \eqref{eq:MBphase}}
    \label{tab:phases}
\end{table}

\section{Lattice Implementation}
\label{par:lattices}
For the lattice implementations, we work in the usual physical setup of tight-binding models with creation and annihilation operators to realize the hopping terms and other interactions. This facilitates the connection to the literature.
In particular, translation from site $i$ to site $i+k$ in a one--particle system is given  by $U_k=c_{i+k}^\dagger c_j$. For instance in $\Z^2$ there are the operators $U_1=c_{i+1,j}^\dagger c_{i,j}$ and $U_2=c_{i,j+1}^\dagger c_{i,j}$.

We will achieve higher Chern numbers by using commensurate higher order neighborhoods, which correspond to coverings \eqref{eq:covers}.
If the Hamiltonians already involve higher order interactions as the Haldane Hamiltonian, see \S\ref{par:haldane} below, we require that all the participating $n$--th neighborhoods are commensurate in the super--lattice, that is they are all scaled by the same integer $l(k)$.
In particular, given a Hamiltonian with interactions terms $H^{(n)}$, under the condition that all these sublattices are commensurate, under multiplication by $l(k)$ the interaction terms will become $l(k)H^{(n)}=H^{(kn)}$.




We will first show that in the case of lattices from quadratic integers, there are infinite families of commensurate sublattices, see \S\ref{par:QIL}.
After reviewing Haldane's construction in our general setup \S\ref{par:haldane}, we turn to implement our strategy for natural models with higher Chern numbers  for square and triangular, which are lattices of quadratic integers. 
Subsequently we will treat then the honeycomb and Kagome lattices which are triangular-based lattices. By triangular-based we mean lattices that are a subset of a triangular lattice, as will be discussed. These examples show how the construction can be easily extended to other lattices like Lieb, Dice and checkerboard lattices.

\subsection{Quadratic Integer Lattices}
\label{par:QIL}
Importantly, for 2D lattices described by rings of quadratic integers $\Z[\omega]=\{a+\omega b: a, b \in \mathbb{Z}\}$, where $\omega = \frac{1+\sqrt{-d}}{2}$ if $-d \equiv 1 \pmod{4}$ and $\omega = \sqrt{-d}$ otherwise, as we show in Appendix \ref{par:quadratic}, there exist arbitrarily far distant-neighbors with the same structure as the original ones. The tool we use are norms, where the norm after embedding into $\C$ -- that is identifying $\sqrt{-d}=i\sqrt{d}$ -- is the norm of the element as a complex number, cf \S\ref{par:qintro}.
A unit in ring quadratic integer ring is an invertible element $u \in \Z[\omega]$. These are the elements of norm  $1$. 
An associate $\mu \in \Z[\omega]$ of an element $\nu \in \Z[\omega]$ is a quadratic integer that satisfies the relation $\nu=u\cdot \mu$, for some unit $u$. The associates of a number is the set of all its associates and this is a finite set for $\Z[\omega]$ \cite{IrelandRosen1990}. In fact, the number of units $|U|=2,4$ or $6$ depending on $d$, see \eqref{eq:units}.  From the multiplicativity of the norm, all associates will have the same norm. 
\begin{definition}
An element $\mu \in \Z[\omega]$ is said to have an isolated norm if the only numbers with the same norm as
 $\mu$ are its associates.     
\end{definition}
 The elements whose norm is isolated span a sublattice which is a homothety of the original lattice by the square root of the norm, up to a possible rotation.
From \ref{corollary:generalisolated} and Theorems \ref{theorem:Heegner} and \ref{theorem:isolatednorms}, we obtain:

\begin{theorem}\label{theorem:main2}
    In a full 2D lattice $\Gamma$ described by a quadratic integer ring $\Z[\omega]$,  there are 
    exactly $|U|$ nearest neighbors to any lattice 
    point lying at a Euclidean distance  $d = 
    \prod_{i=1}^{r} p_i^{b_i}\prod_{j=1}^{s} 
    q_j^{c_j/2}$ where each $p_i$ is an inert prime 
    $p_i \in \Z$ in $\Z[\omega]$, and $q_j \in 
    \mathbb{N}$ is a ramified prime represented by 
    the norm function, $b_i,c_j,r,s \in \mathbb{N}$, 
    and $|U|$ is the cardinality of the set of units 
    in $\Z[\omega]$.
    In particular if 
$d \in \{-1, -2, -3, -7, -11, -19, -43, -67, -163\}$ then any ramified prime is represented and the elements with these norms generate exactly those lattices with the same minimal number of nearest neighbors as the original lattice. \qed
\end{theorem}
The details and definitions of split, ramified and inert primes are in \S\ref{par:qintro}.

We now illustrate this by treating topological insulators living on a square lattice $(d=-1)$, and  the triangular lattice ($d=-3$).  
\subsubsection{Gaussian Integers/square lattice}
\label{subsection:gaussian} 
Gaussian integers are the quadratic integers : $\mathbb{Z}[i]=\{a+b i: a, b \in \mathbb{Z}\}$. They form a square lattice in the complex plane and are a  unique factorization domain (UFD)  \cite{cox1989primes}. The norm of a Gaussian integer is $N(a+b\omega) = a^2 +b^2$. There are  four units $\{\pm 1,\pm i\}$ that have unit norm. The question of which norms are isolated in a quadratic domain boils down to the problem of finding a representation of integers as a certain quadratic form. In the case of Gaussian integers, non-isolated norms have a representation as the sum of two nonzero squares. This is a relatively old problem that was considered by Fermat. Many proofs for the theorem were given \cite{Euler1760,zagier1990one,cox1989primes}. The results of Appendix \ref{par:ufd} can be used as a proof when specializing to Gaussian integers. Using theorems~\ref{theorem:oddprimes} and \ref{theorem:evenprimes}, the only rational prime that ramifies (i.e. divides the discriminant $D=4$) in $\Z[i]$ is $p=2$. The primes of the form $p \equiv 3 \pmod{4}$ remain inert while those of the form $p\equiv 1 \pmod{4}$ split. A direct application of theorem~\ref{theorem:main2} in the case of Gaussian integers gives the following theorem.

\begin{theorem}[Fermat’s theorem on sums of two squares]\label{theorem:fermat}
    An odd rational prime $p$ can be represented as a sum of two nonzero squares $p=a^2+b^2$ for $a,b \in \mathbb{N}$ iff $p \equiv 1 \pmod{4}$.
\end{theorem} 

Restating this in terms of the distances of nearest neighbors in a square lattice, we have the following corollary.

\begin{corollary}\label{corollary:square4nns}
    In a square lattice, there are exactly four nearest neighbors to any lattice point at a Euclidean distance \( d = 2^{b_0/2}\prod_{i=1}^{r}p_i^{b_i}\), with positive integers $r,b_i \in \mathbb{N}$, and each \( p_i \) is a rational inert prime, i.e.\ \( p_i \equiv 3 \pmod{4} \).
\end{corollary}  

  Note that we will in fact restrict to the case where $b_0$ is an even integer. The parity of the number $b_0$ then divides this family of isolated norms into two classes whose lattice points are rotated from each other with an angle $45^{\circ}$; see Fig.~\ref{fig:squarenns} for an illustration of the first few isolated norms. For commensurate lattices we  take the parity of $b_0$ to be even which yields distances: $d \in \{1,2,3,4,6,7,8,9,11,12,14 \cdots\}$.

\subsubsection{Eisenstein Integers/triangular lattice}\label{subsection:eisenstein}
Eisenstein integers are the quadratic integers: $\mathbb{Z}[\omega]=\{a+b \omega: a, b \in \mathbb{Z}\}$ where $ \omega=e^{\frac{2 \pi i}{3}}=-\frac{1}{2}-\frac{\sqrt{3}}{2}i$ is the cubic root of unity satisfying $1+\omega +\omega^2 =0$.  Importantly Eisenstein integers form a UFD \cite{cox1989primes}. As an Abelian group they form a triangular lattice in the complex plane. The norm of an Eisenstein integer is $N(a+b\omega) = a^2 +b^2 -ab$. There are six units $\{\pm 1,\pm \omega, \pm \omega^2\}$ that have unit norm.
 Moreover, $\mathbb{Z}[\omega]$ is an Euclidean domain where irreducible elements are primes \cite{cox1989primes}. By theorems~\ref{theorem:oddprimes} and ~\ref{theorem:evenprimes}, a rational prime will ramify if it divides the field discriminant which is $-3$. Consequently, the only rational prime that ramifies is $p=3$. Furthermore, for primes other than $3$, if $p \equiv 2 \pmod{3}$ it remains inert, otherwise it splits. We have the following direct application of theorem~\ref{theorem:main2}.
 
\begin{corollary}\label{corollary:eisenisolated}
        In a triangular lattice, there exist exactly six nearest neighbors to any lattice point at a Euclidean distance \( d =3^{b_0/2} \prod_{i=1}^{r} p_i^{b_i} \), for positive integers \( r, b_i \in \mathbb{N}\), and each \(p_i \) is a rational inert prime \( p_i \equiv 2 \pmod{3} \).
\end{corollary}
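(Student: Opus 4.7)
The plan is to specialize the general Corollary~\ref{corollary:generalisolated} to the Eisenstein ring $\mathbb{Z}[\omega]$, which by \eqref{eq:quadtolattice} realizes the triangular lattice with $N(a+b\omega) = a^2 + b^2 - ab$ equal to the squared Euclidean distance. So the proof reduces to three bookkeeping tasks: (i) classifying inert, split, and ramified rational primes in $\mathbb{Z}[\omega]$; (ii) verifying that the unique ramified prime is represented by the norm; and (iii) reading off the order of the unit group. No new theory is needed since all heavy lifting already sits in the general Corollary.

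For (i), I would apply Theorems~\ref{theorem:oddprimes} and~\ref{theorem:evenprimes} to the quadratic field $\mathbb{Q}(\sqrt{-3})$, whose field discriminant is $-3$. Since $3$ is the only prime dividing the discriminant, it is the unique ramified prime. For $p=2$, we have $-3 \not\equiv 1 \pmod 8$ (indeed $-3\equiv 5$), so $2$ is inert. For odd $p>3$, the Legendre symbol $(-3/p)$ evaluates by quadratic reciprocity to $+1$ exactly when $p \equiv 1 \pmod 3$, so such $p$ split; the remaining odd primes $p\equiv 2 \pmod 3$ are inert. Combining with the $p=2$ case, the inert rational primes in $\mathbb{Z}[\omega]$ are exactly those $p\equiv 2\pmod 3$, matching the hypothesis on the $p_i$.

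For (ii), since $\mathbb{Z}[\omega]$ is a UFD by Theorem~\ref{theorem:Heegner}, Lemma~\ref{lem:ufd1} immediately guarantees that the ramified prime $3$ is represented by the norm; an explicit witness is $\pi = 1 + 2\omega$ with $N(\pi) = 1 + 4 - 2 = 3$. For (iii), the unit group is $U = \{\pm 1,\pm\omega,\pm\omega^2\}$ as recorded in the excerpt, so $|U|=6$.

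Assembling these pieces into Corollary~\ref{corollary:generalisolated}, distances of the form $d = q^{c/2}\prod_i p_i^{b_i}$ with $q$ a represented ramified prime and the $p_i$ inert correspond to isolated norms, and each such distance is attained by exactly $|U|$ lattice points. Substituting $q=3$ and the inert primes $p_i\equiv 2\pmod 3$ yields the statement. The only subtle step is confirming that the exponent on $3$ is $b_0/2$ rather than $b_0$: this is because the prime $\pi$ above $3$ already has norm $3$ (not $9$), so a factor $\pi^{b_0}$ in the quadratic integer contributes norm $3^{b_0}$ and hence distance $3^{b_0/2}$, whereas each inert rational prime factor $p_i^{b_i}$ contributes norm $p_i^{2b_i}$ and hence integer distance $p_i^{b_i}$. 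This asymmetry is exactly what the general corollary predicts, so nothing further needs to be checked.
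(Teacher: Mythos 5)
Your proposal is correct and follows essentially the same route as the paper: the paper presents this corollary as a direct specialization of the general machinery (Theorems~\ref{theorem:oddprimes}, \ref{theorem:evenprimes}, \ref{theorem:ufdisolatednorms} and Corollary~\ref{corollary:generalisolated}) after noting that $3$ is the unique ramified prime in $\mathbb{Z}[\omega]$, that the inert primes are exactly those $p\equiv 2\pmod 3$, and that $|U|=6$. Your additional checks --- the explicit witness $N(1+2\omega)=3$ for the representability of the ramified prime, the $p=2$ case, and the bookkeeping of the exponent $b_0/2$ versus $b_i$ --- are all correct and merely make explicit what the paper leaves implicit.
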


 Note that similar to the Gaussian integers case, the parity of the number $b_0$ then divides this family of isolated norms into two classes whose lattice points are rotated from each other with an angle $30^{\circ}$; see Fig.~\ref{fig:triangularnns} for an illustration of the first few isolated norms. Thus to obtain a commensurate sublattice, we will restrict to the case where $b_0$ is an even integer in the above corollary, which yields commensurate lattices at distances: $d \in \{1,2,3,4,5,6,8,9,10,11,12,16 \cdots\}$.
  

\subsection{Haldane model on a Honeycomb lattice} 
\label{par:haldane}
A prototypical example of a tight-binding Hamiltonian that realizes a topological insulator is the two-band Haldane model \cite{haldane88}. It was the first model that realizes the integer quantum Hall effect without using an external magnetic field. The model breaks time-reversal symmetry using complex hopping terms. This is necessary to realize a non-zero Hall conductance \cite{haldane88}. This Hall conductance is quantized and given by the first Chern number \cite{Thouless1982}. The model lives on a two-species honeycomb lattice with vertices shown as blue and red circles in Figure ~\ref{fig:haldanemodel}. The tight-binding Hamiltonian has real interactions between first nearest neighbors: $t_1$ and complex interactions between second nearest neighbors: $t_2 e^{\pm i\phi}$, in addition to a species-dependent on-site potential $\pm m$.

\begin{figure}[h]
    \centering
    \includegraphics[scale=1.4]{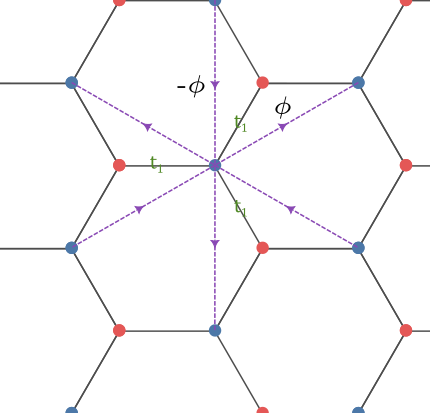}
    \caption{Haldane model on a honeycomb lattice with two different atom species (blue and red). The interactions are \( t_1 \) between the first nearest neighbors and \( t_2 e^{\pm i\phi} \) between the second nearest neighbors using the convention shown.}
    \label{fig:haldanemodel}
\end{figure}

The Hamiltonian then reads:
\begin{equation}\label{equation:HaldaneH}\begin{aligned}
 H_{\text{Haldane}} =&t_1\sum_{\langle i, j\rangle} (c_{i,A}^{\dagger} c_{j,B}+c_{i,B}^{\dagger} c_{j,A})+t_{2} \sum_{\langle\langle i, j\rangle\rangle} ( e^{ \pm i \phi} c_{i,A}^{\dagger} c_{j,A}+e^{ \mp i \phi} c_{i,B}^{\dagger} c_{j,B}) \\&+m \sum_i (c_{i,A}^{\dagger} c_{i,A}-c_{i,B}^{\dagger} c_{i,B})
\end{aligned}\end{equation} 

The summations ${\langle i, j\rangle}$ and ${\langle\langle i, j\rangle\rangle}$ are over first and second-nearest neighbors in real space, respectively. Here, the variables $\{i,j\}$ denote lattice site positions and A(B) denotes blue(red) sites. The convention for $\phi$ is illustrated in Figure ~\ref{fig:haldanemodel}. It is taken such that the total flux in the unit cell is zero. The $t_2$ terms break time-reversal symmetry thus allowing for a non-zero Chern number. Mathematically this avoids having a degree zero map. We set $t_1=1$ without loss of generality. This defines a 3-parameter family of Hamiltonians depending on real variables $\{t_2,\phi,m\}$.  
Note, that since the Honeycomb lattice has two equivalent sublattices, choosing an equivalence to a Hilbert space 
$\mathcal{H'}$ one has
$\mathcal{H}=\mathcal{H}_A\oplus \mathcal{H}_B\simeq \mathcal{H'}\ot \C^2$.
One defines momentum-space creation and annihilation operators $\{c_{A,k}, c_{B,k}\}$ which create/annihilate a state with momentum $k$ living on the A(B) sublattice:
\begin{equation}
 c_{A,k}\coloneqq\frac{1}{\sqrt{N}} \sum_{r} e^{-i k \cdot r} c_{r,A}, \quad c_{B,k}\coloneqq\frac{1}{\sqrt{N}} \sum_{r} e^{-i k \cdot r} c_{r,B}
\end{equation}   The $r$ summation here is over all sublattice sites. These operators are just the Fourier transform of the real-space operators $\{c_{i,A}, c_{j,B}\}$. For each momentum $k$ we can define a basis which consists of the state with momentum $k$ living on one of the two sublattices:  

\begin{equation}\label{equation:basishoneycomb}
\left(\begin{array}{c}
 c_{k} 
\\
0
\end{array}\right)  \coloneqq c_{k,A} \ket{0}\in \mathcal{H}_A  , \quad \left(\begin{array}{c}
0 
\\
 c_{k}
\end{array}\right)  \coloneqq c_{k,B} \ket{0}  \in \mathcal{H}_B \end{equation}   

Here, $\ket{0}$ is the vacuum state of the Hilbert space with no states occupied. 
 After the Fourier transform, the Hamiltonian will be a summation of $2\times2$ Hermitian matrices for each momentum $k$. As an example, we provide the explicit Fourier transform of the first term of the Hamiltonian given by Eq.~\eqref{equation:HaldaneH}.

\begin{equation}\label{equation:haldanefourier}
\begin{aligned}
&H_{1^{st} \text{ term}}(k)  = \sum_{\langle i, j\rangle} (c_{i,A}^{\dagger} c_{j,B}+h.c.)\\ 
& =\sum_i \sum_{m=1}^3\left(\frac{1}{\sqrt{V}} \sum_k e^{i k \cdot r_i} c_{A, k}^{\dagger}\right)\left(\frac{1}{\sqrt{V}} \sum_{k^{\prime}} e^{-i k^{\prime} \cdot\left(r_i+a_m\right)} c_{B, k^{\prime}}\right) +h.c.\\
& =\frac{1}{V} \sum_i \sum_{m=1}^3 \sum_{k, k^{\prime}} e^{i k \cdot r_i} e^{-i k^{\prime} \cdot\left(r_i+a_m\right)} c_{A, k}^{\dagger} c_{B, k^{\prime}} +h.c.\\
&= \frac{1}{V}  \sum_{m=1}^3 \sum_{k, k^{\prime}} \left(\sum_i e^{i\left(k-k^{\prime}\right) \cdot r_i} \right) e^{-i k^{\prime} \cdot a_m}c_{A, k}^{\dagger} c_{B, k'} +h.c.\\
&= \sum_k \sum_{m=1}^3 e^{-i k \cdot a_m} c_{A, k}^{\dagger} c_{B, k} +h.c.\\
&= \sum_k\sum_{i=1}^3 \cos \left(k \cdot a_i\right) \sigma_1 + \sum_k\sum_{i=1}^3 \sin \left(k \cdot a_i\right) \sigma_2
\end{aligned}
\end{equation}

In the second line, $r_i$ represent the lattice vectors, $a_m$ represent nearest-neighbor vectors and $V$ is the total number of lattice points. To go from line 4 to line 5, we used $\sum_i e^{i\left(k-k^{\prime}\right) \cdot r_i}=V \delta_{k, k^{\prime}}$. Rewriting the Hamiltonian $H(k)$ for each momentum $k$ in terms of \eqref{equation:basishoneycomb} the Hamiltonian will be a $2\times2$ Hermitian matrix 
and can be expanded in terms of Pauli matrices and the identity matrix (denoted  here by $\sigma_0$): 
\begin{align} \label{equation:h(k)}
    H (k) &=  h_{0}({k})  \sigma_0 +\sum_i h_i({k}) \sigma_i  \\
    h_0({k}) &= 2 t_2 \cos \phi \sum_{i=1}^3 \cos \left(k \cdot b_i\right),  \quad
    h_{1}({k}) = \sum_{i=1}^3 \cos \left(k \cdot a_i\right)\nonumber\\  
     h_{2}({k}) &=\sum_{i=1}^3 \sin \left(k \cdot a_i\right), \quad
    h_{3}({k}) = m - 2 t_2 \sin(\phi) \sum_{i=1}^3 \sin \left(k \cdot b_i\right)\nonumber
\end{align}
The vectors $\{a_i,b_i\}$ are the first and second-nearest neighbors vectors respectively. The coefficients $h_i$ are evidently real from the hermiticity of the Hamiltonian. Further, we can ignore $h_0({k})$ altogether as it does not affect the energy Eigenstates or the topological properties of the model; see Theorem \ref{theorem:ksigma}. The Hamiltonian in momentum space defines a map from the periodic 2D Brillouin zone $T^2$ to $\R^3$. If the spectrum is non-degenerate, which we demand for an insulator, then it is a map to $\R^3 \setminus\{0\}$. The first Chern class can then be computed using the pull back of this map; see Theorem \ref{theorem:ksigma}. 

The Eigenvalue, aka.\ energy spectrum, is $E_\pm = \pm \sqrt{h_1^2+h_2^2+h_3^2}$. It is easily seen that the Hamiltonian is gapped (non-degenerate) for any value of $\{m,t_2,\phi\}$ except at two points in the Brillouin zone: $K=\left( \frac{4\pi}{3 \sqrt{3}}, 0, 0 \right)$ and $K'=\left( -\frac{4\pi}{3 \sqrt{3}}, 0, 0 \right)$. The two terms $h_1$ and $h_2$ vanish at these two momentum values and these points are called pre-Dirac point. In the limit where $h_3$ also vanishes then they become  Dirac points and the Hamiltonian becomes degenerate. This happens along the two curves: \begin{equation}
m = \pm 3\sqrt{2}\, t_2 \sin{\phi}
\end{equation} 
These curves are the phase transition curves where the first Chern class over the ground state is not well defined. The positive branch $m = 3\sqrt{2}\, t_2 \sin{\phi}$ corresponds to crossing the origin at $K$ (the negative branch is at $K'$).  Since this is 2--band system by Theorem \ref{thm:main} there need to be at least two Dirac points and this example is minimal.
In all other regions, the first Chern number is defined for the ground state and can be computed using either the connection: $C=\frac{1}{2 \pi} \int_{B} d^2 k F_{12}(k)$ or using the mapping degree: $C=deg(h)=\frac{1}{4 \pi} \int_{B}  \hat{h}\cdot\left(\partial_{k_x}\hat{h} \times \partial_{k_y} \hat{h}\right) dk_x\, dk_y = \sum_{p\in h^{-1}(q)}sgn(\phi_p)$ as discussed in \S\ref{subsection:calcdegree}. Here, $\hat{h}$ is the normalized map $\hat{h} = \frac{(h_1,h_2,h_3)}{\sqrt{h_1^2+h_2^2+h_3^2}}$. The phase diagram can be obtained using any of the previously mentioned methods and is presented in Figure~\ref{fig:haldanephase}. A point in the phase diagram corresponding to a Chern number of $1$ will correspond to a map that covers the sphere $\hat{h}(k)$ exactly once. Such a covering is shown in Figure~\ref{fig:HaldaneSphere}. Again this is commensurate with the general theory.
Moreover  according to \ref{par:rose}, the wall crossings are between Chern numbers $0,\pm 1$ which are modeled by the circle of \S\ref{par:rose} while the wall crossing $-1$ to $1$ along the line $\frac{m}{t_2}$ is a collapse to the interval, see {\it loc.\ cit.}.

\begin{figure}[h]
    \centering
    \includegraphics[scale=0.60]{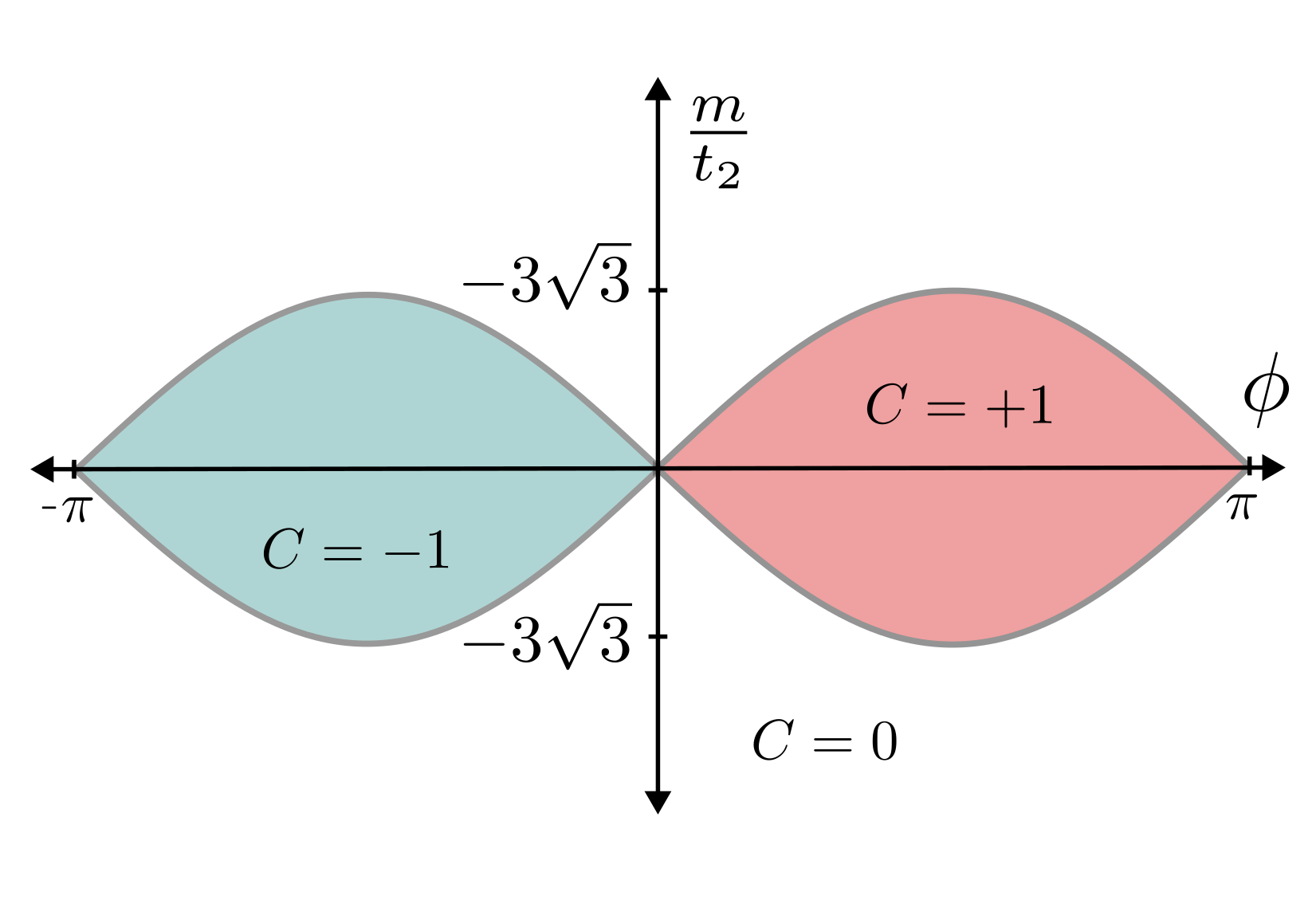}
    \caption{A slice of the 3-parameter family of Hamiltonians with \(t_2 = 1 \) in the Haldane Hamiltonian Eq.~\eqref{equation:HaldaneH}. The Chern number takes values {0,-1,1} for different values of the Hamiltonian parameters $\phi$ and $m$.}
    \label{fig:haldanephase}
\end{figure}
\begin{figure}[h]
    \centering
    \includegraphics[scale=0.6]{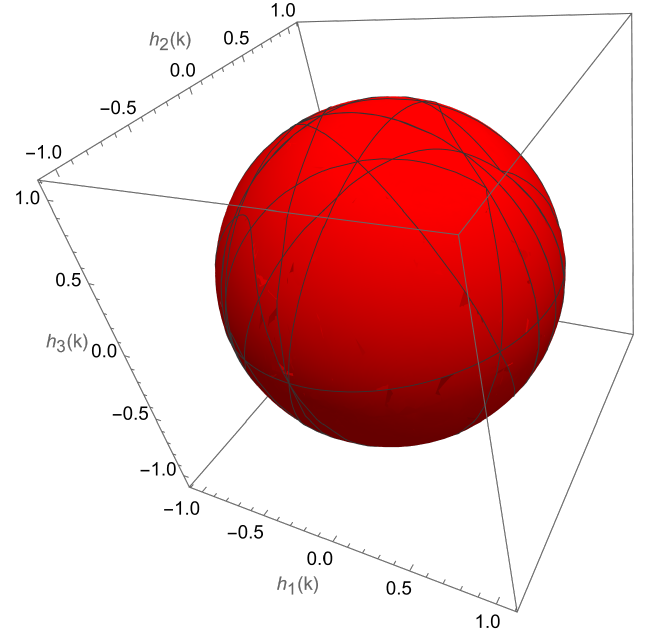}
      \caption{The image of the map $\hat{h}(k)$ from the Brillouin zone to the normalized Hamiltonian parameters for $m=0$ and $\phi=\frac{\pi}{2}$. This point corresponds to Chern number $C=1$ in the phase diagram in Figure~\ref{fig:haldanephase}. Hence, it shows a single covering of the sphere. }
    \label{fig:HaldaneSphere}
\end{figure}
\subsection{Physical Models with Higher Chern numbers $|C| > 1$}
\label{par:commensurate}
In this section, using concrete examples, we show how one can indeed achieve higher Chern numbers by using commensurate higher neighborhood interactions as discussed in \S\ref{par:supercell} using the classification of commensurate lattices given above in \ref{theorem:main2}. 
The two main physical examples are the square and triangular lattices as well as those based upon them, like the hexagonal and Kagome lattice.
Building a Hamiltonian with coupling such higher order terms  gives us access to phase transitions between topological phases with arbitrarily high Chern numbers.





\subsubsection{Square lattice}
A simple Hamiltonian that realizes a Chern insulator on a square lattice was introduced in \cite{BHZ2006}. The model lives on a square lattice with two orbitals per atom. It represents the topological insulator mercury-telluride when we restrict to spin-up electrons \cite{BHZ2006,zoology2012}. The Hamiltonian explicitly reads:

\begin{align}
H = & -\frac{t_1}{2}\sum_i \Big\{ \Big[ c_{i,A}^\dagger\, c_{A,i+\hat{x}} - i\, c_{i,A}^\dagger\, c_{B,i+\hat{x}} - i\, c_{i,B}^\dagger\, c_{A,i+\hat{x}} - c_{i,B}^\dagger\, c_{B,i+\hat{x}} \Big] \nonumber \\
& \quad + \Big[ c_{i,A}^\dagger\, c_{A,i+\hat{y}} - c_{i,A}^\dagger\, c_{B,i+\hat{y}} + c_{i,B}^\dagger\, c_{A,i+\hat{y}} - c_{i,B}^\dagger\, c_{B,i+\hat{y}} \Big] + \text{h.c.} \Big\} \nonumber \\
& + m \sum_i \Big[ c_{i,A}^\dagger\, c_{i,A} - c_{i,B}^\dagger\, c_{i,B} \Big].
\end{align}

Here, $A$ and $B$ are the two orbitals of each atom. It can be written more succinctly as:
\begin{align} \label{equation:squareHshort}
H & =-t_1 \sum_i \left(c_i^{\dagger} \frac{\left(\sigma_z-i \sigma_x\right)}{2} c_{i+\hat{x}}+c_i^{\dagger} \frac{\left(\sigma_z-i \sigma_y\right)}{2} c_{i+\hat{y}}+\text { h.c. }\right) \\ \nonumber
& +m \sum_i c_i^{\dagger} \sigma_z c_i
\end{align}
The creation operators for each site 
$c_i^{\dagger}=\left(c_{A i}^{\dagger} \quad c_{B i}^{\dagger}\right)^T$ 
are now vectors because each site has two orbitals 
and the Hilbert space again splits as 
$\mathcal{H}_{A}\oplus \mathcal{H}_{B}$. 
The model is shown in Figure \ref{fig:square1nn}. The two 
interaction matrices are explicitly: $t_{x1} = -
t\frac{\left(\sigma_z-i \sigma_x\right)}{2}$ and $t_{y1} = -t\frac{\left(\sigma_z-i \sigma_y\right)}{2}$. The model again breaks 
time-reversal symmetry without external magnetic field similar to the Haldane model Eq.~\eqref{equation:HaldaneH}. In momentum space the Hamiltonian becomes:

\begin{equation}
\begin{aligned}
H(k) &=  {h}(k) \cdot {\sigma}, \text{ with}\quad h_1(k) = t_1 \sin k \cdot a_1,\\
h_2(k) &= t_1 \sin k \cdot b_1,\quad h_3(k) = m - t_1 \cos k \cdot a_1 - t_1 \cos k \cdot b_1.
\end{aligned}
\end{equation}
Here, $a_1$($b_1$) is the displacement vector to the nearest neighbor in the positive $x$($y$) direction.
\begin{figure}[h]
    \centering
    \includegraphics[scale=2]{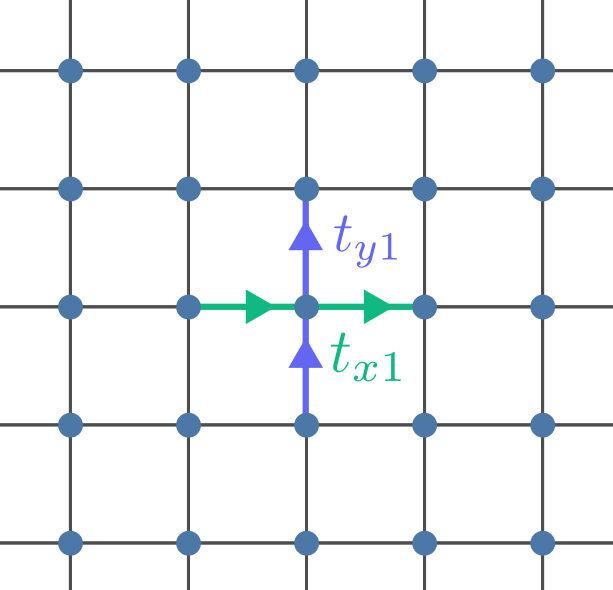}
    \caption{A model for a topological insulator on a square lattice Eq.~\eqref{equation:squareHshort}. Each atom has two orbitals so the interaction terms $t_{x1}$ and $t_{y1}$ are $2\times2$ Hermitian matrices.}
    \label{fig:square1nn}
\end{figure}
The ground state of the model with $\frac{m}{t_1}=-1$ has a unit Chern number \cite{zoology2012}. The Chern number can be calculated through any of the methods in \S\ref{subsection:calcdegree}, and the phase diagram is presented in Figure ~\ref{fig:squarephase}.
\begin{figure}[h]
    \centering
    \includegraphics[scale=0.6]{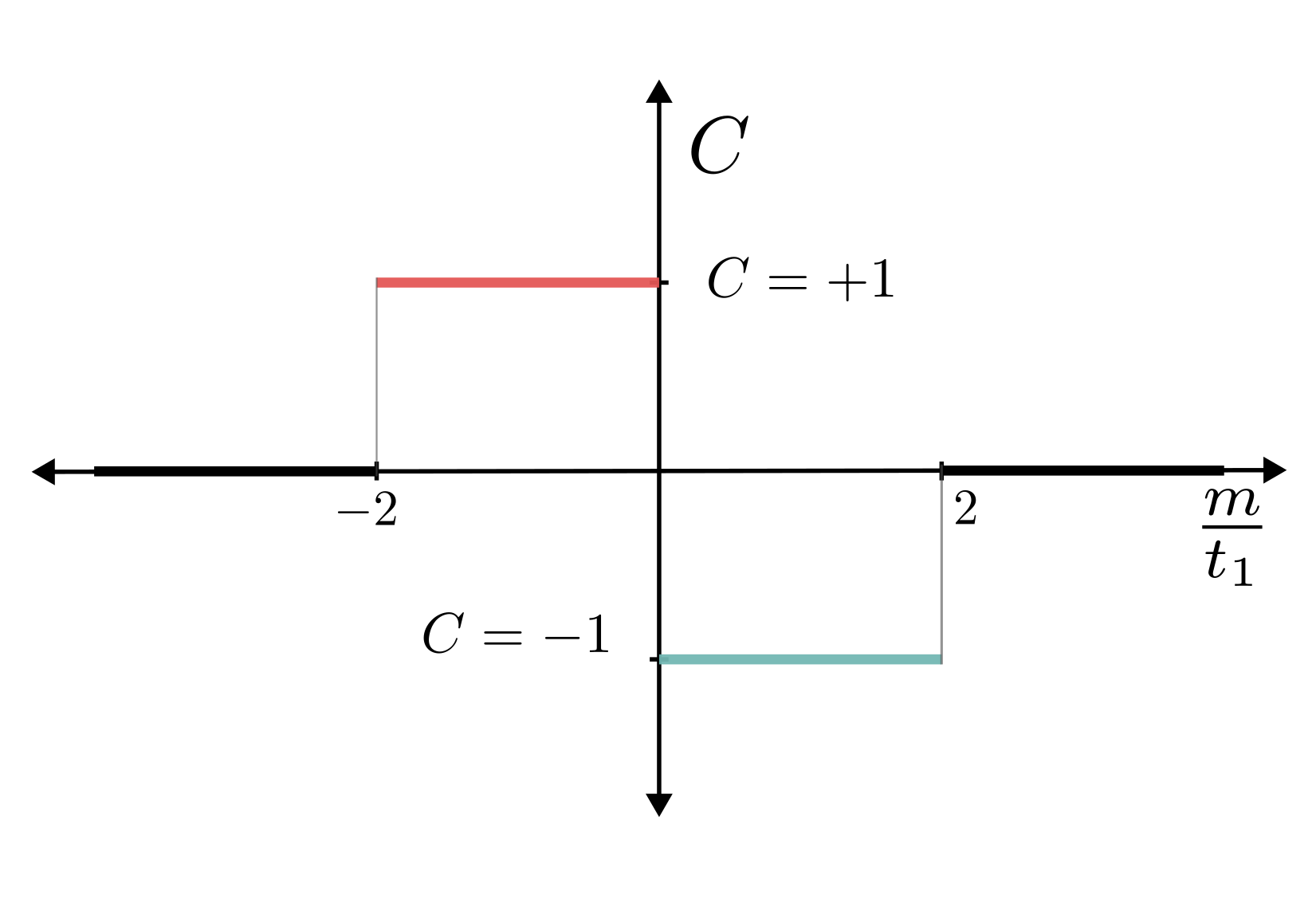}
    \caption{One-dimensional phase diagram of the model for square lattice Eq.~\eqref{equation:squareHshort}. The model is gapless for $\frac{m}{t_1} \in \{0,\pm2\}$, otherwise it is gapped.}
    \label{fig:squarephase}
\end{figure}

\subsubsection{Criteria for choosing distant neighbors in a square lattice}
\label{par:squarecriteria}
The higher order neighborhoods are best handled through identifying the lattice with the Gaussian integers $\Z+i\Z\subset \C$, see \S\ref{subsection:gaussian}. For example, the number of nearest neighbors at a certain distance can be translated into the number of preimages of the norm function of that distance; see corollary~\ref{corollary:square4nns}. We can then extend the range of hopping to any of the distances $N \in \{2,3,4,6,7,8,9,11,\cdots \}$ and obtain a model with Chern numbers $N^2$. The general form of Hamiltonian with Chern number $N^2$ is:
\begin{equation}  
\label{equation:squareN2} 
\begin{aligned}
h_{1}^{[C=N^2]}(k) &=\sin ( k \cdot N a_1) = \sin (N k \cdot a_1), \quad h_{2}^{[C=N^2]}(k) = \sin (k \cdot N b_1) = \sin (N k \cdot b_1), \\
h_{3}^{[C=N^2]}(k) &= \frac{m}{t_N} - \cos ( k \cdot N a_1) - \cos ( k \cdot N b_1)= \frac{m}{t_N} - \cos (N k \cdot a_1) - \cos (N k \cdot b_1).
\end{aligned}
\end{equation}

 Where we emphasized that the distant neighbors effectively implement the mapping $f_{d_1,d_2}:T^2\to T^2, (\theta_1,\theta_2)\mapsto (d_1\theta_1, d_2\theta_2)$ whose degree is $d_1d_2$. Here, $d_1=d_2=N$ and the resulting Hamiltonian has a Chern number $C=N^2$ \S\ref{subsec:momentumspace}. The structure of the nearest neighbors in a square lattice is shown in Figure ~\ref{fig:squarenns}. For distances $d\in \{1,2,3,4\}$, there is no prime $p\equiv 1\pmod{4}$ as a factor of $d$. Consequently, for these distances we recover the same structure of neighbors as in the original model Eq.~\eqref{equation:squareHshort}. Thus they represent valid choices for $N$ in Eq.~\eqref{equation:squareN2}.

\begin{figure}[h]
    \centering
    \includegraphics[scale=2]{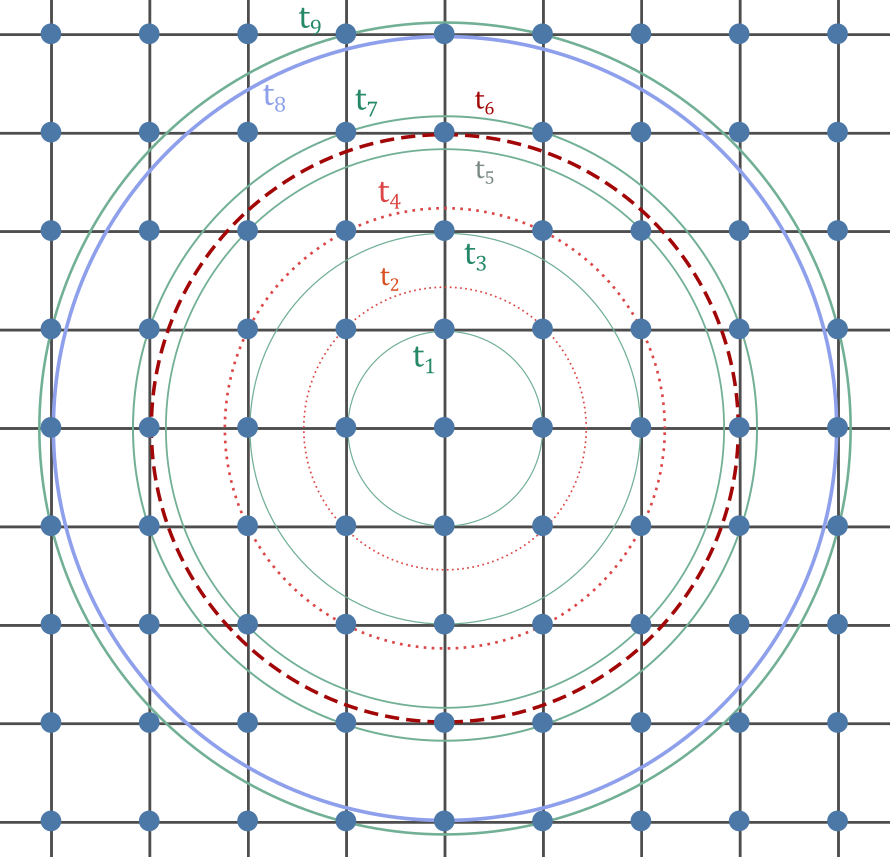}
    \caption{The structure of the nearest-neighbor interactions in a square lattice. For certain ranges ($d\in \{1,2,3,4,6,\cdots\}$) the structure of the first-nearest neighbors appears again. For example, $t_1,t_3,t_6\text{ and } t_8$ have the same structure as $t_1$. They can be used to construct models with Chern numbers $C = 1,4,9,16$, respectively.}
    \label{fig:squarenns}
\end{figure}

\subsubsection{Triangular Lattice}
The next fundamental example is the case of a triangular lattice. The basic Hamiltonian realizing a Chern insulator on a triangular lattice, similar to Haldane's choice \cite{Sarma2012}, is shown in Fig.~\ref{fig:triangularmodel}. It has real nearest-neighbor interaction $t_1$ and complex next-nearest-neighbor interaction $t_2e^{\pm i \phi}$ along with an on-site potential $m$.
\begin{equation}
H=-t_1 \sum_{\langle i, j\rangle} c_{i,A}^{\dagger} c_{j,B}-t_2 \sum_{\langle\langle i, j\rangle\rangle} e^{i \phi_{i j}}\left(c_{i,A}^{\dagger} c_{j,A}+c_{i,B}^{\dagger} c_{j,B}\right) + m \sum_i \left( c_{i,A}^{\dagger}c_{i,A}-c_{i,B}^{\dagger}c_{i,B} \right)
\end{equation}
In this model each atom has two orbitals $A$ and $B$. The second nearest-neighbor hopping $t_2$ has a phase that depends on the the two orbitals as shown in Fig.~\ref{fig:triangularmodel}. 

\begin{figure}[h]
    \centering
    \includegraphics[scale=1.9]{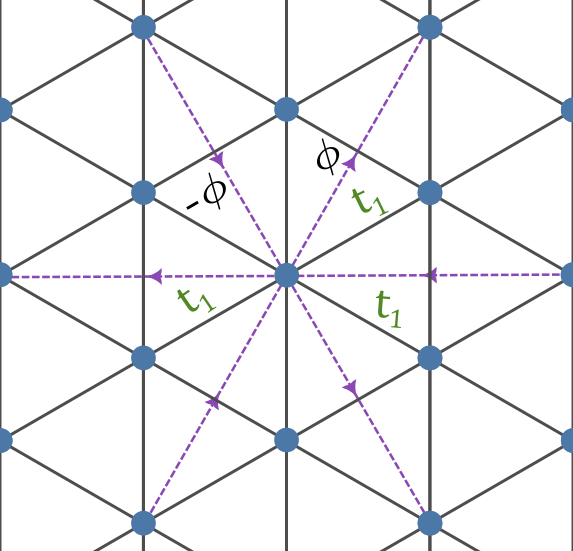}
    \caption{Simple model for a Chern insulator on a triangular lattice. The phase $\phi$ is positive if it is aligned with the arrows and it is from $A$ to $B$ orbitals. Flipping the arrow or the order of the orbitals flips the sign of $\phi$. }
    \label{fig:triangularmodel}
\end{figure}

In momentum space the model reads:
\begin{equation}
\begin{aligned}
    H({k}) &= {h}({k}) \cdot {\sigma}, \\
    h_1({k}) &= -\frac{1}{2} t_1 \sum_{i=1}^{6} \cos\Bigl({k}\cdot{a}_i\Bigr), \quad
    h_2({k}) = -\frac{1}{2} t_1 \sum_{i=1}^{6} (-1)^i \sin\Bigl({k}\cdot{a}_i\Bigr), \\
    h_3({k}) &= m - t_2 \sum_{i=1}^{6} (-1)^i \sin\Bigl({k}\cdot{b}_i\Bigr).
\end{aligned}
\end{equation}\label{equation:triangularH}

Here, $a_i$ and $b_i$ denote the first and second-nearest neighbors vectors. The model in momentum space is the same as the Haldane model Eq.\eqref{equation:HaldaneH}. The only difference is that the Brillouin zone is bigger than the Honeycomb lattice by a factor of $3$. This means that the Chern number of the ground state for $t_1 =1$ is three times that of the same parameters in the Haldane model; see Fig.~\ref{fig:haldanephase}. The phase diagram is then shown in Fig.~\ref{fig:triangular3-3}.

\begin{figure}[h]
    \centering
    \includegraphics[scale=0.6]{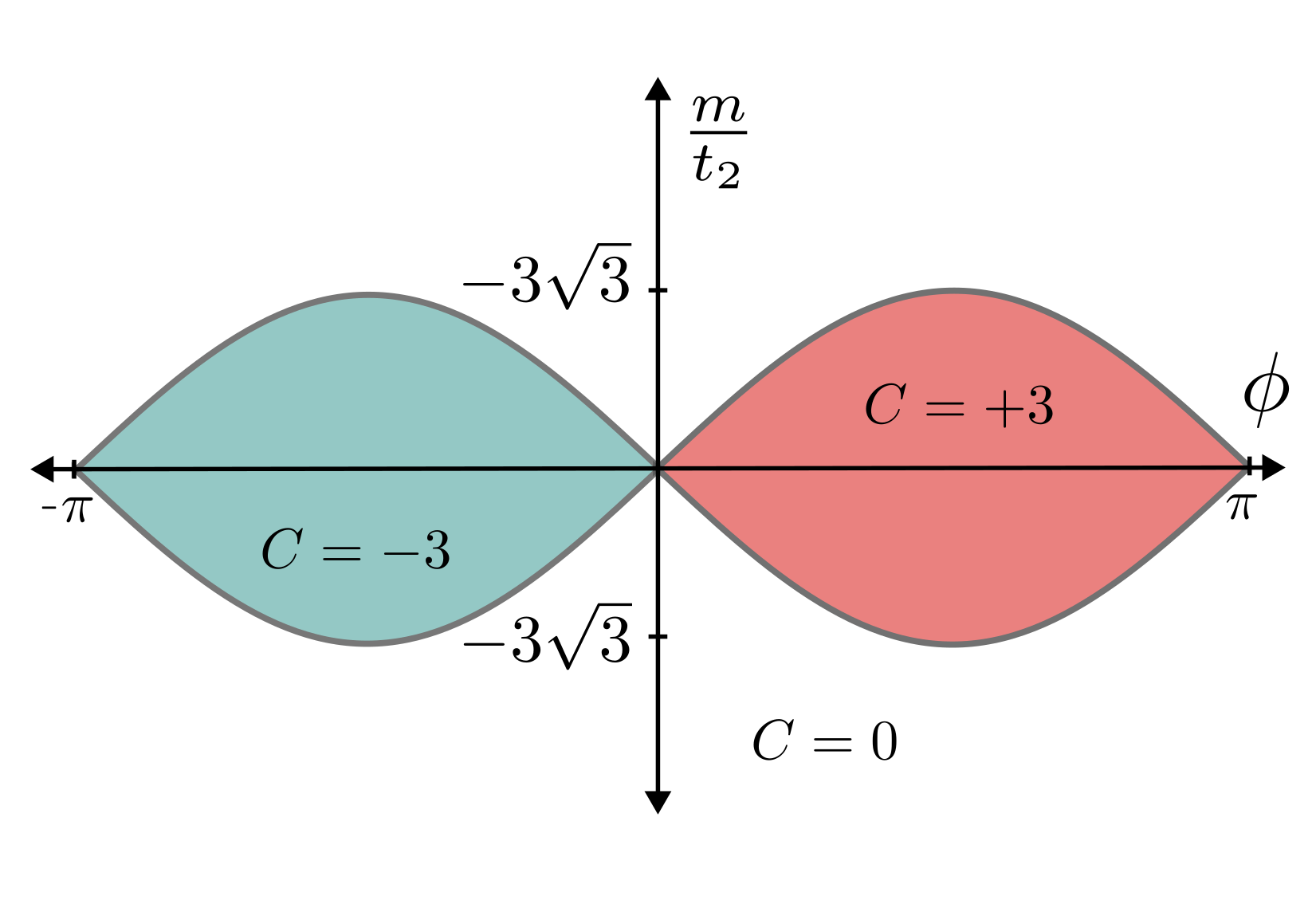}
    \caption{Phase diagram of a modified Haldane model on a triangular lattice  with \( t_1 = 1\) in Eq.~\eqref{equation:triangularH}. The Chern number takes higher values \{-3,3\} as it is 3 times that of the Honeycomb lattice; compare with Fig.~\ref{fig:haldanephase}.}
    \label{fig:triangular3-3}
\end{figure}
Given this base Hamiltonian, we can construct a new Hamiltonian with a distant hopping that realizes a higher Chern number as follows:
\begin{equation}
\begin{aligned}
    h_1^{[C=N^2]}({k}) &= -\frac{1}{2} t_1 \sum_{i=1}^{6} \cos\Bigl({Nk}\cdot{a}_i\Bigr), \quad
    h_2^{[C=N^2]}({k}) = -\frac{1}{2} t_1 \sum_{i=1}^{6} (-1)^i \sin\Bigl({Nk}\cdot{a}_i\Bigr), \\
    h_3^{[C=N^2]}({k}) &= m - t_2 \sum_{i=1}^{6} (-1)^i \sin\Bigl({Nk}\cdot{b}_i\Bigr).
\end{aligned}
\end{equation}\label{equation:triangularN2}

As before, the distant neighbors effectively implement the mapping $f_{d_1,d_2}:T^2\to T^2, (\theta_1,\theta_2)\mapsto (d_1\theta_1, d_2\theta_2)$ whose degree is $d_1d_2$. Here, $d_1=d_2=N$ and the resulting Hamiltonian has a Chern number $C=3N^2$ since we started with a Hamiltonian with degree 3 \S\ref{subsec:momentumspace}. 

The wall crossings are from $0$ to $\pm 3$ given by circles, but now the origin is crossed $3$ times, which leaves three pre-Dirac points after crossing to either of the two phases with a non-zero Chern number.


\subsubsection{A second basic model for triangular lattices}  
We also note that in triangular lattices, we do not need to preserve the whole structure of the original Hamiltonian. We can also construct models that only preserve the structure of the $t_1$ interaction terms while keeping the $t_2$ terms the same. These models do not arise as an effective composition with a higher degree map. They introduce an intrinsically new map that has a higher degree. The new Chern number will be $N$ times the original one. We will discuss the proof of the degree of such maps in the very similar model of honeycomb lattices; see \eqref{equation:haldaneN}. We also note that such models should be of practical value as the complex interactions $t_1$ are harder to implement experimentally. The new Hamiltonian will be:
\begin{equation}
\begin{aligned}
    h_1^{[C=N]}({k}) &= -\frac{1}{2} t_1 \sum_{i=1}^{6} \cos\Bigl({Nk}\cdot{a}_i\Bigr), \quad
    h_2^{[C=N]}({k}) = -\frac{1}{2} t_1 \sum_{i=1}^{6} (-1)^i \sin\Bigl({Nk}\cdot{a}_i\Bigr), \\
    h_3^{[C=N]}({k}) &= m - t_2 \sum_{i=1}^{6} (-1)^i \sin\Bigl({k}\cdot{b}_i\Bigr).
\end{aligned}
\end{equation}\label{equation:triangularN}

\subsubsection{Criteria for choosing distant neighbors in a triangular lattice} 
\label{par:trianglecriteria}
Putting any integer $N$ in Eq.~\eqref{equation:triangularN2} will give a valid Hamiltonian with a Chern number $3N^2$. However, this Hamiltonian will not correspond to the tight-binding Hamiltonian at that new distance as the number and direction of neighbors change with distance in a triangular lattice as presented in Fig.~\ref{fig:triangularnns}. At certain integer multiples of the distances of the original model though, as discussed in \S\ref{subsection:eisenstein}, the distant neighbors have the same number and direction of the original one. These integers are of the form \( N = 3^{b_0} p_1^{b_1} p_2^{b_2} \dots p_c^{b_c} \), where \( N \) is the product of rational inert primes \( p_i \equiv 2 \pmod{3} \). The first few integers are then $N\in \{1,2,3,4,5,6,9,10,\cdots\}$. If we multiply the range of interactions $t_1$ and $t_2$ by these integers we will get a new model that does not break the symmetry of the lattice and that has $C=N^2$ times the original one. If we only multiply the range of $t_1$ terms we get a Hamiltonian that has $C=N$ times the original one.
 
\begin{figure}[h]
    \centering
    \includegraphics[scale=1.3]{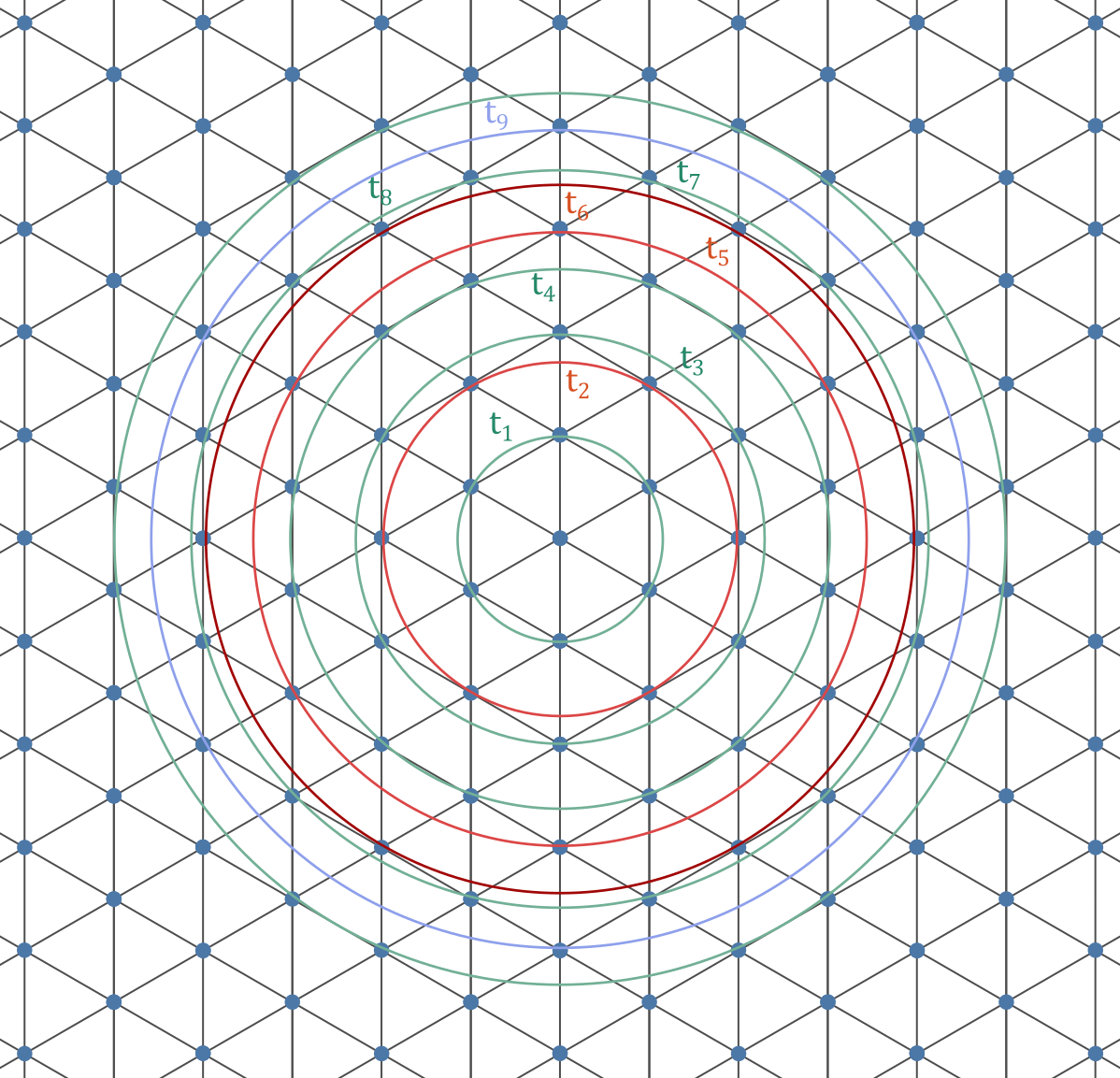}
    \caption{The structure of distant neighbors in a triangular lattice. The number and orientation of nearest neighbors evidently vary with distance. One way to obtain commensurate models with higher degree is to consider neighbors at distances $2$ times that of nearest neighbors. This corresponds to $t_1 \mapsto t_3$ and $t_2 \mapsto t_6$. This implements Eq.~\eqref{equation:triangularN2} with $N=2$ and will result in a model with Chern number $C=(2^2) \cdot 3=12$ since the original model with $N=1$ had $C=3$. Another construction is to leave $t_2$ the same and change the interaction $t_1 \mapsto t_3 \ , \  t_5 \text{ or } t_9 $. This will implement Eq.~\eqref{equation:triangularN} with $N=2,3$ and $4$ respectively, and will result in models with Chern numbers $6,9$ and $12$.  }
    \label{fig:triangularnns}
\end{figure}

\subsubsection{Honeycomb Lattice}
 The honeycomb lattice received a lot of attention as it was the primary example of a tight-binding topological insulator without external magnetic field \cite{haldane88}. Many physics papers studied instances of increasing the range of hopping to obtain higher Chern numbers \cite{Bena2011,Sticlet2013,Mondal2022}. As an example, for the third nearest neighbor we have the following Hamiltonian: 
\begin{align} \label{equation:h3(k)}
      h_{1}^{3^{rd}NN}({k}) &= \sum_{i=1}^3 \cos \left(k \cdot a_i\right)+t_3\sum_{i=1}^3 \cos \left(k \cdot c_i\right) \nonumber\\
    h_{2}^{3^{rd}NN}({k}) &=\sum_{i=1}^3 \sin \left(k \cdot a_i\right)+t_3\sum_{i=1}^3 \sin \left(k \cdot c_i\right) \nonumber\\
    h_{3}^{3^{rd}NN}({k}) &= m - 2 t_2 \sin(\phi) \sum_{i=1}^3 \sin \left(k \cdot b_i\right)
\end{align}
Here, $a_i$ and $b_i$ denote the first and second-nearest neighbors vectors respectively while the new vectors $c_i$ are the third-nearest neighbors. The phase diagram is shown in Figure ~\ref{fig:haldane3nn}. The Hamiltonian remains $2\times2$ and the $h_3$ component remained the same because the third nearest neighbors are of the opposite species to the central atom. The third-nearest neighbor model hosts a phase with a Chern number $C=2$ \cite{Sticlet2013}. 
\begin{figure}[h]
    \centering
    \includegraphics[scale=0.6]{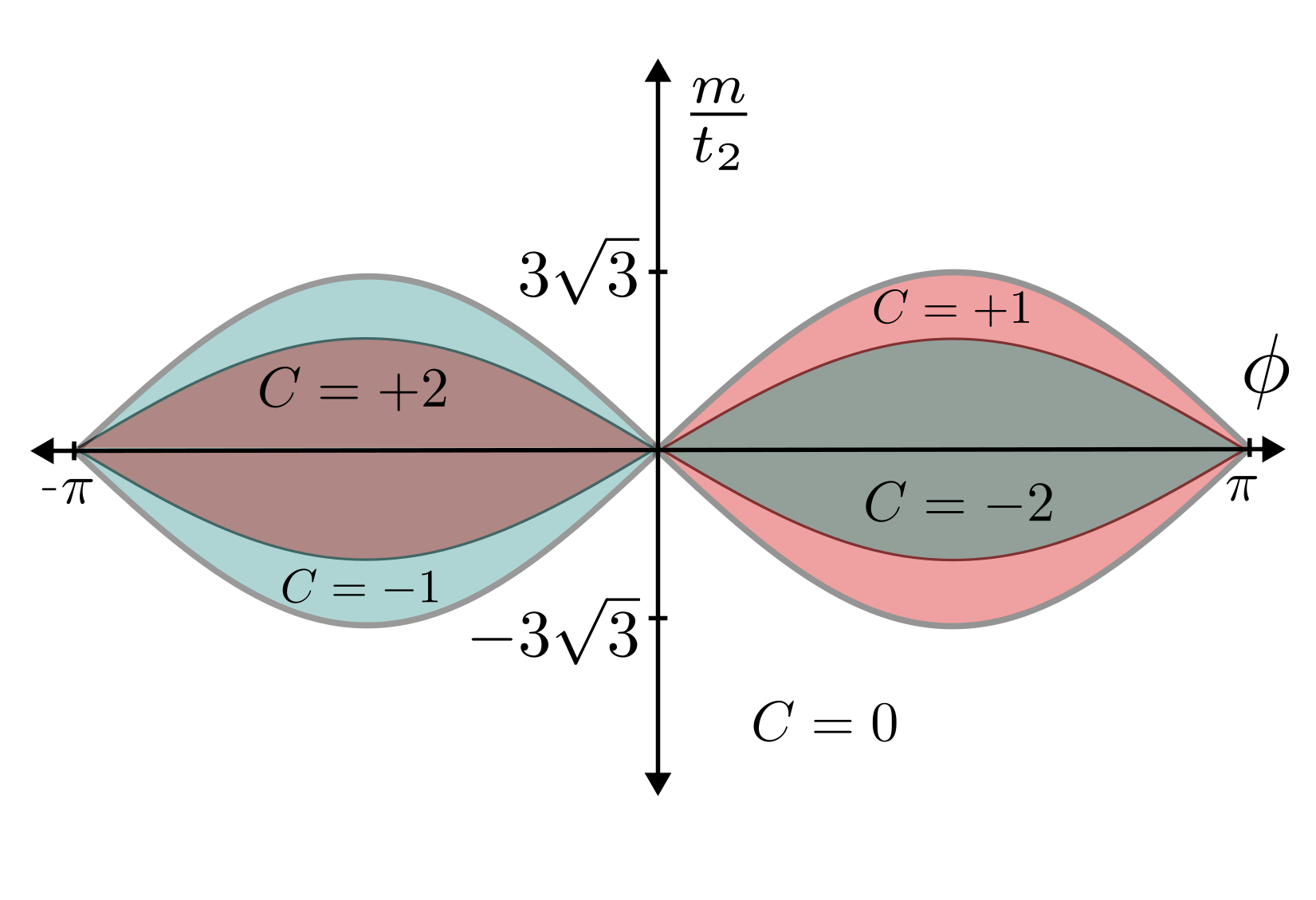}
    \caption{Phase diagram of the Haldane model with third-nearest neighbors \(  t_2 =0.5, \ t_3=0.35 \) in Eq.~\eqref{equation:h3(k)}. The Chern number takes higher values {-2,2} for a specific region of parameters $\phi$ and $m$.}
    \label{fig:haldane3nn}
\end{figure}
 From the correspondence between the Chern number and the mapping degree, the image of the mapping $H(k): T^2\to \R^3$ has to have degree $2$ for the corresponding points in the phase diagram Figure~\ref{fig:haldane3nn}. For example, the image of the map for $m=0$, $\phi=\frac{\pi}{2}$ and $t_3=0.35$ is shown in Figure~\ref{fig:rosec-2}. The map's restriction to the plane $h_3(k)=0$ is isotopic to the rose curve shown in Figure~\ref{fig:D} with $d=1$, $d^{\prime}=-2$ and $t=\frac{1}{4}$.
 Here the isotopy preserves the intersection points and fixes these at $0$.
\begin{figure}[h]
    \centering
    \includegraphics[scale=0.6]{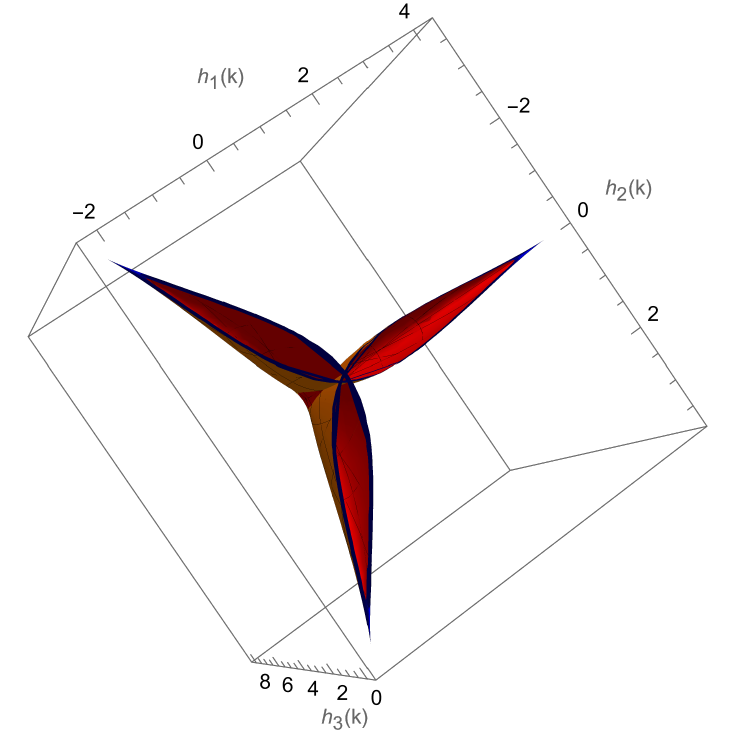}
    \caption{The image of the map $H(k):T^2\to\R^3$ for $m=0$, $\phi=\frac{\pi}{2}$, $t_2=0.5$ and $t_3=0.35$. The map has a degree $2$ as any ray starting from origin will intersect the surface twice. The highlighted plane $h_3(k)=0$ is isotopic to the rose curve in Figure~\ref{fig:D} also fixing the origin as the sole crossing point in the isotopy}
    \label{fig:rosec-2}
\end{figure}

We remark that calculating the Chern number using the connection involves numerically evaluating a complicated integral. Analytically, we can compute the map degree from the ray method. However, this involves solving increasingly higher-degree polynomials which becomes impractical after the the fourth nearest-neighbor \cite{Sticlet2013}. Moreover, the construction of these high Chern number models depend on tuning the interaction parameters to obtain a slice with high Chern numbers. This becomes increasingly harder as the dimension of the parameter space increases with each new interaction. However, applying the method of commensurate sublattices we obtain Hamiltonians with higher Chern number for Graphene in a simpler way. The resulting Hamiltonian when considering neighbors that are $N$ times further than first-nearest neighbors is then: 

\begin{align} \label{equation:haldaneN2}
      h_{1}^{[C=N^2]}({k}) &= \sum_{i=1}^3 \cos \left(k \cdot Na_i\right) = \sum_{i=1}^3 \cos \left(Nk \cdot a_i\right) \nonumber\\
    h_{2}^{[C=N^2]}({k}) &=\sum_{i=1}^3 \sin \left(k \cdot Na_i\right)=\sum_{i=1}^3 \sin \left(Nk \cdot a_i\right)  \nonumber\\
    h_{3}^{[C=N^2]}({k}) &= m - 2 t_2 \sin(\phi) \sum_{i=1}^3 \sin \left(k \cdot Nb_i\right) = m - 2 t_2 \sin(\phi) \sum_{i=1}^3 \sin \left(Nk \cdot b_i\right)   \nonumber\\
\end{align}

 Where we emphasized that the distant neighbors effectively implement the mapping $f_{d_1,d_2}:T^2\to T^2, (\theta_1,\theta_2)\mapsto (d_1\theta_1, d_2\theta_2)$ whose degree is $d_1d_2$. Here, $d_1=d_2=N$ and the resulting Hamiltonian has a Chern number $C=N^2$. Concretely, in the Haldane model, we can take the interactions at a distance $N=4$ multiples of the original. Assuming that the side of the hexagon has length $1$, then we consider the blue atoms at distance $4$ and the red atoms at distance $4\sqrt{3}$. It can be checked explicitly that there are only three blue atoms and six red atoms at these distances. Further, their vectors are just an integer multiple of the original Haldane model with distances $1,\sqrt{3}$; see Figure ~\ref{fig:Haldanefarnns}. 
 
 \subsubsection{Criteria for choosing distant neighbors in honeycomb lattice}
 \label{par:honeycriteria}
 The honeycomb lattice can be viewed as a subset of a triangular lattice with two species (the red and blue circles) while missing the third species (the hollow purple circles) as depicted in Figure ~\ref{fig:Haldanefarnns}. The 
 criteria for choosing $N$ (the integer multiple of the distant neighbors) again follows the recipe for a triangular lattice \S\ref{subsection:eisenstein} but, with 
 the additional constraint that we want to avoid fictitious lattice points at that distance. For example, we cannot extend the $A-B$ $t_1$ interaction by either $N=2,5$ or more generally $N>0$ and $ N  \equiv 2 \pmod{3}$ as at these distances there are no atoms in the same direction as the first nearest neighbors. However, for the negative values $N=-2,-5$, we have atoms at these distances and we can choose $N$ in these numbers. Further, since the Chern number is odd in $N$ and $\phi$, flipping both of them produce models with a positive Chern number.

\begin{figure}[h]
    \centering
    \includegraphics[scale=1.3]{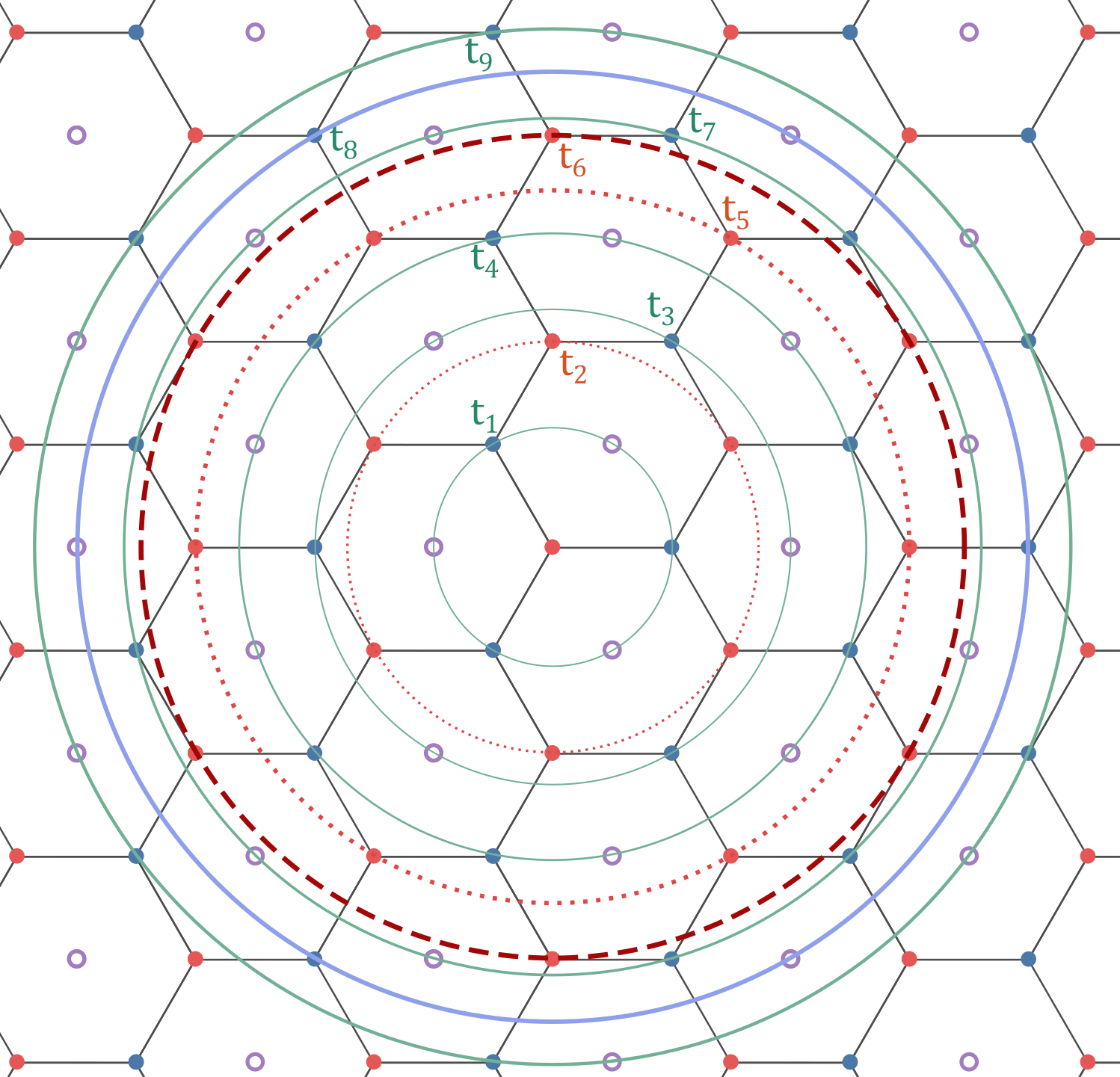}
    \caption{The structure of distant neighbors in a Honeycomb lattice. The number and orientation of nearest neighbors vary with distance. The original Haldane model used $t_1,\ t_2$ interactions. The hollow purple circles are fictitious points drawn to show that the honeycomb lattice is a subset of a triangular lattice. One way to obtain commensurate interactions with higher degree is to consider neighbors at distances $2$ or $4$ times that of nearest neighbors ($t_1$ terms); these correspond to $t_3$ or $t_8$. This will result in Chern number $2,\ 4$, respectively. These are instances of Eq.~\eqref{equation:haldaneN}.}
    \label{fig:Haldanefarnns}
\end{figure}

\subsubsection{A second basic model for Honeycomb lattices} 
In the previous subsection, we implemented the construction of higher-Chern-number models by looking for distant neighbors that preserve the structure of the original Hamiltonian. In honeycomb lattices, we can also construct models that only preserve the $A-B$ interaction terms. Such models should be of practical value as the complex interactions $A-A$ and $B-B$ are harder to implement experimentally. The new Hamiltonian will be:
\begin{align} \label{equation:haldaneN}
      h_{1}^{[C=N]}({k}) &= \sum_{i=1}^3 \cos \left(Nk \cdot a_i\right), \nonumber\quad    h_{2}^{[C=N]}({k}) = \sum_{i=1}^3 \sin \left(Nk \cdot a_i\right)  \nonumber\\
    h_{3}^{[C=N]}({k}) &= m - 2 t_2 \sin(\phi) \sum_{i=1}^3 \sin \left(k \cdot b_i\right)
\end{align}
We can compute the Chern number of this model using the ray method; see \S\ref{subsection:calcdegree}. We draw a ray from origin in the z direction and count how many times it intersects the surface generated by the Hamiltonian $H(k)$ in $\R^3$ as the momentum traverses the $T^2$ Brillouin zone. We can also extend the ray into a straight line and divide by 2. \begin{equation}
C_1 = \sum_{p \in h^{-1}(q)} \operatorname{sgn}(\phi_p)
= \frac{1}{2} \sum_{\substack{k \in \text{BZ} \\ h_1(k)=h_2(k)=0}} 
\operatorname{sgn}\!\Bigl[(\partial_{k_x}h(k) \times \partial_{k_y}h(k))\cdot \hat{z}\Bigr]
\,\operatorname{sgn}\bigl[h_3(k)\bigr].
\end{equation}

The points satisfying $h_1(k)=h_2(k)=0$ are the pre-Dirac points \S\ref{subsection:calcdegree}. In the original model with $N=1$, these pre-Dirac points form on a honeycomb lattice with translation vectors: $\vec{g}_1 = \left( \frac{2\pi}{\sqrt{3}}, \, \frac{2\pi}{3} \right), \quad
\vec{g}_2 = \left( -\frac{2\pi}{\sqrt{3}}, \, \frac{2\pi}{3} \right)$. 
Restricting to the first Brillouin Zone, we find only two terms $K=(\frac{4\pi}{3\sqrt{3}},0)$ and $K'=(\frac{-4\pi}{3\sqrt{3}},0)$ shown as blue and red circles respectively in Figure ~\ref{fig:HaldanefarBZ}. Since the new Hamiltonian terms ($h_1(k) \text{ and }h_2(k) $) are formed by composing the old terms by the map $k \mapsto Nk$, the new pre-Dirac points consist of the old pre-Dirac points but in an extended BZ. Since the new BZ is $N^2$ larger, it has $N^2$ more pre-Dirac points. For a model with a general $N$, the new $N^2$ pre-Dirac points are explicitly:\begin{equation}
\scalebox{0.95} {$
\{k  \mid  h_{1}^{[C=N]}({k})= h_{2}^{[C=N]}({k})=0\}=\left\{
\frac{K + m\,\vec{g}_1 + n\,\vec{g}_2}{N},\; \frac{K' + m\,\vec{g}_1 + n\,\vec{g}_2}{N}
\; \Big| \; 0 \leq m,\, n \leq N-1
\right\}.
$}
\end{equation}

These vectors differ by an unimportant shift from the ones depicted in Figure ~\ref{fig:HaldanefarBZ}. Note that the sign of the first term $sgn\left[(\partial_{k_x}{h}(k)\times \partial_{k_y} {h}(k))\cdot \hat{z}\right]$ is inherited from the original BZ (or inverted if $N$ is negative). The first Chern number for the new Hamiltonian is then:
\begin{equation}
C_1 = \frac{1}{2} \sum_{\substack{0 \leq m \leq N-1 \\ 0 \leq n \leq N-1}}
\left[
\operatorname{sgn}\!\Bigl( h_3\Bigl(\frac{K + m\,\vec{g}_1 + n\,\vec{g}_2}{N} \Bigr)\Bigr)
- \operatorname{sgn}\!\Bigl( h_3\Bigl(\frac{K' + m\,\vec{g}_1 + n\,\vec{g}_2}{N} \Bigr)\Bigr)
\right].
\end{equation}
We can change the BZ for the second sum by $(\vec{g_1},\vec{g_2}) \mapsto (-\vec{g_1},-\vec{g_2})$. This is just a reordering of the old vectors. The sum then becomes \begin{equation}
C_1 = \frac{1}{2} \sum_{\substack{0 \leq m \leq N-1 \\ 0 \leq n \leq N-1}}
\left[
\operatorname{sgn}\!\Bigl( h_3\Bigl(\frac{K + m\,\vec{g}_1 + n\,\vec{g}_2}{N}\Bigr)\Bigr)
- \operatorname{sgn}\!\Bigl( h_3\Bigl(\frac{K' - m\,\vec{g}_1 - n\,\vec{g}_2}{N}\Bigr)\Bigr)
\right].
\end{equation}
We specialize to the case $m=0$ (this is $m$ in ~\eqref{equation:haldaneN} not to be confused with the dummy integer here), $\phi=\frac{\pi}{2}$ and $t_2=\frac{1}{2}$. Thus $h_3 (k) =  \sum_{i=1}^3 \sin \left(k \cdot b_i\right)$ is an odd function of $k$. This, together with the fact that $K'=-K$, simplifies the expression to: $C_1 = \sum_{0\leq m.n \leq N-1}sgn[h_3(\frac{K+m\vec{g_1}+n\vec{g_2}}{N})]$. Using the explicit form of $h_3(k)$, we arrive at:

\begin{equation}
\begin{aligned}
C_1 = \sum_{\substack{0 \leq m \leq N-1 \\ 0 \leq n \leq N-1}}
\operatorname{sgn}\Biggl[ &
\left(\cos\!\left(\frac{(m+n)\pi}{N}\right)
- \cos\!\left(\frac{(2-3m+3n)\pi}{3N}\right)\right) \\
&\times \sin\!\left(\frac{(2-3m+3n)\pi}{3N}\right)
\Biggr].
\end{aligned}
\end{equation}
We note that changing $m \leftrightarrow n$ changes the sign of the expression for $m\neq n$ and $m,n > 0$. Furthermore, the expression is negative if $m=n\neq0$ and positive when $mn=0$. These observations show that $C_1 = N$ for any $N$. It is worth noting that the Haldane model for 3rd-nearest neighbor \eqref{equation:h3(k)} is an example of this family of Hamiltonians since the 3rd-nearest neighbors have the same structure as the 1st-nearest neighbors but with $N=-2$. see Figure ~\ref{fig:haldane3nn}.
\begin{figure}[h]
    \centering
    \includegraphics[scale=1.3]{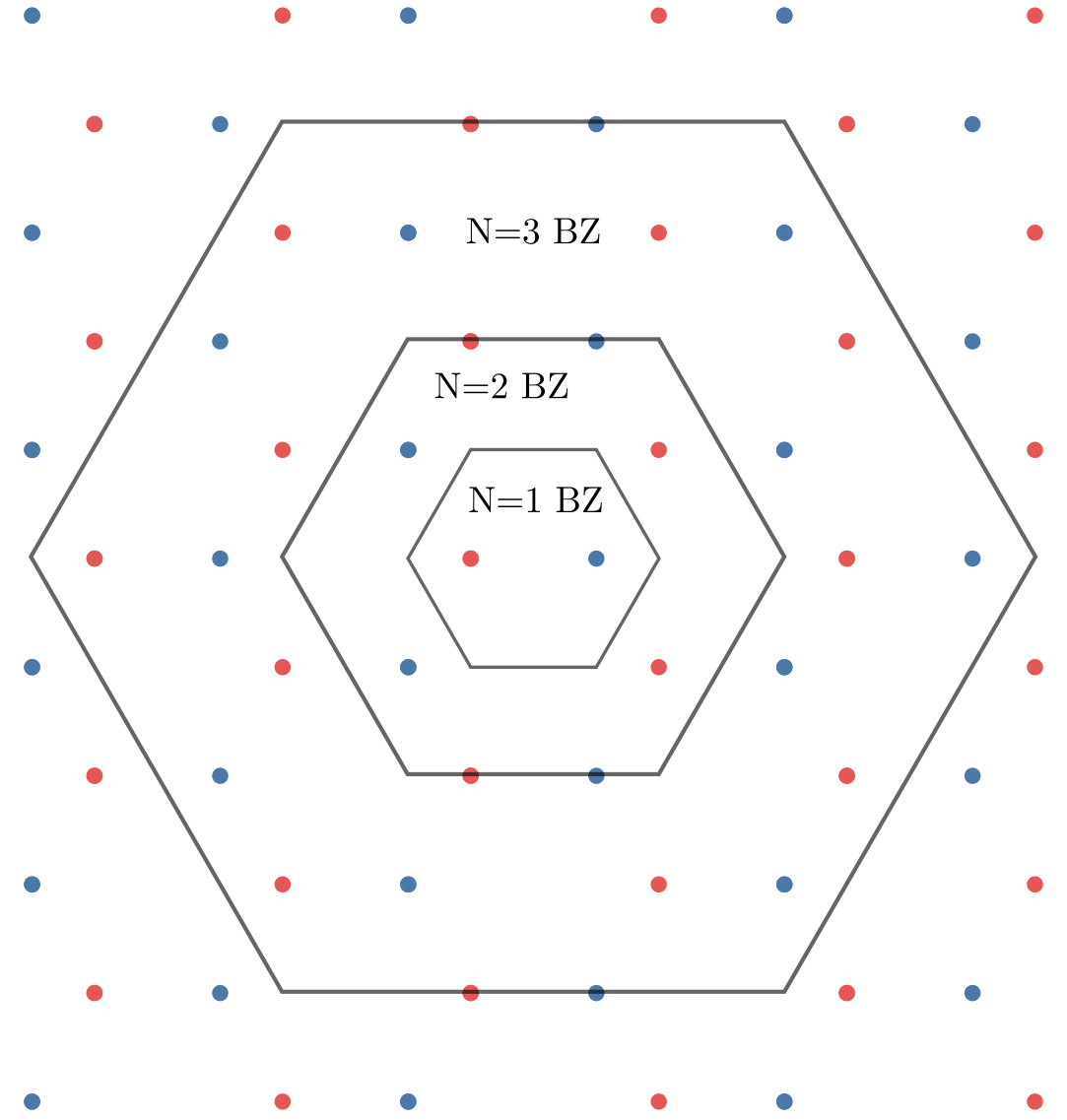}
    \caption{The structure of the pre-Dirac points for different choices of $N$. The $K$-like and $K'$-like points (blue and red respectively) form a honeycomb lattice in momentum space. It is evident geometrically that there are $N^2$ many pre-Dirac points compared to the $N=1$ case (the points at the boundary are identified because of periodic boundary conditions).}
    \label{fig:HaldanefarBZ}
\end{figure}

\subsubsection{Kagome lattice}
\label{par:kagome}
The same methods can be applied to the case of Kagome lattices. As a concrete example, consider the Hamiltonian describing an interaction in a 3-species Kagome lattice, see Figure ~\ref{fig:kagomemodel}: 
\begin{equation}\label{equation:KagomeH}\begin{aligned}
 H =&-t_1\sum_{\langle i, j\rangle} (c_{i,A}^{\dagger} c_{j,B}+c_{i,A}^{\dagger} c_{j,C}+c_{i,B}^{\dagger} c_{j,C})+iu_1 \sum_{\langle i, j\rangle}  (c_{i,A}^{\dagger} c_{j,B}-c_{i,A}^{\dagger} c_{j,C}+c_{i,B}^{\dagger} c_{j,C})\\ &+ h.c.
\end{aligned}\end{equation} 
As before, $\langle i, j\rangle$ denotes summation over first-nearest neighbors. This Hamiltonian is adapted from the Hamiltonian: 
\begin{equation}
 H=-t_1 \sum_{\langle i j\rangle \sigma} c_{i \sigma}^{\dagger} c_{j \sigma}+i u_1 \sum_{\langle i j\rangle \alpha \beta}\left(\mathbf{E}_{i j} \times \mathbf{R}_{i j}\right) \cdot \boldsymbol{\sigma}_{\alpha \beta} c_{i \alpha}^{\dagger} c_{j \beta}   
\end{equation}
 Here, $\alpha$ and $\beta$ represent spin indices,  $\mathbf{R}_{i j}$ are the displacement vectors between sites $i$ and $j$, and $\mathbf{E}_{i j}$ is the electric field from neighbors along $\mathbf{R}_{i j}$. This is a model for Fe$_3$Sn$_2$, and the second term is the spin-orbit coupling from the Sn ion at the center of the hexagon \cite{kagomeprl2011}. We obtain Eq.~\eqref{equation:KagomeH} after restricting to the spin-up electrons. The model is presented in Figure ~\ref{fig:kagomemodel}. For another model that admits the same construction, see~\cite{KagomePRB2000}.

\begin{figure}[h]
    \centering
    \includegraphics[scale=1.5]{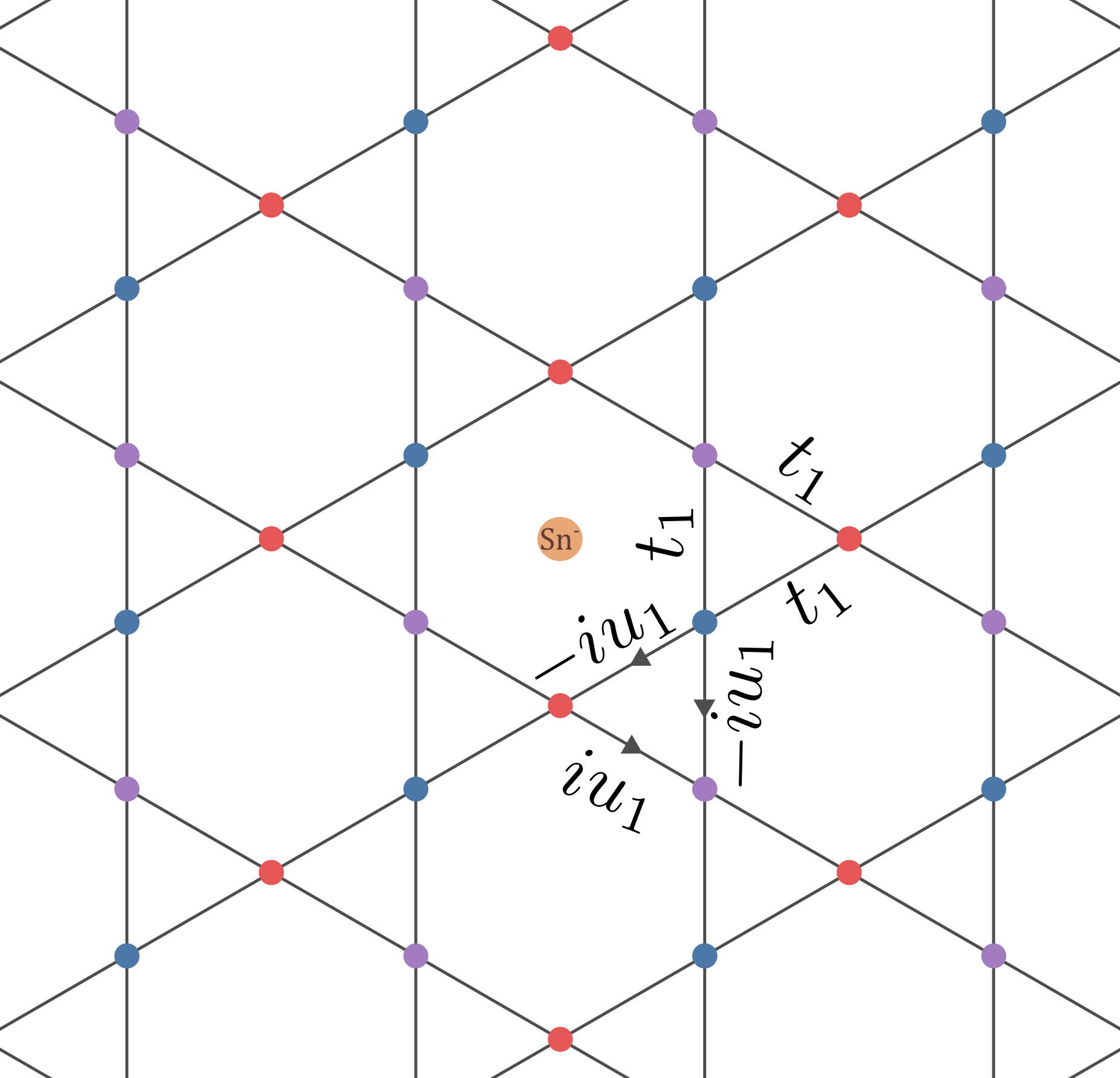}
    \caption{A model for a Chern insulator on a Kagome lattice. First-nearest interactions are both real and imaginary. The complex interactions arise from the spin-orbit coupling of the electric field of the Sn ion at the center. Blue, red and purple circles correspond to $A$, $B$ and $C$ atoms in the Hamiltonian Eq.~\eqref{equation:KagomeH}.}
    \label{fig:kagomemodel}
\end{figure}
As in the honeycomb case, we can do a Fourier transform to the momentum space similar to \eqref{equation:haldanefourier}. We define three creation/annihilation operators as follows:
\begin{equation}
\left(\begin{array}{c}
c_{k}  \\0\\0
\end{array}\right)  \coloneqq c_{k,A} \ket{0}  , \quad \left(\begin{array}{c}
0 \\c_{k} \\0
\end{array}\right)  \coloneqq c_{k,B} \ket{0} , \quad \left(\begin{array}{c}
0 \\0\\c_{k} 
\end{array}\right)  \coloneqq c_{k,C} \ket{0}   \end{equation}   
The $\ket{0}$ represents the vacuum state where no momentum states are occupied. In Figure ~\ref{fig:kagomemodel}, we denoted A, B and C atoms using blue, red and purple circles respectively. After the Fourier transform the resulting Hamiltonian is:
\begin{equation}
\begin{aligned}
H(k)= & -2 t_1\left(\begin{array}{ccc}
0 & \cos (k \cdot a_1) & \cos (k \cdot a_2) \\
\cos (k \cdot a_1) & 0 & \cos (k \cdot a_3) \\
\cos (k \cdot a_2) & \cos (k \cdot a_3) & 0
\end{array}\right) \\
& + 2i u_1\left(\begin{array}{ccc}
0 & \cos (k \cdot a_1) & -\cos (k \cdot a_2) \\
-\cos (k \cdot a_1) & 0 & \cos (k \cdot a_3) \\
\cos (k \cdot a_2) & -\cos (k \cdot a_3) & 0
\end{array}\right)
\end{aligned}
\end{equation}

The vectors $\{a_i\}$ represent the first-nearest neighbors displacements. The Hamiltonian is gapless when $u_1 =0,\pm \sqrt{3}$ otherwise it is gapped. The Chern number for the lowest band can be computed through the connection and it is found to be $-1$ when $t_1 = u_1=1$. A 1-d slice for the parameters is shown in Figure ~\ref{fig:kagome1-1}.

\begin{figure}[h]
    \centering
    \includegraphics[scale=0.6]{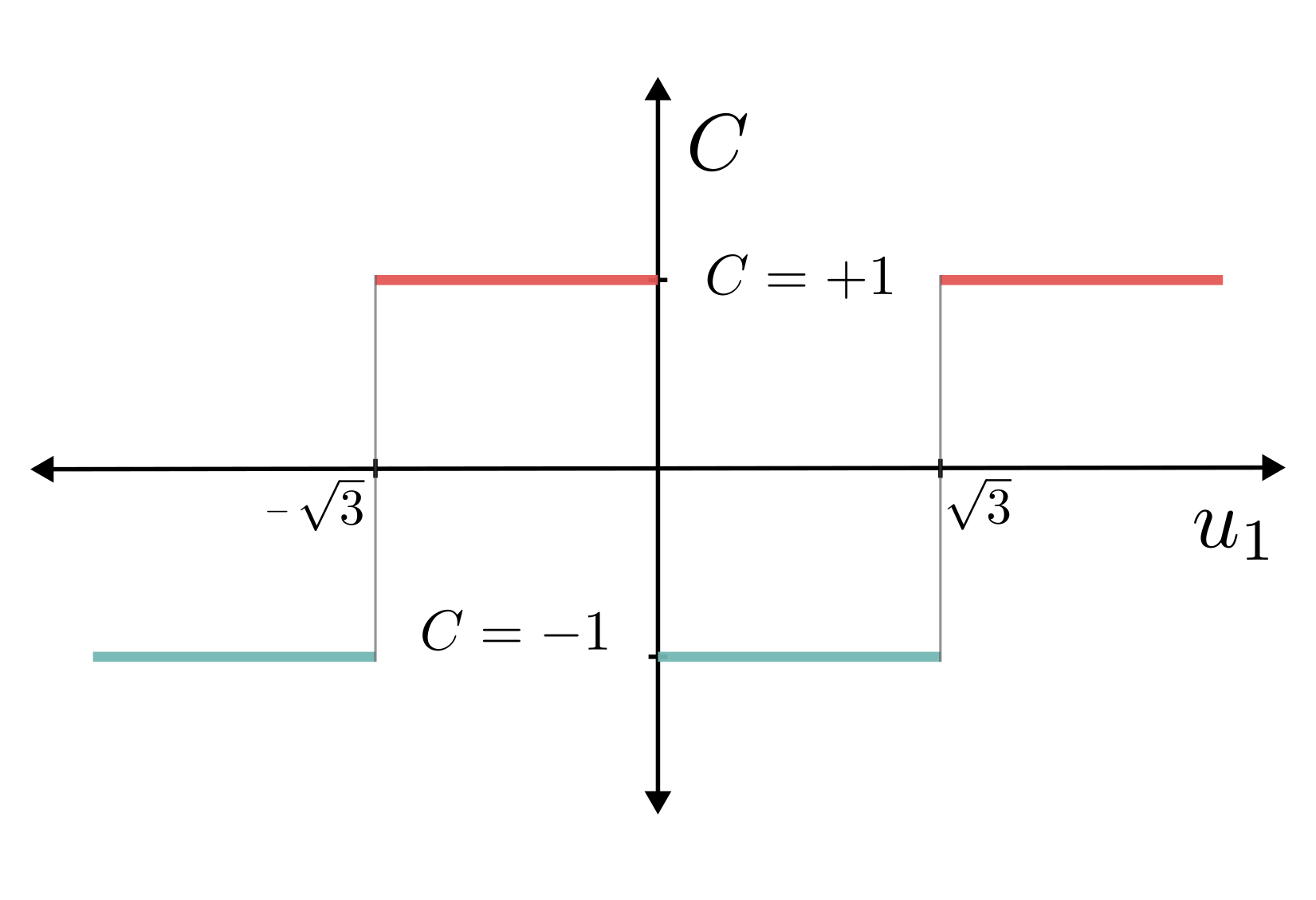}
    \caption{One-dimensional phase diagram of the model for Kagome lattice Eq.~\eqref{equation:KagomeH} with $t_1=1$. The model is gapless for $u_1 \in \{\pm\sqrt{3},0\}$, otherwise it is gapped. }
    \label{fig:kagome1-1}
\end{figure}

Since this is a three band system, there could be a jump by $2$ stemming from three bands crossing, a jump by two from a higher local charge in a two band crossing, or, generically, by two Dirac points with local charge $1$. 
The three band crossing is not  realized, but (fixing $t_1=1$) for the lowest two bands at $u_1=\pm \sqrt{3}$ there is one singular point whose local charge is $2$. At $u_1=0$ 
the situation is the generic case of two Dirac points.
This can seen from a local analysis; see Figure \ref{fig:Kagome}.
Note that the 2nd and 3rd band have the flipped behavior, i.e.\ two points at $\pm\sqrt{3}$ and one local charge $2$ point at  $0$.

\begin{figure}[h]

\begin{subfigure}[t]{0.45\textwidth}
{\includegraphics[height=1.1in]{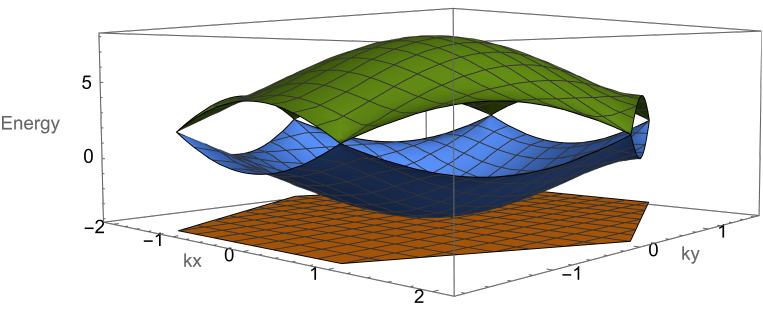}}
    \caption{$u_1=\sqrt{3}$}
\label{figK:A}
\end{subfigure}\hfill
\begin{subfigure}[t]{0.45\textwidth}
{\includegraphics[height=1.2
in]{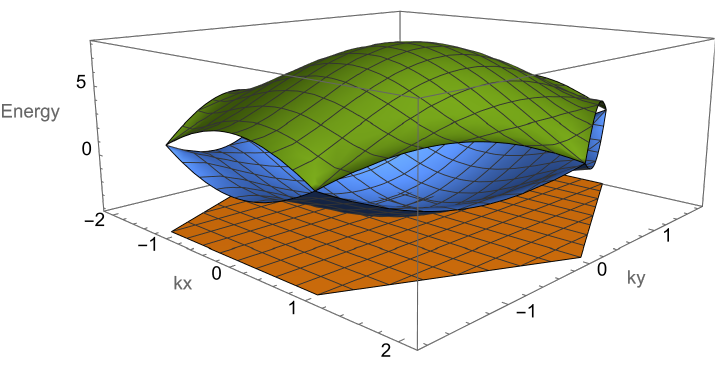}}
    \caption{$u_1=-\sqrt{3}$}
\label{figK:B}
\end{subfigure}\hfill

\begin{center} \begin{subfigure}[t]{0.45\textwidth}
{\includegraphics[height=1.2
in]{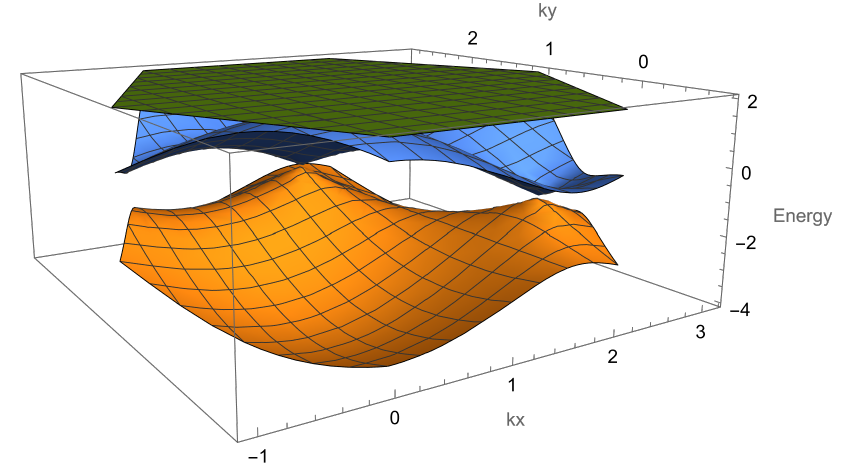}}
    \caption{$u_1=0$}
\label{figK:C}
\end{subfigure}
\end{center}
\caption{The bands and critical points for $t_1=1$. For the lowest and 2nd lowest band: (A) at $u_1=-\sqrt{3}$,  one charge $2$  point plotted at the center of the unit cell, (B) at $u_1=\sqrt{3}$ one charge $2$  point plotted at the center of the unit cell, and  (C)  at $u_1=0$ two Dirac points.  \label{fig:Kagome}}
\end{figure}

We can also expand the Hamiltonian in terms of Gell--Mann matrices since they span traceless Hermitian $3\times 3$ matrices. We used the following ordering for the Gell--Mann matrices:
\begin{equation}
\begin{aligned}
\lambda_1 &= \begin{pmatrix} 0 & 1 & 0 \\ 1 & 0 & 0 \\ 0 & 0 & 0 \end{pmatrix}, \quad
\lambda_2 = \begin{pmatrix} 0 & -i & 0 \\ i & 0 & 0 \\ 0 & 0 & 0 \end{pmatrix}, \quad
\lambda_3 = \begin{pmatrix} 1 & 0 & 0 \\ 0 & -1 & 0 \\ 0 & 0 & 0 \end{pmatrix}, \\
\lambda_4 &= \begin{pmatrix} 0 & 0 & 1 \\ 0 & 0 & 0 \\ 1 & 0 & 0 \end{pmatrix}, \quad
\lambda_5 = \begin{pmatrix} 0 & 0 & -i \\ 0 & 0 & 0 \\ i & 0 & 0 \end{pmatrix}, \quad
\lambda_6 = \begin{pmatrix} 0 & 0 & 0 \\ 0 & 0 & 1 \\ 0 & 1 & 0 \end{pmatrix}, \\
\lambda_7 &= \begin{pmatrix} 0 & 0 & 0 \\ 0 & 0 & -i \\ 0 & i & 0 \end{pmatrix}, \quad
\lambda_8 = \frac{1}{\sqrt{3}} \begin{pmatrix} 1 & 0 & 0 \\ 0 & 1 & 0 \\ 0 & 0 & -2 \end{pmatrix}.
\end{aligned}
\end{equation}
The Hamiltonian in momentum space can be written as $H(k) = h'(k) \cdot \mathbf{\lambda}$, where $\mathbf{\lambda}$ is an 8-dimensional vector of Gell--Mann matrices.
\begin{multline}
  h'(k)= 
(-2 t_1 \cos k_x,
-2 u_1 \cos k_x ,
0,
-2 t_1 \cos (\frac{k_x+\sqrt{3} k_y}{2}),\\
2 u_1 \cos (\frac{k_x+\sqrt{3} k_y}{2}),
-2 t_1 \cos (\frac{k_x-\sqrt{3} k_y}{2}),
-2 u_1 \cos (\frac{k_x-\sqrt{3} k_y}{2}),
0)^T
  \end{multline}

It should be noted that the coefficients can be arranged as: \begin{equation}
\begin{aligned}H(k) = &\cos(k_x) [-2t_1 \lambda_1-2u_1 \lambda_2] + \cos(\frac{k_x+\sqrt{3}k_y}{2})[-2t_1 \lambda_4 +2u_1 \lambda_5] \\ &+\cos(\frac{k_x-\sqrt{3}k_y}{2})[-2t_1 \lambda_6 -2u_1 \lambda_7]\end{aligned}
\end{equation} 
In this form it is evident that the target space is 3-dimensional subspace of $\R^8$ which depends on $t_1$ and $u_1$. Furthermore, since the eigenfunctions do not depend on the overall scale of the Hamiltonian, in the non--degenrate case, we can normalize $h(k)$ to $\hat{h}(k)$ establishing the target space as $S^2\subset \R^3\subset \R^8$. 
We can then use the results of \S\ref{subsection:calcdegree} to calculate the Chern number as the degree of this map. Using the method of commensurate sublattices, the Hamiltonian is modified to:
\begin{equation}\label{equation:kagomeN2}
\begin{aligned} 
H^{[C=N^2]}(\mathbf{k}) =\; & -2t_1\left(\begin{array}{ccc}
0 & \cos (N k \cdot a_1) & \cos (N k \cdot a_2) \\
\cos (N k \cdot a_1) & 0 & \cos (N k \cdot a_3) \\
\cos (N k \cdot a_2) & \cos (N k \cdot a_3) & 0
\end{array}\right) \\
& + 2i u_1\left(\begin{array}{ccc}
0 & \cos (N k \cdot a_1) & -\cos (N k \cdot a_2) \\
-\cos (N k \cdot a_1) & 0 & \cos (N k \cdot a_3) \\
\cos (N k \cdot a_2) & -\cos (N k \cdot a_3) & 0
\end{array}\right).
\end{aligned}
\end{equation}
This Hamiltonian will have a Chern number  $C=N^2$. 
\begin{figure}[h]
    \centering
    \includegraphics[scale=1.6]{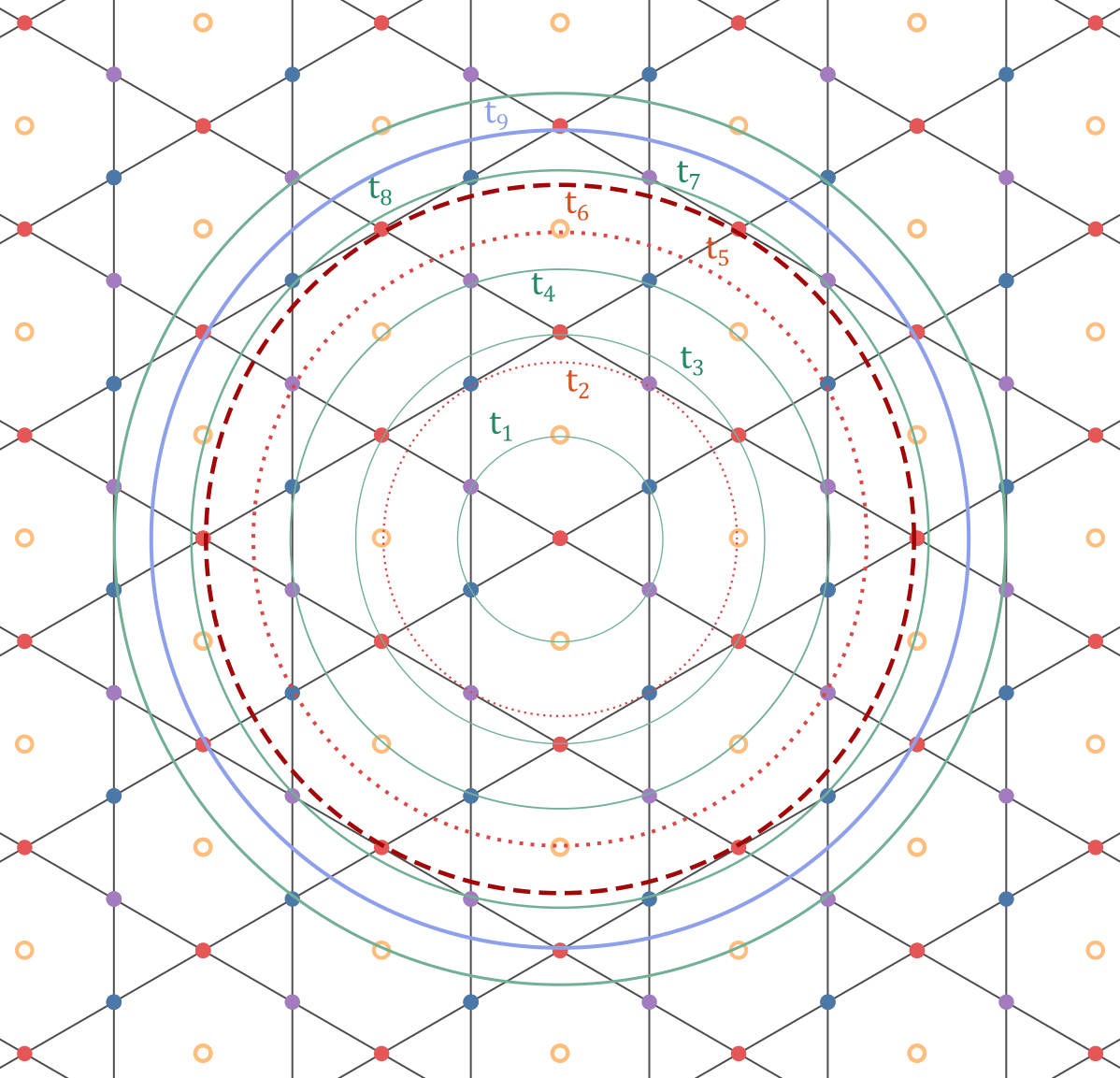}
    \caption{The structure of the nearest-neighbor interactions in Kagome lattice. The gold hollow circles represent fictitious points that complete the lattice into a triangular lattice. For certain ranges the structure of the first-nearest neighbors appears again. For example, $t_5$ has the same structure as $t_1$. Thus using $N=3$ in Eq.~\eqref{equation:kagomeN2} is a valid instance. }
    \label{fig:kagomefarnns}
\end{figure}

 \subsubsection{Criteria for choosing distant neighbors in Kagome lattice} 
 \label{par:Kagomecriteria}
 The Kagome lattice can be viewed as a subset of a triangular lattice with three species (blue, red and purple circles) while missing the hollow gold circles as depicted in Figure ~\ref{fig:kagomefarnns}. To avoid breaking the lattice symmetry, $N$ should be chosen such that the resulting neighbors at distance $N$ have the same number and type as the original model. The criteria for choosing $N$ (the integer multiple of the distant neighbors) again follows the recipe for a triangular lattice \S\ref{subsection:eisenstein} but, with the additional constraint that we want to  have the same number of fictitious points at that distance as the original model. For example, we cannot extend the $t_1$ interaction by either $N=2,4$ or more generally $N>0$ and $ N  \equiv 0 \pmod{2}$ as at these distances the atoms in the same direction as the first nearest neighbors are of the same species as the central atom. The criteria is then to have $ N  \equiv 1 \pmod{2}$ on top of the usual constraint: \( N = 3^{b_0} p_1^{b_1} p_2^{b_2} \dots p_c^{b_c} \), where \( N \) is the product of rational inert primes \( p_i \equiv 2 \pmod{3} \) as discussed in \S\ref{subsection:eisenstein}.
 
\section{Conclusion and Outlook}
We have studied the possibility of having phase diagrams with arbitrary Chern numbers and arbitrary crossings between them. In addition, we reviewed different methods of computing Chern numbers in 2D topological insulators. It was shown that for two--band systems  there exists a simple equivalence between the fiber-bundle picture and the mapping degree picture using the standard $spin\frac{1}{2}$ family. For higher band systems this is also possible, but one has to know both the pull--back map and the bundle structure of the pulled--back Hamiltonian.
The description in terms of pull--backs also  results in the rather simple ray method for computation of Chern classes and phase diagrams of given systems, as opposed to the integration of the connection over the whole $T^2$ Brillouin zone. Physically, this means that we do not need to compute the ground state wave function in order to know its Chern number \S\ref{subsection:calcdegree}. 

This relationship also provided insight into constructing new models with higher Chern numbers. There are two main obvious ways to construct higher degree Hamiltonians: composing the domain space with a mapping to itself (e.g. $T^2\to T^2$ in the case of a 2D periodic Brillouin zone) or composing with a mapping from the target space to itself (e.g. $S^2\to S^2$ in the case of a Bloch sphere).
 The latter allowed us to construct arbitrary tame phase diagrams by pulling back the standard  $spin\frac{1}{2}$ family of Hamiltonians by families of maps with prescribed mapping degree.
 This has potential to applications in designing materials and quantum information theory, as now the design question is given by gluing together  standard systems in a prescribed way using a finite amount of data. One feature is a minimal number of Dirac points appearing for each wall--crossing, which is what is expected generically, but now realized explicitly.

The former formalism, that is coverings of the torus by itself, can be thought of as moving to super--cells that for instance appear in stacking. This can be naturally realized for lattice models by distant-hopping interaction terms in real space.  The requirement that the new interactions have the same structure, in order to have simple covering maps, led to constraints on the range of hoppings admitted. The lattices at these distances should be commensurate, that is simply scalings of the original lattice. This classification problem was solved for those lattices which are also lattices of  quadratic integers. The study then translates to questions about the existence of a representation of primes in certain quadratic forms \cite{cox1989primes}. This is controlled by the behavior of rational primes after extension as explained in the appendix \ref{par:quadratic}. 

Another direction was to provide new maps for honeycomb and triangular lattices that implement intrinsically new mappings with higher degree that are not obtained by composing. These later maps should be of experimental value as they only extend the real interaction terms. 

While the analysis here is readily applicable for lattices that correspond to quadratic integer domains such as triangular and square lattices, many other types of lattices can be treated as easily with minor modifications. This is because such lattices are often a subset of the former lattices. For example, honeycomb, Kagome, dice lattices can be completed into a triangular lattice, while Lieb, checkerboard (Planar Pyrochlore), rectangular lattices (with rational ratio between sides) can be completed into a square lattices. In this paper, we showed how honeycomb and Kagome lattices, as examples, can be treated using this construction. It should also be noted that one can use the methods here for lattices that consist of two superimposed lattices as well (e.g. a triangular lattice on top of a square lattice \dots etc). Another direction for generalization that is readily accounted for by this construction is the number of bands. Since the main method of using bigger commensurate sublattices to achieve higher-degree maps solely relies on composing in domain space (the Brillouin Zone), the construction applies as well to $n$-band systems, as evident by the three-band Kagome case which did not require any new modifications. These investigations present a unified way to look at many constructions in the physics literature where higher-Chern number models were constructed by distant-hopping \cite{Bena2011,Sticlet2013,Wang2015,Mondal2022,circuits2023,Woo2024}. It also presents new simpler models with high Chern numbers that should be useful in experiments for Chern insulators or fractional topological insulators.

\section{Acknowledgments}
The authors are grateful for the funding support from the College of Science at Purdue University. R.K would like to acknowledge funding from the Simons foundation.
We would also like to thank Shawn Cui and Sabre Kais for motivating and very helpful initial discussions as well as Garth Simpson for inquiring about the possibility of direct phase transitions which are now realized via rose-curves.

\appendix
\section{Quadratic Integers}
\label{par:quadratic}
\subsection{Introduction}
\label{par:qintro}
 A natural setting to study 2D lattices is through the extension of the field of rationals $\Q$ by the square root of a negative square free integer $-d \in \Z^-$,  which is denoted $\Q[\sqrt{-d}]$.  The algebraic integers of the field $\Q[\sqrt{-d}]$ are those numbers that are roots of a polynomial equation of second degree in two variables with all integer coefficients they are denoted by $\mathcal{O}_{{\mathbb{Q}}[\sqrt{-d}]}$. They form an integral domain and further $\mathcal{O}_{{\mathbb{Q}}[\sqrt{-d}]}= \Z[\omega]=\{a+\omega b: a, b \in \mathbb{Z}\}$, where $\omega = \frac{1+\sqrt{-d}}{2}$ if $-d \equiv 1 \pmod{4}$ and $\omega = \sqrt{-d}$ otherwise. This ensures that the domain $\Z[\omega]$ is integrally closed. We will use $\Z[\omega]$ in this definition, this is not to be confused with the third root of unity which we discuss later. 
 
Choosing an embedding of the algebraic closure $\overline{\Q}\to \C$ and using correspondence between $\mathbb{C}= \mathbb{R}^2$ as real vector spaces, the quadratic integers will form a lattice in $\C$ that can be used to study physical lattices. For $-d \neq0$, the resulting lattice $\Gamma$ will be a  lattice in $\R^2$ of the form:
\begin{equation} \label{eq:quadtolattice}
\G =
\begin{cases}
\mathbb{Z}(1,0) + \mathbb{Z}\Bigl(\frac{1}{2},\frac{\sqrt{d}}{2}\Bigr), & \text{if } -d \equiv 1 \pmod{4}, \\[1mm]
\mathbb{Z}(1,0) + \mathbb{Z}(0,\sqrt{d}), & \text{Otherwise.}
\end{cases}
\end{equation}

A main tool in the study is the norm function $N : \Z[\omega] \to \mathbb{N}$ which is defined as $N(a+b\omega) = (a+b\omega)\overline{(a+b\omega)} $. Where the over-bar denotes conjugation: $\overline{(a+b \sqrt{h})} \coloneqq (a-b \sqrt{h})$ for any $h\in \Z$. After embedding into $\C$  the conjugation corresponds to complex conjugation. Consequently, the norm function is non-negative, multiplicative and equals the square of the Euclidean distance from the origin. This completes the identification of the quadratic integers $\Z[\omega]$ with the 2D lattice $\G$ \cite{cox1989primes,conway1999sphere,IrelandRosen1990}.

Quadratic integers inherit a lot of properties from the ring of integers $\Z$ while certain properties fail. For example, the domain $\Z[\sqrt{-5}]$ is not a unique-factorization domain (UFD). This is evidenced by the factorization: $6=2 \cdot 3=(1+\sqrt{-5})(1-\sqrt{-5})$. Additionally, in the same domain, we have $5= (-1) \cdot \sqrt{-5}\cdot \sqrt{-5}$ which means that the rational prime $5 \in \Z$ is now reducible. However, $\Z[w]$ are Dedekind domains and any ideal factors uniquely into prime ideals \cite{IrelandRosen1990}. A rational prime $p$ can have three behaviors after the extension. First, it can remain a prime and its principal ideal $(p)$ remains a prime ideal. In this case it is called an inert prime. Second, it can ramify as the square of a single prime ideal $(p) = \mathfrak{p}_1^2$ with $\mathfrak{p}_1$ a prime ideal in $\Z[\omega]$. Third, it can split into two distinct prime ideals $(p) = \mathfrak{p}_1\mathfrak{p}_2$. The behavior of rational primes after extension is well-studied and we mention the following theorem without a proof \cite{cox1989primes,Neukirch1999,IrelandRosen1990}. 

\begin{theorem}\label{theorem:oddprimes} Let \(K=\mathbb{Q}(\sqrt{d})\) be a quadratic field, an odd rational prime $p$ ramifies, splits or remains inert in the quadratic extension to the field $K=$ iff $\left(\frac{d}{p}\right)=0,1,-1$ respectively, where the Legendre symbol can be computed as follows:
\[
\left(\frac{d}{p}\right) =
\begin{cases}
0, & \text{if } p \mid d, \text{ $p$ ramifies,} \\
1, & \text{if } d^{\frac{p-1}{2}} \equiv 1 \pmod{p},  \text{ $p$ splits,} \\
-1, & \text{if } d^{\frac{p-1}{2}} \equiv -1 \pmod{p}  \text{, $p$ is inert}.
\end{cases}
\]
\end{theorem}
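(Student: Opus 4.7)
The plan is to apply the Dedekind--Kummer factorization theorem to the minimal polynomial of a generator of $\mathcal{O}_K$ over $\Z$ and then translate the resulting factorization in $\mathbb{F}_p[x]$ into the statement about the Legendre symbol via Euler's criterion. Since $p$ is odd, I may safely work with the polynomial $f(x) = x^2 - d$: the only case in which $\Z[\sqrt d]$ differs from $\mathcal{O}_K = \Z[\omega]$ is $d \equiv 1 \pmod 4$, and there the index $[\mathcal{O}_K : \Z[\sqrt d]] = 2$ is coprime to $p$, so the Dedekind--Kummer criterion still applies and it suffices to factor $\bar f(x) = x^2 - \bar d$ over $\mathbb{F}_p$.

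First I would enumerate the three possible factorization types of a degree-$2$ polynomial over a field of odd characteristic. Either (i) $\bar f$ has a repeated root, which happens iff the discriminant $4d$ vanishes in $\mathbb{F}_p$, i.e.\ $p \mid d$; in that case $\bar f = x^2$ and Dedekind--Kummer gives $(p) = (p,\sqrt d)^2$, so $p$ ramifies. Or (ii) $\bar f$ splits into two distinct linear factors over $\mathbb{F}_p$, which happens iff $d$ is a nonzero square modulo $p$; then $(p) = \mathfrak{p}_1 \mathfrak{p}_2$ is a product of two distinct primes and $p$ splits. Or (iii) $\bar f$ is irreducible over $\mathbb{F}_p$, which happens iff $d$ is a nonzero non-square modulo $p$; then $(p)$ remains prime and $p$ is inert. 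By the very definition of the Legendre symbol, these three cases correspond to $\left(\tfrac{d}{p}\right) = 0,\, +1,\, -1$ respectively, which is the first half of the claim.

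The second half is Euler's criterion, which asserts $d^{(p-1)/2} \equiv \left(\tfrac{d}{p}\right) \pmod p$ for odd $p$ not dividing $d$. This follows from the cyclicity of $\mathbb{F}_p^{\times}$: the squares form the unique index-$2$ subgroup, the map $x \mapsto x^{(p-1)/2}$ is a nontrivial homomorphism with image contained in $\{\pm 1\}$, and its kernel is exactly the subgroup of squares. Combining Euler's criterion with the case analysis above yields the computational formula stated in the theorem. The main subtlety, which is where I expect to spend the most care, is the case $d \equiv 1 \pmod 4$: there one would naturally use the polynomial $x^2 - x + \tfrac{1-d}{4}$ of $\omega = \tfrac{1+\sqrt d}{2}$, but completing the square rewrites it as $\bigl(x - \tfrac{1}{2}\bigr)^2 - \tfrac{d}{4}$, whose factorization pattern over $\mathbb{F}_p$ (with $2$ a unit) is again governed by whether $d$ is a square mod $p$. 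This verification, plus the index-coprimality argument that allows the shortcut to $x^2 - d$, is the only genuinely new input; everything else is the standard case breakdown.
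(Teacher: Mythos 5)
The paper states this theorem explicitly \emph{without proof}, citing \cite{cox1989primes,Neukirch1999,IrelandRosen1990}, so there is no internal argument to compare against; your proposal is the standard textbook proof from those references and it is correct. The Dedekind--Kummer reduction to factoring $x^2-\bar d$ over $\mathbb{F}_p$, the observation that the index $[\mathcal{O}_K:\Z[\sqrt d]]\in\{1,2\}$ is coprime to the odd prime $p$ (so the shortcut polynomial is legitimate even when $d\equiv 1 \pmod 4$), the three-way case split on the factorization type, and the appeal to Euler's criterion for the computational form of the Legendre symbol are all sound and complete.
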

For ease of presentation we treat the $p=2$ case separately. We have the following theorem\cite{IrelandRosen1990}.
\begin{theorem}\label{theorem:evenprimes}
Let \(K=\mathbb{Q}(\sqrt{d})\) be a quadratic field with field discriminant 
\[
D=\begin{cases}
d, & \text{if } d\equiv 1\pmod{4},\\[1mm]
4d, & \text{if } d\not\equiv 1\pmod{4}.
\end{cases}
\]
Then the rational prime \(2\) behaves as follows in \(\mathcal{O}_{{\mathbb{Q}}[\sqrt{-d}]}\):
\[
\begin{cases}
 \text{if } 2 \mid D, \text{ then \(2\) ramifies,} \\
 \text{if }2 \nmid D \text{ and } d \equiv 1 \pmod{8},  \text{ then \(2\) splits,} \\
 \text{if }2 \nmid D \text{ and } d \equiv 3 \pmod{8},  \text{ then \(2\) remains inert.}
\end{cases}
\]
\end{theorem}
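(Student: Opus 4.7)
The plan is to invoke the Kummer--Dedekind factorization theorem applied to a monogenic generator of $\mathcal{O}_K$. First I would recall that $\mathcal{O}_K = \mathbb{Z}[\omega]$ where $\omega = \tfrac{1+\sqrt{d}}{2}$ if $d \equiv 1 \pmod{4}$ and $\omega = \sqrt{d}$ otherwise. Since $\mathcal{O}_K$ is monogenic in both cases, the conductor $[\mathcal{O}_K : \mathbb{Z}[\omega]]$ equals $1$, so Kummer--Dedekind applies without exception to the prime $(2)$: the factorization of $(2)\mathcal{O}_K$ mirrors exactly the factorization of the minimal polynomial of $\omega$ modulo $2$.

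For the ramified case I would assume $d \not\equiv 1 \pmod{4}$, so that the discriminant is $D = 4d$ and automatically $2 \mid D$. The minimal polynomial of $\omega = \sqrt{d}$ is $f(x) = x^2 - d$. If $d$ is even (hence $d \equiv 2 \pmod{4}$, since $d$ is squarefree), then $f(x) \equiv x^2 \pmod{2}$, yielding $(2) = (2, \sqrt{d})^2$. If instead $d \equiv 3 \pmod{4}$, then $f(x) \equiv x^2 - 1 \equiv (x+1)^2 \pmod{2}$, yielding $(2) = (2, 1 + \sqrt{d})^2$. Either way $2$ ramifies, which covers all cases where $2 \mid D$.

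Next I would handle $d \equiv 1 \pmod{4}$, so that $D = d$ is odd, i.e.\ $2 \nmid D$, and $\omega = \tfrac{1+\sqrt{d}}{2}$ has minimal polynomial $g(x) = x^2 - x + \tfrac{1-d}{4}$. Writing $d = 1 + 4m$, so that $\tfrac{1-d}{4} = -m$, one gets $g(x) \equiv x^2 + x + m \pmod{2}$. When $d \equiv 1 \pmod{8}$, $m$ is even and $g(x) \equiv x(x+1) \pmod{2}$ factors into distinct linear factors, so $(2) = (2, \omega)(2, \omega - 1)$ splits as a product of two distinct prime ideals. When $d \equiv 5 \pmod{8}$, $m$ is odd and $g(x) \equiv x^2 + x + 1 \pmod{2}$ is the unique irreducible quadratic over $\mathbb{F}_2$, hence $(2)$ remains prime and $2$ is inert.

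The main obstacle is mostly bookkeeping rather than substance: the theorem's third clause in the excerpt reads $d \equiv 3 \pmod{8}$ where it should read $d \equiv 5 \pmod{8}$, since the hypothesis $2 \nmid D$ already forces $d \equiv 1 \pmod{4}$, leaving only the residues $1$ and $5$ modulo $8$ to distinguish. Once this typographical point is reconciled, the argument reduces to a direct application of Kummer--Dedekind to the two explicit minimal polynomials $x^2 - d$ and $x^2 - x + \tfrac{1-d}{4}$, and no genuine difficulty remains beyond verifying the mod-$8$ congruences that separate the split and inert subcases.
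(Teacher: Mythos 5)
The paper offers no proof of this theorem---it is quoted from Ireland--Rosen with a citation---so there is no in-paper argument to measure yours against. Your Kummer--Dedekind proof is correct and is the standard way to establish the result: since $\mathcal{O}_K=\mathbb{Z}[\omega]$ is monogenic, the conductor obstruction is trivial and the factorization of $(2)$ is read off from the factorization of the minimal polynomial of $\omega$ over $\mathbb{F}_2$. Your three computations all check out: $x^2-d\equiv x^2$ or $(x+1)^2 \pmod 2$ when $d\equiv 2$ or $3\pmod 4$ (the cases with $2\mid D$), giving ramification; and $x^2-x+\tfrac{1-d}{4}\equiv x(x+1)$ or $x^2+x+1\pmod 2$ according as $d\equiv 1$ or $5\pmod 8$, giving splitting or inertia.

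You are also right that the statement as printed is internally inconsistent: the hypothesis $2\nmid D$ forces $d\equiv 1\pmod 4$, so the printed clause ``$d\equiv 3\pmod 8$'' is vacuous and should read $d\equiv 5\pmod 8$. One further observation: the error almost certainly stems from the paper's sliding between $d$ and $-d$ (the hypothesis concerns $\mathbb{Q}(\sqrt{d})$ while the conclusion refers to $\mathcal{O}_{\mathbb{Q}[\sqrt{-d}]}$, and the surrounding appendix takes $d>0$ with field $\mathbb{Q}[\sqrt{-d}]$). Under that alternative reading the inert condition $-d\equiv 5\pmod 8$ translates to $d\equiv 3\pmod 8$, which matches the printed third clause---but then the split clause would have to read $d\equiv 7\pmod 8$ rather than $d\equiv 1\pmod 8$. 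So exactly one of the two residue clauses is wrong under either convention; your version, with the congruences taken on the integer actually under the radical and the residues $1$ and $5$ modulo $8$, is the consistent one, and it is the version the paper's later applications in fact use (for the Eisenstein integers, $-3\equiv 5\pmod 8$ gives $2$ inert; for the Gaussian integers, $2\mid D=-4$ gives $2$ ramified).
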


\subsection{Isolated norms}

In the following, Greek letters are  used for quadratic integers while Latin letters are reserved for rational numbers. The letters $p,$ and $\pi$ are used for rational primes and quadratic primes in their respective ring. We start by three lemmata that illustrate the structure of the norm function in relation to inert primes. 

\begin{lemma}\label{lem:inert1}
In a quadratic integer ring $\Z[\omega]$, the norm function cannot represent any inert prime $p\in \Z$. (i.e. there is no number with norm $p$).
\end{lemma}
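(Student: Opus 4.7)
My plan is a short contradiction argument resting on two facts already in play: the multiplicativity of the norm $N(\alpha\beta)=N(\alpha)N(\beta)$, and the definition of an inert prime, which says the principal ideal $(p)$ remains prime in $\Z[\omega]$ (equivalently, $p$ is a prime element of $\Z[\omega]$, since $(p)$ is principal and prime in the Dedekind domain $\Z[\omega]$).

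The setup is as follows. Suppose for contradiction that there exists $\alpha=a+b\omega\in\Z[\omega]$ with $N(\alpha)=p$, where $p\in\Z$ is an inert rational prime. By definition of the norm, $\alpha\bar\alpha=N(\alpha)=p$, so in the ring $\Z[\omega]$ we have $p\mid\alpha\bar\alpha$.

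Since $p$ is inert, $p$ is prime in $\Z[\omega]$, so from $p\mid\alpha\bar\alpha$ we conclude $p\mid\alpha$ or $p\mid\bar\alpha$. In either case, write (without loss of generality) $\alpha=p\gamma$ for some $\gamma\in\Z[\omega]$; note that viewed as an element of $\Z[\omega]$, the rational prime $p$ has norm $N(p)=p\cdot p=p^2$. Applying multiplicativity,
\begin{equation*}
p=N(\alpha)=N(p)\,N(\gamma)=p^2\,N(\gamma),
\end{equation*}
which forces $p\,N(\gamma)=1$ in $\Z$. Since $N(\gamma)$ is a non-negative rational integer and $p>1$, this is impossible. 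The case $p\mid\bar\alpha$ is identical because $N(\bar\alpha)=N(\alpha)=p$. This contradiction completes the argument.

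There is no real obstacle here: the only subtlety is making sure that ``inert'' is being used in the sense that gives primality of $p$ as an element of $\Z[\omega]$, rather than merely irreducibility of the ideal factorization. In $\Z[\omega]$ these coincide because $\Z[\omega]$ is a Dedekind domain in which the ideal $(p)$ being prime means $p$ itself generates a prime ideal, hence is a prime element. If one prefers a purely ideal-theoretic phrasing, the same proof goes through by taking ideal norms: $(\alpha)(\bar\alpha)=(p)$ as ideals, and since $(p)$ is prime, $(p)\mid(\alpha)$ or $(p)\mid(\bar\alpha)$, giving $p^2\mid p$ after taking ideal norms — the same contradiction.
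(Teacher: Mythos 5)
Your proof is correct and rests on the same key facts as the paper's: the identity $N(\alpha)=\alpha\bar\alpha=p$ and the primality of $p$ in $\Z[\omega]$. The paper's version is just more direct — it observes that $\alpha\bar\alpha=p$ is already a factorization of the inert (hence irreducible) prime $p$ into two non-units, skipping your intermediate divisibility-and-norm step.
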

\begin{proof}
Assume $\exists \alpha \in \Z[\omega] $ such that $N(\alpha) = p$. This entails, however, $N(\alpha)=\alpha \cdot \overline{\alpha}=p$ which is a contradiction since $p$ is an inert prime and neither $\alpha$ nor its conjugate are units. \end{proof}

\begin{lemma}\label{lem:inert2}
In a quadratic integer ring $\Z[\omega]$, the only numbers with norm $p^2$ for an inert prime $p \in \Z$ are associates of $p$ (ie. $p$ has an isolated norm).
\end{lemma}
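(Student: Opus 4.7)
The plan is to exploit the fact that an inert prime $p$ is, by definition, a prime element of $\Z[\omega]$ (the principal ideal $(p)$ is a prime ideal), so we can apply unique factorization at the level of $p$ itself. The key bridge between norm information and divisibility information is the identity $N(\alpha) = \alpha\bar\alpha$, which converts a norm equation into an equation inside the ring.

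Concretely, suppose $\alpha\in\Z[\omega]$ satisfies $N(\alpha)=p^2$. Note first that since $p\in\Z\subset\Z[\omega]$ is fixed by conjugation, $N(p)=p\cdot p=p^2$. Now rewrite the hypothesis as
\begin{equation}
\alpha\bar\alpha = p^2 = p\cdot p.
\end{equation}
Since $p$ is inert, $p$ is a prime element of $\Z[\omega]$, so from $p\mid \alpha\bar\alpha$ we conclude $p\mid\alpha$ or $p\mid\bar\alpha$. In the first case write $\alpha=p\beta$; multiplicativity of $N$ gives $p^2 = N(\alpha) = N(p)N(\beta) = p^2 N(\beta)$, so $N(\beta)=1$ and $\beta$ is a unit, hence $\alpha$ is an associate of $p$. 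In the second case $\bar\alpha=p\beta$, so $\alpha = \overline{p\beta}=p\bar\beta$, and the same norm computation forces $\bar\beta$ to be a unit, again making $\alpha$ an associate of $p$.

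Finally, one should check the converse direction for completeness: every associate $u p$ of $p$ (with $u$ a unit) satisfies $N(up)=N(u)N(p)=1\cdot p^2=p^2$, so the associates of $p$ are indeed exactly the elements of norm $p^2$.

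There is no serious obstacle here; the argument rests entirely on the defining property of inertness (primality of $p$ in $\Z[\omega]$) plus the two basic features of the norm used throughout $\S\ref{par:quadratic}$, namely $N(\alpha)=\alpha\bar\alpha$ and multiplicativity. The only mild subtlety is remembering that conjugation preserves divisibility by elements of $\Z$, which is what legitimizes passing from $p\mid\bar\alpha$ back to a statement about $\alpha$ itself; this is immediate because $p=\bar p$.
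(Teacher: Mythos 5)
Your argument is correct and is essentially the same as the paper's: both reduce $N(\alpha)=\alpha\bar\alpha=p^2$ to a divisibility statement, invoke primality of the inert $p$ in $\Z[\omega]$ to get $p\mid\alpha$ or $p\mid\bar\alpha$, use $\bar p = p$ and conjugation to handle the second case, and conclude via multiplicativity of the norm that the cofactor is a unit. The paper merely phrases this as a contradiction argument (assuming $p\nmid\alpha$), while you treat both cases directly and add the routine converse check; the content is identical.
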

\begin{proof}
Assume $\exists \alpha \in \Z[\omega] $ such that $N(\alpha) = p^2$ and $p \nmid \alpha$. Let's write the norm of $\alpha$ as $N(\alpha)=\alpha \cdot \overline{\alpha}=p^2$. Since $p$ is an inert prime, $p \mid \alpha\cdot\overline{\alpha} \Rightarrow p\mid \alpha \text{ or } p \mid \overline{\alpha}$. The first case leads to a contradiction, we are then left with $p \mid \overline{\alpha}$. Write $\overline{\alpha}=p\cdot\mu$ for some number $\mu \in \Z[\omega]$. Taking conjugates, We have ${\alpha}=p \cdot \overline{\mu}$. Then again $p\mid \alpha$: Take the norm of $\alpha = p \cdot \overline{\mu}$ and use the multiplicativity of the norm $N(\alpha) = N(p)\cdot N(\overline{\mu}) = p^2 \cdot N(\overline{\mu})=p^2$ and hence $\mu$ is a unit. This completes the proof. \end{proof}
Note that the norm of irreducible elements can be a composite number in a non-UFD. For example, in $\Z[\sqrt{-5}]$, $N(1+\sqrt{-5})=(1+\sqrt{-5})(1-\sqrt{-5})=6$. We, however, do have the following lemma.
\begin{lemma}\label{lem:inert3}
In a quadratic integer ring $\Z[\omega]$, if an inert prime $p \in \Z$ divides the norm of an irreducible $\alpha \in \Z[\omega]$, then $N(\alpha) = p^2$ and $\alpha$ is an associate of $p$.
\end{lemma}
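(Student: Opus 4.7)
The plan is to use the fact that $p$ being inert means the principal ideal $(p)\subset \Z[\omega]$ is prime, combined with irreducibility of $\alpha$ and the symmetry under conjugation.

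First, observe that $p \mid N(\alpha) = \alpha \bar{\alpha}$. Since $p$ is inert in $\Z[\omega]$, $(p)$ is a prime ideal, so $p \mid \alpha$ or $p \mid \bar{\alpha}$ in $\Z[\omega]$.

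Next, I will reduce the second case to the first by conjugating. If $p \mid \bar\alpha$, write $\bar\alpha = p\mu$ for some $\mu \in \Z[\omega]$; taking conjugates and using that $p\in \Z$ is fixed by conjugation gives $\alpha = p\bar\mu$, so $p \mid \alpha$ in either case.

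Finally, write $\alpha = p\nu$ for some $\nu \in \Z[\omega]$. Since $\alpha$ is irreducible and $p$ is not a unit (its norm is $p^2 \neq 1$), $\nu$ must be a unit. Hence $\alpha$ is an associate of $p$, and by multiplicativity of the norm $N(\alpha) = N(p)N(\nu) = p^2$.

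The only subtlety is the second step, where one must know that $\bar{p}=p$ for a rational prime $p$ regarded as an element of $\Z[\omega]$; this is immediate from the definition of conjugation as $\overline{a+b\sqrt{-d}} = a-b\sqrt{-d}$. No step requires any new input beyond Lemma~\ref{lem:inert1}'s underlying observation that inert primes have norm $p^2$, which is itself built into the definition of inertness used here.
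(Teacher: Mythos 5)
Your proof is correct and follows essentially the same route as the paper's: use that $(p)$ is a prime ideal to get $p\mid\alpha$ or $p\mid\bar\alpha$, reduce the second case to the first by conjugation, then invoke irreducibility of $\alpha$ to force the cofactor to be a unit. The only cosmetic difference is that the paper phrases the first step as a contradiction argument while yours is direct; the mathematical content is identical.
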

\begin{proof}
Assume $\exists \alpha \in \Z[\omega] $ such that $p \mid N(\alpha)$ and $p \nmid \alpha$. Let's write the norm of $\alpha$ as $N(\alpha)=\alpha \cdot \overline{\alpha}=p \cdot \mu$. Since $p$ is an inert prime, $p \mid \alpha\cdot\overline{\alpha} \Rightarrow p\mid \alpha$ from the multiplicativity of conjugation or equivalently the proof of Lemma~\ref{lem:inert2}. We can then write ${\alpha}=p\cdot\nu$. However, since $\alpha$ is an irreducible then $\nu$ is a unit, and $N(\alpha)=p^2$. \end{proof}

Ramified primes obey a modified versions of Lemmata~\ref{lem:inert1}, \ref{lem:inert2}, and \ref{lem:inert3}. We first note that in some cases the ramified prime $p$ ramifies as the square of non-principal ideals. For example, in $\Z[\sqrt{-5}]$, $(2) = (2,\,1+\sqrt{-5})^2$. In this case, there is no number whose norm is $2$. This is easily seen from the norm function $N(a+b\sqrt{-5})=a^2+5b^2$.
\begin{lemma}\label{lem:ramified1}
In a quadratic integer ring $\Z[\omega]$, if the norm function represents a ramified prime $p\in \Z$, then $p=u\cdot \pi^2$ for some prime $\pi \in \Z[\omega]$ and unit $u$.
\end{lemma}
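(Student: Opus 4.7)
The plan is to deduce the structure of $\alpha$ (with $N(\alpha)=p$) using the ideal theory of the Dedekind domain $\Z[\omega]$, mentioned earlier in the appendix. The Dedekind property is essential because $\Z[\omega]$ need not be a UFD, so unique factorization must be formulated at the level of ideals rather than elements.

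First I would fix an $\alpha\in\Z[\omega]$ with $N(\alpha)=\alpha\bar\alpha=p$, where $p$ is the given ramified rational prime, and let $\mathfrak{p}\subset \Z[\omega]$ denote the unique prime ideal lying above $p$, so that $(p)=\mathfrak{p}^2$ with $N(\mathfrak{p})=p$. Since $\alpha\mid p$ in $\Z[\omega]$, the principal ideal $(\alpha)$ divides $(p)=\mathfrak{p}^2$, and unique factorization of ideals in a Dedekind domain forces $(\alpha)\in\{(1),\,\mathfrak{p},\,\mathfrak{p}^2\}$. The ideal norm is multiplicative and agrees with $|N(\alpha)|=p$ on principal ideals, which rules out $(1)$ (norm $1$) and $\mathfrak{p}^2=(p)$ (norm $p^2$). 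Hence $(\alpha)=\mathfrak{p}$, and in particular $\pi:=\alpha$ generates a prime ideal, i.e.\ it is a prime element of $\Z[\omega]$.

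Next I would apply the same reasoning to the conjugate. Since $\bar\alpha\mid p$ and $N(\bar\alpha)=N(\alpha)=p$, the identical argument gives $(\bar\alpha)=\mathfrak{p}=(\alpha)$. Two elements generating the same principal ideal differ by a unit, so there exists a unit $u\in\Z[\omega]^\times$ with $\bar\alpha=u\alpha$. Substituting into $p=\alpha\bar\alpha$ yields
\begin{equation}
p \;=\; \alpha\bar\alpha \;=\; u\,\alpha^2 \;=\; u\,\pi^2,
\end{equation}
which is the claimed factorization.

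I do not anticipate a serious obstacle: the only subtle point is ensuring that one really has $(\alpha)=\mathfrak{p}$ rather than some other possibility, which is why it is crucial that $p$ be \emph{ramified} (so that $\mathfrak{p}$ is the unique prime above $p$); for a split prime the analogous argument would instead yield $(\alpha)\in\{\mathfrak{p}_1,\mathfrak{p}_2\}$ with $\mathfrak{p}_1\neq\mathfrak{p}_2$ and $\bar\alpha$ need not be an associate of $\alpha$. A possible secondary care is the case of the rational prime $2$ when $-d\equiv 1\pmod 4$, but Theorem~\ref{theorem:evenprimes} already covers the ramification behavior uniformly, so the same argument applies without modification.
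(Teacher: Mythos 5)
Your proposal is correct and follows essentially the same route as the paper: both pass to the ideal equation $(\alpha)(\overline{\alpha})=(p)=\mathfrak{p}^2$ and use unique factorization of ideals in the Dedekind domain $\Z[\omega]$ to conclude $(\alpha)=(\overline{\alpha})=\mathfrak{p}$, hence $\overline{\alpha}=u\alpha$ and $p=u\alpha^2$. Your extra step of enumerating the divisors of $\mathfrak{p}^2$ and ruling out cases via the ideal norm is a slightly more explicit version of the paper's direct appeal to unique ideal factorization, but it is the same argument.
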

\begin{proof}
Since the norm function represents $p$, $\exists\alpha$ such that $N(\alpha) = \alpha \cdot \overline{\alpha}=p$. Taking ideals  $(\alpha)(\overline{\alpha})=(p)$. Since $p$ ramifies, the ideal generated by it has the form $(p) = \mathfrak{p}_1^2$ for some prime ideal $\mathfrak{p}_1$. We then have $(\alpha)(\overline{\alpha})=\mathfrak{p}_1^2$. Because quadratic integers are Dedekind domains, ideals factor uniquely into prime ideals. We then have $(\alpha)=(\overline{\alpha})=\mathfrak{p}_1$ which means $\alpha$ is a prime (in Dedekind domains, a principal ideal is prime iff its generator is a prime), and $\alpha = u \cdot \overline{\alpha}$.
\end{proof}
Similar to the inert prime case where the square of an inert prime was an inert norm, if a ramified prime is represented by the norm function then it has an isolated norm.

\begin{lemma}\label{lem:ramified2}
In a quadratic integer ring $\Z[\omega]$, if the norm function represents a ramified prime $p\in \Z$, then $p$ is an isolated norm (all numbers with norm $p$ are associates of each other).
\end{lemma}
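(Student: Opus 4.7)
The plan is to leverage the preceding Lemma~\ref{lem:ramified1} together with unique factorization of ideals in the Dedekind domain $\Z[\omega]$. Suppose $\alpha, \beta \in \Z[\omega]$ both satisfy $N(\alpha) = N(\beta) = p$ for a ramified rational prime $p$. I need to show that $\alpha$ and $\beta$ are associates, i.e.\ that they differ by a unit.

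The first step is to recall from Lemma~\ref{lem:ramified1} that any element whose norm is a ramified prime $p$ generates the prime ideal $\mathfrak{p}_1$ appearing in the factorization $(p) = \mathfrak{p}_1^2$. Concretely, writing $(\alpha)(\overline{\alpha}) = (N(\alpha)) = (p) = \mathfrak{p}_1^2$ and using unique factorization of ideals in the Dedekind domain $\Z[\omega]$, one finds $(\alpha) = (\overline{\alpha}) = \mathfrak{p}_1$. The same argument applied to $\beta$ gives $(\beta) = \mathfrak{p}_1$ as well. Hence $(\alpha) = (\beta)$ as principal ideals.

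The last step is standard: two elements of an integral domain that generate the same principal ideal must differ by a unit. Indeed, from $(\alpha) = (\beta)$ we obtain $\alpha = u\beta$ and $\beta = v\alpha$ for some $u, v \in \Z[\omega]$, so $uv = 1$ in $\Z[\omega]$, and hence $u$ is a unit. Thus $\alpha$ and $\beta$ are associates, which is exactly the isolated-norm condition for $p$.

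The only subtle point—and the one I would flag as the main obstacle to a clean write-up—is that Lemma~\ref{lem:ramified1} requires the ramified prime to actually be \emph{represented} by the norm form; in some quadratic rings, the prime ideal $\mathfrak{p}_1$ dividing $p$ is non-principal and no element has norm $p$ at all (the $\Z[\sqrt{-5}]$ example with $p = 2$ already given in the excerpt illustrates this). So the hypothesis that the norm represents $p$ is essential to invoke the previous lemma and conclude $(\alpha) = \mathfrak{p}_1$; without it the argument collapses. Once this hypothesis is in hand, the unique factorization step is immediate and the rest is formal.
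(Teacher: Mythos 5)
Your proof is correct and follows essentially the same route as the paper: both rest on Lemma~\ref{lem:ramified1} and unique factorization of ideals in the Dedekind domain to pin down $(\alpha)=\mathfrak{p}_1$. The paper phrases the final step at the element level (deducing $u\alpha^2=w\beta^2$ from $\overline{\alpha}=u\alpha$ and concluding from primality), whereas you observe directly that $\alpha$ and $\beta$ generate the same prime ideal $\mathfrak{p}_1$ and hence are associates --- a marginally cleaner finish, but the same argument in substance.
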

\begin{proof}
If $N(\alpha)= N(\beta) = p$ then $\alpha,\beta$ are primes by Lemma~\ref{lem:ramified1} and further $\overline{\alpha}=u\cdot \alpha, \ \overline{\beta}=w\cdot \beta$ for units $u,w \in \Z[\omega]$. We then have $u\cdot\alpha^2=w\cdot\beta^2$ and since $\alpha$ and $\beta$ are primes they have to be associates. \end{proof}
The last lemma for inert primes fails to work in general for ramified primes. In $\Z[\sqrt{-5}]$, consider $N(1+\sqrt{-5}) = 6$. Even though the ramified prime $2$ divides the norm of the irreducible $(1+\sqrt{-5})$, the norm is not exactly $2$. We then have the modified version of Lemma~\ref{lem:inert3}.

\begin{lemma}\label{lem:ramified3}
In a quadratic integer ring $\Z[\omega]$, if a ramified prime $p \in \Z$ divides the norm of an irreducible $\alpha \in \Z[\omega]$, and this ramified prime is represented by the norm function, then $N(\alpha)=p$, $\alpha$ is a prime.
\end{lemma}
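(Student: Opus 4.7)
The plan is to follow the same strategy used in Lemma \ref{lem:ramified1} and Lemma \ref{lem:ramified2}, namely to reduce the element-level statement to a statement about prime ideals in the Dedekind domain $\Z[\omega]$, and then exploit the special structure of ramified primes that are represented by the norm.

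First I would invoke the hypothesis that $p$ is represented by the norm function: this gives some $\pi\in\Z[\omega]$ with $N(\pi)=\pi\bar\pi=p$. By Lemma \ref{lem:ramified1}, $\pi$ is a prime element, $(p)=(\pi)^2$, and $\bar\pi=u\pi$ for some unit $u$. Next I would translate the divisibility hypothesis $p\mid N(\alpha)=\alpha\bar\alpha$ into the ideal statement $(\pi)^2\mid (\alpha)(\bar\alpha)$. Since $(\pi)$ is a prime ideal in the Dedekind domain $\Z[\omega]$ and prime ideals satisfy the usual divisibility property, $(\pi)\mid(\alpha)$ or $(\pi)\mid(\bar\alpha)$; equivalently, $\pi\mid\alpha$ or $\pi\mid\bar\alpha$.

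In the first case, write $\alpha=\pi\beta$ for some $\beta\in\Z[\omega]$. Because $\alpha$ is irreducible and $\pi$ is not a unit, $\beta$ must be a unit, so $\alpha$ is an associate of $\pi$. In the second case, conjugating $\bar\alpha=\pi\gamma$ yields $\alpha=\bar\pi\bar\gamma$; using the relation $\bar\pi=u\pi$ from Lemma \ref{lem:ramified1}, this reduces to $\alpha=u\pi\bar\gamma$, and irreducibility of $\alpha$ again forces $\bar\gamma$ to be a unit, so $\alpha$ is an associate of $\pi$. Either way, $\alpha$ is associate to the prime $\pi$, hence $\alpha$ is itself prime and $N(\alpha)=N(\pi)=p$.

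The only subtle point, and the one I would emphasize in the write-up, is that the argument requires $\pi$ to be a \emph{prime element} (not merely irreducible), which is exactly what Lemma \ref{lem:ramified1} supplies under the hypothesis that $p$ is represented by the norm. Without that hypothesis the conclusion fails, as illustrated by $N(1+\sqrt{-5})=6$ in $\Z[\sqrt{-5}]$, where the ramified prime $2$ divides an irreducible norm but is not itself a norm. Thus the main content of the lemma is this reduction, and once ramified1 is in hand the remainder is bookkeeping about associates and conjugation.
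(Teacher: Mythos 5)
Your argument is correct and follows essentially the same route as the paper's: both invoke Lemma \ref{lem:ramified1} to produce a prime $\pi$ with $N(\pi)=p$ and $\bar\pi=u\pi$, use primality of $\pi$ (together with the conjugation relation) to force $\pi\mid\alpha$, and then use irreducibility of $\alpha$ to conclude the cofactor is a unit, giving $N(\alpha)=p$. The only cosmetic difference is that you phrase the divisibility step in terms of ideals while the paper works at the element level; your closing remark about where the ``represented by the norm'' hypothesis is used matches the paper's own counterexample $N(1+\sqrt{-5})=6$.
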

\begin{proof}
Start with an irreducible $\alpha$ such that $N(\alpha)=\alpha\cdot \overline{\alpha}= p \cdot c$ for some non-unit integer $c$. Using Lemma~\ref{lem:ramified1}, $p=u\cdot\pi^2$ for some prime $\pi \in \Z[\omega]$ with $\overline{\pi}=u\cdot\pi$. We then have $\alpha\cdot \overline{\alpha}=u\cdot\pi^2 \cdot c$. Since $\pi$ is a prime, it divides $\alpha$ and also $\overline{\alpha}$ (because $\overline{\pi}=u\cdot\pi$). Thus $\alpha=\pi \cdot w$ for some number $w$, however the norm function fixes $w$ as a unit.   \end{proof}

The ring $\Z[\omega]$ is a Dedekind domain. In particular, it is a Noetherian domain which implies it is an atomic domain and each number can be written in terms of a finite set of irreducibles. The factorization into irreducibles can still be non-unique \cite{IrelandRosen1990}. Using the previous Lemmata~\ref{lem:inert1},~\ref{lem:inert2},~\ref{lem:inert3},~\ref{lem:ramified1},~\ref{lem:ramified2} and ~\ref{lem:ramified3}, we can identify a family of numbers that have isolated norms in any $\Z[\omega]$.

\begin{theorem}\label{theorem:isolatednorms}
 In a quadratic integer ring $\Z[\omega]$, the number \( d =  \prod_{i=1}^{r} p_i^{b_i} \prod_{j=1}^{s}{\pi_j}^{c_j} \)  has an isolated norm, where $p_i \in \Z$ is an inert prime and $\pi_j \in \Z[\omega]$ is a prime whose norm is a ramified prime $q_j \in \mathbb{N}$ and $b_i,c_j,r,s \in \mathbb{N} $.
\end{theorem}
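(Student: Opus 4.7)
The plan is to reduce the claim to a counting argument on irreducible factorizations, and then invoke Lemmata \ref{lem:inert2}, \ref{lem:inert3}, \ref{lem:ramified2}, and \ref{lem:ramified3} to control each factor. Let $\alpha\in \Z[\omega]$ satisfy $N(\alpha)=N(d)=\prod_i p_i^{2b_i}\prod_j q_j^{c_j}$, where $q_j=N(\pi_j)$ is ramified. We must show $\alpha$ is an associate of $d$.

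First I would use that $\Z[\omega]$ is Noetherian, hence atomic, to write $\alpha=\prod_k \alpha_k$ with each $\alpha_k$ irreducible. Then $\prod_k N(\alpha_k)=N(\alpha)=\prod_i p_i^{2b_i}\prod_j q_j^{c_j}$, so every rational prime dividing some $N(\alpha_k)$ is either one of the inert primes $p_i$ or one of the norm-representable ramified primes $q_j$.

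The key step is to pin down $N(\alpha_k)$ exactly. If an inert $p_i$ divides $N(\alpha_k)$, Lemma \ref{lem:inert3} forces $N(\alpha_k)=p_i^2$ and $\alpha_k$ to be an associate of $p_i$ by Lemma \ref{lem:inert2}. If a ramified $q_j$ divides $N(\alpha_k)$ and is represented by the norm function, Lemma \ref{lem:ramified3} forces $N(\alpha_k)=q_j$, and then $\alpha_k$ is an associate of $\pi_j$ by Lemma \ref{lem:ramified2}. In either case $N(\alpha_k)$ is a power of a single rational prime; moreover no $N(\alpha_k)$ can be divisible by two distinct members of $\{p_1,\dots,p_r,q_1,\dots,q_s\}$, since each of the above lemmata determines $N(\alpha_k)$ uniquely from any prime that divides it.

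Now I would close the argument by matching multiplicities. Comparing the factorization $\prod_k N(\alpha_k)=\prod_i p_i^{2b_i}\prod_j q_j^{c_j}$ on both sides and recalling that each $N(\alpha_k)$ contributes either $p_i^2$ for exactly one $i$ or $q_j$ for exactly one $j$, exactly $b_i$ of the $\alpha_k$ have norm $p_i^2$ (each an associate of $p_i$) and exactly $c_j$ have norm $q_j$ (each an associate of $\pi_j$). Multiplying and collecting units gives $\alpha=u\prod_i p_i^{b_i}\prod_j \pi_j^{c_j}=u\,d$ for some unit $u$, so $\alpha$ is an associate of $d$, as required.

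The main obstacle I foresee is rigorously ruling out the case where a single irreducible $\alpha_k$ has norm divisible by two distinct primes from our list. This is precisely where the strength of Lemmata \ref{lem:inert3} and \ref{lem:ramified3} is used: each of them identifies $N(\alpha_k)$ \emph{exactly} (not merely up to a divisor) from knowing any one of its prime divisors, which prevents any ``mixed'' irreducibles from existing. Once that is settled, the multiplicity bookkeeping is essentially arithmetic.
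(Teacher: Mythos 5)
Your proposal is correct and follows essentially the same route as the paper's proof: both decompose $\alpha$ into irreducibles via atomicity, exclude split primes from the norms $N(\alpha_k)$ by divisibility, pin each $N(\alpha_k)$ down to $p_i^2$ or $q_j$ via Lemmata \ref{lem:inert3} and \ref{lem:ramified3}, identify each irreducible up to associates via Lemmata \ref{lem:inert2} and \ref{lem:ramified2}, and finish by matching multiplicities using unique factorization in $\mathbb{N}$. The only cosmetic difference is that the paper groups associate irreducibles into powers $\mu_k^{m_k}$ before comparing exponents, whereas you count the individual factors directly; the bookkeeping is equivalent.
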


\begin{proof}
Assume $\exists \alpha \in \Z[\omega] $ such that $N(\alpha) = N(d)$. Since $\Z[\omega]$ is an atomic domain, we can write $\alpha$ in terms of (possible non-unique) non-assosciate irreducibles $\mu_k$ and a unit $u$: $\alpha = u\cdot \prod_{k=1}^{t} {\mu_k}^{m_k}$ with $t,m_k \in \mathbb{N}$. From the multiplicativity of the norm we have: $N(\alpha) = \prod_{k=1}^{t} N({\mu_k})^{m_k} = N(d) = \prod_{i=1}^{r} p_i^{2b_i} \prod_{j=1}^{s}{q_j}^{c_j} $. Since the norm is a map to $\mathbb{N}$ there is a unique factorization for the norm of each irreducible $\mu_k$ in terms of rational primes. In this factorization, only inert or ramified primes should appear as otherwise if a split prime $g$ appears this will imply $g \mid \prod_{i=1}^{r} p_i^{2b_i}\prod_{j=1}^{s}{q_j}^{c_j}$ which is rejected since all $p_i$ and $q_j$ are either inert or ramified. This means that the norm of every irreducible $\mu_k$ only consists of powers of inert or ramified primes. Using Lemmata~\ref{lem:inert3} and~\ref{lem:ramified3}, the norm of any such irreducible is $N(\mu_k) = {p^{\prime}}_i^2$ for some inert prime $p_i^{\prime} \in \Z$ or $N(\mu_k) = q^{\prime}_k \in \Z$ for a ramified prime $q^{\prime}_k$. Since the irreducibles are non-associates by construction, every norm contribute a distinct prime: $N(\mu_k)  \neq N(\mu_l) $ for $k \neq l$. Let's reorder the irreducibles in one factorization of $\alpha$ according to their norm type: $N(\alpha) = \prod_{k=1}^{x} {p^{\prime}_k}^{2m_k}\prod_{k=x+1}^{t} {q^{\prime}_k}^{m_k}$ for some integer $0\leq x\leq  t $. We then have $ \prod_{k=1}^{x} {p^{\prime}_k}^{2m_k}\prod_{k=x+1}^{t} {q^{\prime}_k}^{m_k} =\prod_{i=1}^{r} p_i^{2b_i} \prod_{j=1}^{s}{q_j}^{c_j} $. Finally, since $\mathbb{N}$ is a UFD, we have $x=r$, $t=x+s$, $m_k = b_i$ for $k \leq x$, and $m_k = c_j$ for $k>x$, and $p^{\prime}_k = p_i$, $q^{\prime}_k = q_j$  up to reordering of indices. This fixes $\alpha = u \cdot \prod_{i=1}^{r} p_i^{b_i} \prod_{j=1}^{s}{\pi_j}^{c_j} $ which is an associate of $d$.
\end{proof}

Note that while this necessarily gives an infinite set of isolated norms, it is only exhaustive in UFDs. A counter example in $\Z[\sqrt{-14}]$ is $3,5$ which are split primes but, they have isolated norms. Moreover, in non-UFDs, it can happen that the product of two isolated norms is not an isolated norm. For an example, consider again the product $3\cdot5$. Each factor has an isolated norm as can be checked explicitly, however, the numbers $1\pm4\sqrt{-14}$ and $13\pm2\sqrt{-14}$ are non-associates and also have the same norm as $15$. This cannot happen in a UFD (as discussed later). Evidently, $\Z[\sqrt{-14}]$ has a class number of $4$ and is thus a non-UFD. Another example in a domain with a class number of $3$ is found in $Z[\frac{1+\sqrt{-23}}{2}]$. The two numbers $2,3$ with isolated norms $4,9$ multiply to give a number with a non-isolated norm.

The results of this section can be restated in terms of the structure of the nearest neighbors for lattices described by a quadratic integer ring Eq.~\eqref{eq:quadtolattice} which works only if the norm function can be interpreted as a distance for field extensions $\Q[\sqrt{-d}]$.

\begin{corollary}\label{corollary:generalisolated}
    In a full 2D lattice $\Gamma$ described by a quadratic integer ring $\Z[\omega]$, there are exactly $|U|$ nearest neighbors to any lattice point lying at a Euclidean distance  \( d = \prod_{i=1}^{r} p_i^{b_i}\prod_{j=1}^{s} q_j^{c_j/2} \) where each $p_i$ is an inert prime $p_i \in \Z$ in $\Z[\omega]$, and $q_j \in \mathbb{N}$ is a ramified prime represented by the norm function, $b_i,c_j,r,s \in \mathbb{N}$, and $|U|$ is the cardinality of the set of units in $\Z[\omega]$.
\end{corollary}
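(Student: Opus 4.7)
The plan is to translate the statement into one about the quadratic integer ring and then invoke Theorem \ref{theorem:isolatednorms} directly. By the identification in \eqref{eq:quadtolattice}, the lattice $\Gamma$ sits inside $\C$ as the ring $\Z[\omega]$, and the squared Euclidean distance from the origin to a point $\alpha\in\Z[\omega]$ equals $N(\alpha)=\alpha\overline{\alpha}$. By translation invariance of $\Gamma$, counting neighbors at distance $d$ of an arbitrary lattice point reduces to counting $\alpha\in\Z[\omega]$ with $|\alpha|=d$, i.e.\ with $N(\alpha)=d^2$.

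First I would compute $d^2=\prod_{i=1}^r p_i^{2b_i}\prod_{j=1}^s q_j^{c_j}$ using the hypothesis on $d$, so that $d^2$ has exactly the shape required for the hypotheses of Theorem \ref{theorem:isolatednorms}: the $p_i$ are inert in $\Z[\omega]$, and each $q_j$ is a ramified rational prime represented by the norm, hence $q_j=N(\pi_j)$ for some prime $\pi_j\in\Z[\omega]$. Setting $\delta=\prod_{i=1}^r p_i^{b_i}\prod_{j=1}^s \pi_j^{c_j}\in\Z[\omega]$, the multiplicativity of the norm gives $N(\delta)=\prod p_i^{2b_i}\prod q_j^{c_j}=d^2$, so at least one witness exists.

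Next I would invoke Theorem \ref{theorem:isolatednorms}: any $\alpha\in\Z[\omega]$ with $N(\alpha)=N(\delta)$ is an associate of $\delta$, i.e.\ $\alpha=u\delta$ for some unit $u\in\Z[\omega]^\times$. Since $\delta\neq 0$, the map $U\to\{\alpha:N(\alpha)=d^2\}$, $u\mapsto u\delta$, is a bijection (injectivity is immediate from $\delta\neq 0$ in an integral domain, surjectivity is exactly what Theorem \ref{theorem:isolatednorms} provides). Therefore the number of lattice points at distance exactly $d$ from $0$ equals $|U|$, and by translation the same count holds at every lattice point.

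The only subtle step is making sure the hypotheses of Theorem \ref{theorem:isolatednorms} transfer cleanly; in particular one must verify that the exponents $b_i,c_j$ and the data of the $\pi_j$ chosen as witnesses do produce a $\delta$ whose norm is $d^2$, and that the definition of ``ramified prime represented by the norm function'' in the corollary matches the hypothesis in the theorem. Both are immediate once $\delta$ is written down explicitly, so I do not expect a real obstacle; the content of the corollary is just the geometric repackaging of the algebraic isolatedness result.
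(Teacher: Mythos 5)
Your proposal is correct and is essentially the argument the paper intends: the paper leaves the corollary as a direct restatement of Theorem~\ref{theorem:isolatednorms} via the identification of squared Euclidean distance with the norm, and your construction of the witness $\delta=\prod p_i^{b_i}\prod\pi_j^{c_j}$ with $N(\delta)=d^2$ followed by the bijection $u\mapsto u\delta$ is exactly that translation, spelled out. No gaps.
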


 The set of units $U$ for each ring $\Z[\omega]$ is given by \cite{IrelandRosen1990}:
\begin{equation}
\label{eq:units}
U =
\begin{cases}
\{1, i, -1, -i\} & \text{for } \mathbb{Z}[i], \\[1mm]
\{\pm 1,\, \pm \omega,\, \pm \omega^2\} & \text{for } \mathbb{Z}[\omega] \text{ with } \omega = \frac{-1+\sqrt{-3}}{2}, \\[1mm]
\{1,-1\} & \text{otherwise}.
\end{cases}
\end{equation}

\subsection{Unique Factorization Domains}\label{par:ufd}
In the case where $\Z[\omega]$ is a UFD, stronger results apply. As any number admits a unique factorization into primes, the prime factors determines if the norm is isolated or not. The cases where a quadratic integer ring with an imaginary field is a UFD is given by the following Stark-Heegner theorem \cite{Stark1969a,Stark1969b}.
\begin{theorem}[Stark-Heegner]\label{theorem:Heegner}
 The quadratic imaginary number field $\Q$ $\Z[\sqrt{-d}]$ has a ring of integers which is a unique-factorization domain iff \[ 
d \in \{-1, -2, -3, -7, -11, -19, -43, -67, -163\}.\]

\end{theorem}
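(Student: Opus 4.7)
The plan splits the equivalence into two implications of very different difficulty. The forward (easy) direction---that each of the nine listed values of $d$ yields a UFD---is classical. I would apply the Minkowski bound, which for the imaginary quadratic field $\mathbb{Q}(\sqrt{-d})$ with discriminant $D$ gives $\tfrac{2}{\pi}\sqrt{|D|}$ as an upper bound on the norms of ideal class representatives. For each of the nine values of $d$ this bound is small enough (at most a handful), so one need only check by direct norm computation that every prime ideal of norm below the bound is principal. Since the ring is Dedekind and every ideal factors into prime ideals, principality of all such small-norm primes forces the class number to be $1$.

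The hard direction---that no other $d$ gives class number one---is where all the depth lies. The route I would follow is the Heegner--Stark approach via modular functions and the theory of complex multiplication. The key fact is that the $j$-invariant evaluated at the CM point $\tau_d = \tfrac{1+\sqrt{-d}}{2}$ (for $-d\equiv 1\pmod 4$) or $\tau_d=\sqrt{-d}$ otherwise generates the Hilbert class field of $\mathbb{Q}(\sqrt{-d})$ and is an algebraic integer of degree exactly $h(-d)$. Hence $h(-d)=1$ is equivalent to the condition $j(\tau_d)\in\mathbb{Z}$. Using the Fourier expansion $j(\tau)=q^{-1}+744+196884\,q+\cdots$ with $q=e^{2\pi i\tau_d}$, class number one forces $e^{\pi\sqrt{d}}$ to lie extremely close to a rational integer---this is the famous near-integer phenomenon at $d=163$.

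To finish, one must show that such near-integrality is genuinely rare. Following Heegner's strategy, I would introduce a second modular function invariant under a suitable congruence subgroup---for instance the Weber $\mathfrak{f}$-function or a Hauptmodul for $\Gamma_0(2)$---and impose that its value at $\tau_d$ also be a rational integer. Substituting both integrality conditions into the modular equation relating $j$ to this auxiliary function reduces the problem to an explicit Diophantine equation in two integer variables. One then solves that equation completely, with its solution set encoding exactly the nine Heegner values.

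The main obstacle, unambiguously, is the final Diophantine step. Heegner's original solution of this equation had a gap that was filled by Stark using a careful analysis of factorization in the auxiliary ring that arises. Alternatively, Baker's theorem on linear forms in logarithms yields an effective upper bound on any putative additional $d$, reducing the problem to a finite (and computable) check. Either route requires substantive input beyond elementary algebraic number theory, so at that point I would either reproduce Stark's Diophantine analysis in full or cite \cite{Stark1969a, Stark1969b} and sketch the transcendence-theoretic alternative; a short self-contained argument is not available.
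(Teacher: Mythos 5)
Your outline is accurate: the Minkowski-bound verification for the nine listed fields, the equivalence of $h=1$ with $j(\tau_d)\in\Z$ via complex multiplication, and the reduction to a Diophantine equation through an auxiliary Weber-type modular function (or, alternatively, Baker's bounds on linear forms in logarithms) is precisely the Heegner--Stark argument, and you are right that no short self-contained proof exists. The paper itself gives no proof at all --- it states the theorem as a classical result and cites Stark --- so your proposal, which ultimately also defers to \cite{Stark1969a,Stark1969b} for the decisive Diophantine step, is entirely consistent with the paper's treatment while adding a correct account of what lies behind the citation.
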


Lemmata \ref{lem:inert1},\ref{lem:inert2} and \ref{lem:inert3} discussing inert rational primes carry over without generalization. However, we have the following two lemmata for ramified and split primes. 
\begin{lemma}\label{lem:ufd1}
In a quadratic integer ring $\Z[\omega]$ which is a UFD, any ramified prime $p \in \Z$ is represented by the norm function, and this norm is isolated.
\end{lemma}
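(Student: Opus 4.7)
The plan is to exploit the fact that $\Z[\omega]$ being both a Dedekind domain and a UFD forces it to be a PID, so the prime ideal above $p$ is principal. From there, the existence of a norm-$p$ element is immediate, and uniqueness (isolation) follows from unique factorization of elements.

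First I would write down the ramification: since $p$ ramifies in $\Z[\omega]$, we have $(p) = \mathfrak{p}^2$ for a (necessarily unique) prime ideal $\mathfrak{p}$. Because $\Z[\omega]$ is a UFD and a Dedekind domain, it is a PID, so $\mathfrak{p} = (\pi)$ for some prime element $\pi \in \Z[\omega]$. Then $(p) = (\pi^2)$, which yields $p = u\pi^2$ for some unit $u$. Taking norms gives $p^2 = N(p) = N(u)\,N(\pi)^2 = N(\pi)^2$, and since norms are nonnegative integers we conclude $N(\pi) = p$. Hence $p$ is represented by the norm function.

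For the isolation claim, suppose $\alpha \in \Z[\omega]$ satisfies $N(\alpha) = p$. Using the UFD structure, write $\alpha = v\,\mu_1^{a_1} \cdots \mu_k^{a_k}$ with $v$ a unit and the $\mu_i$ non-associate primes. Multiplicativity of the norm gives $\prod_i N(\mu_i)^{a_i} = p$. By Lemmata~\ref{lem:inert1}, \ref{lem:inert2}, \ref{lem:inert3} (inert case) and Lemmata~\ref{lem:ramified1}, \ref{lem:ramified2}, \ref{lem:ramified3} (ramified case), plus the analogous split-prime computation, the norm of any prime of $\Z[\omega]$ is either a rational prime (split or ramified case) or the square of an inert rational prime. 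Unique factorization in $\mathbb{N}$ then forces $k=1$, $a_1 = 1$, and $N(\mu_1) = p$; in particular $\alpha = v\,\mu_1$.

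It remains to show that any prime $\mu$ with $N(\mu) = p$ is an associate of $\pi$. From $\mu\bar\mu = p$ we get the ideal identity $(\mu)(\bar\mu) = (p) = (\pi)^2$. By unique factorization of ideals in the Dedekind domain $\Z[\omega]$, both $(\mu)$ and $(\bar\mu)$ must equal the unique prime $(\pi)$ dividing $(p)$; hence $\mu$ and $\pi$ are associates, which forces $\alpha = v\mu$ to be an associate of $\pi$ as well. The main obstacle—essentially a bookkeeping one—is making sure the constraint ``the norm of a prime of $\Z[\omega]$ is either a rational prime or the square of an inert one'' is invoked cleanly; this is exactly the content of the preceding lemmata and is what makes the argument go through in the UFD setting without extra case analysis.
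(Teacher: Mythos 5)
Your proposal is correct. The existence half is word-for-word the paper's argument: ramification gives $(p)=\mathfrak{p}^2$, the UFD-plus-Dedekind hypothesis upgrades this to $p=u\pi^2$ for a prime element $\pi$, and taking norms forces $N(\pi)=p$. For the isolation half the paper simply cites Lemma~\ref{lem:ramified2}, whose mechanism is the conjugate-unit trick: any $\alpha$ with $N(\alpha)=p$ is prime with $\overline{\alpha}=u\alpha$ (Lemma~\ref{lem:ramified1}), so two such elements satisfy $u\alpha^2=w\beta^2$ and must be associates. You instead reprove isolation from scratch: factor $\alpha$ into primes, use multiplicativity to force a single prime factor of norm $p$ (here your appeal to the classification of prime norms is more machinery than needed --- $\prod_i N(\mu_i)^{a_i}=p$ with each $N(\mu_i)>1$ already forces $k=1$, $a_1=1$), and then conclude via unique factorization of ideals that $(\mu)=(\pi)$. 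Both routes are valid; the paper's is shorter and reuses a lemma that holds without the UFD hypothesis, while yours is self-contained and makes explicit use of the PID structure, at the cost of some redundant case analysis.
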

\begin{proof}
The ideal generated by the rational prime $p$ splits as $(p) = \mathfrak{p}_1^2$ for a prime ideal $  \mathfrak{p}_1 \subset\Z[\omega]$. Since every Dedekind domain that is a UFD is also a PID (Principal Ideal Domain), we have $p=\pi_1^2$ with a prime $\pi_1\ \in \Z[\omega]$. The norm function $N(p)=p^2=N(\pi_1)^2$ fixes $N(\pi_1)=p$. This proves the first part. For the second part, use Lemma~\ref{lem:ramified2}. 
\end{proof}
On the other hand, all split primes have non-isolated norms.
\begin{lemma}\label{lem:ufd2}
In a quadratic integer ring $\Z[\omega]$ which is a UFD, any split prime $p \in \Z$ is represented by the norm function, and this norm is not isolated.
\end{lemma}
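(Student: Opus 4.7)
The plan is to exploit the fact that a Dedekind domain that is a UFD is automatically a PID, so the ideal-theoretic splitting behavior can be transferred to actual element factorizations. First I would invoke the hypothesis that $p$ splits, which means $(p) = \mathfrak{p}_1 \mathfrak{p}_2$ with distinct prime ideals $\mathfrak{p}_1 \neq \mathfrak{p}_2$ in $\Z[\omega]$. Since $\Z[\omega]$ is a PID (Dedekind + UFD), there are primes $\pi_1, \pi_2 \in \Z[\omega]$, non-associate to each other, with $\mathfrak{p}_i = (\pi_i)$, so $p = u \pi_1 \pi_2$ for some unit $u$. This is the key structural input.

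Next, to show $p$ is represented by the norm I would compute $p^2 = N(p) = N(u)N(\pi_1)N(\pi_2) = N(\pi_1)N(\pi_2)$. Each $N(\pi_i)$ is a positive integer strictly larger than $1$, so the only possibility compatible with this product is $N(\pi_1) = N(\pi_2) = p$. A cleaner way to see this, which I would use as a cross-check, is to note that complex conjugation induces an involution on prime ideals lying over $p$: since $p$ does not ramify, $\overline{\mathfrak{p}_1} = \mathfrak{p}_2$, hence $\overline{\pi_1}$ is an associate of $\pi_2$, and therefore $N(\pi_1) = \pi_1 \overline{\pi_1}$ generates the ideal $(\pi_1)(\pi_2) = (p)$, giving $N(\pi_1) = p$ after reading off the positive rational representative.

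For the non-isolation claim, I would observe that $\pi_1$ and $\pi_2$ both have norm $p$ but are non-associate: if they were associates, the ideals $(\pi_1)$ and $(\pi_2)$ would coincide, contradicting $\mathfrak{p}_1 \neq \mathfrak{p}_2$. Thus $p$ is a norm value realized by at least two non-associate quadratic integers, so its norm class contains more elements than just the associates of a single representative, and $p$ fails to be isolated in the sense of Theorem~\ref{theorem:isolatednorms}.

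The main obstacle I anticipate is essentially bookkeeping rather than substance: one must make sure that $\overline{\mathfrak{p}_1}$ really equals $\mathfrak{p}_2$ (and not $\mathfrak{p}_1$), which is exactly the content of ``$p$ splits rather than ramifies'' and is guaranteed by Theorems~\ref{theorem:oddprimes} and \ref{theorem:evenprimes}; and one should note that for the exceptional rings $\Z[i]$ and $\Z[\omega]$ (with $\omega$ a cube root of unity), the extra units do not produce associations between $\pi_1$ and $\pi_2$, because associations preserve the generated ideal. Once these points are checked the proof is short and symmetric in $\pi_1,\pi_2$.
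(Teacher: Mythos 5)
Your proposal is correct and follows essentially the same route as the paper: use Dedekind $+$ UFD $\Rightarrow$ PID to turn $(p)=\mathfrak{p}_1\mathfrak{p}_2$ into an element factorization $p=u\pi_1\pi_2$, deduce $N(\pi_1)=N(\pi_2)=p$ from $p^2=N(\pi_1)N(\pi_2)$ with both factors nontrivial, and get non-isolation from the non-association of $\pi_1$ and $\pi_2$ (equivalently, of $\pi_1$ and $\overline{\pi_1}$, since association would force ramification). The conjugation cross-check you add is exactly the paper's observation that $\pi_2=\overline{\pi_1}$.
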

\begin{proof}
The ideal generated by the rational prime $p$ splits as $(p) = \mathfrak{p}_1\mathfrak{p_2}$ for two distinct prime ideals $  \mathfrak{p}_1,\mathfrak{p}_2 \subset\Z[\omega]$. Since every Dedekind domain that is a UFD is also a PID, we have $p=\pi_1\pi_2$ for distinct primes $\pi_1,\pi_2 \in \Z[\omega]$. The norm function $N(p)=p^2=N(\pi_1)N(\pi_2)$ fixes $N(\pi_1)=N(\pi_2)=p$ as otherwise one of them has to be a unit. This means in particular, $p=\pi_1\pi_2=\pi_1 \cdot \overline{\pi_1}$. Hence, $\pi_2=\overline{\pi_1}$. Finally, $\pi_1 \neq u\cdot\overline{\pi_1}$ for some unit $u$ because, if this is the case then $p$ ramifies.
\end{proof}

Combining these lemmata with theorem~\ref{theorem:isolatednorms}, we have the following theorem.
\begin{theorem}\label{theorem:ufdisolatednorms}
In a quadratic integer ring $\Z[\omega]$ which is a UFD, a number has an isolated norm iff its norm is not divisible by any split prime $p \in \Z$.
\end{theorem}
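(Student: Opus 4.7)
The plan is to prove both directions of the equivalence by exploiting the explicit prime factorization available in a UFD $\Z[\omega]$. The key dictionary, collected from Lemmata \ref{lem:inert1}--\ref{lem:inert3} and \ref{lem:ufd1}--\ref{lem:ufd2}, is the following: an inert rational prime $p$ remains prime in $\Z[\omega]$ with $N(p)=p^{2}$; a ramified rational prime $p$ has a unique prime $\pi$ above it (up to units) with $N(\pi)=p$; and a split rational prime $p$ factors as $p=\pi\overline{\pi}$ with $\pi$ and $\overline{\pi}$ non-associate, both of norm $p$. Unique factorization of $\alpha\in\Z[\omega]$ together with multiplicativity of $N$ then determines the prime factorization of $N(\alpha)$ in $\mathbb{N}$, and conversely constrains the prime factorization of $\alpha$ from that of $N(\alpha)$.

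For the reverse direction, assume that no split prime divides $N(\alpha)$. Then every rational prime appearing in the factorization of $N(\alpha)$ is inert or ramified, so I would essentially invoke Theorem \ref{theorem:isolatednorms}: for any $\beta$ with $N(\beta)=N(\alpha)$, comparing rational prime factorizations of $N(\beta)$ and $N(\alpha)$ and using the uniqueness of the $\Z[\omega]$-factorizations above inert and ramified primes forces the prime factorization of $\beta$ to coincide (up to units and reordering) with that of $\alpha$. Hence $\beta$ is an associate of $\alpha$, so $N(\alpha)$ is isolated.

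For the forward direction I would argue by contrapositive. Suppose a split prime $p=\pi\overline{\pi}$ divides $N(\alpha)$ with $v_{p}(N(\alpha))=c\geq 1$, where $v_{p}$ is the $p$-adic valuation. By uniqueness of factorization in $\Z[\omega]$, the primes above $p$ enter $\alpha$ as $\pi^{a}\overline{\pi}^{b}$ with $a+b=c$; there are $c+1\geq 2$ possible exponent pairs. Replacing the pair $(a,b)$ by any other pair summing to $c$, for instance swapping to $(c,0)$ and to $(0,c)$, while keeping the rest of the factorization fixed, produces elements of the same norm; at most one of these can coincide with $\alpha$ up to units, so at least one is a non-associate witness that $N(\alpha)$ is not isolated.

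The principal delicate point is the claim that two elements with genuinely distinct unit-normalized prime factorizations are non-associate, which is immediate from the UFD axiom since prime exponents are well-defined invariants, but is exactly the place where the UFD hypothesis is used essentially and distinguishes this theorem from the non-UFD counter-examples in $\Z[\sqrt{-14}]$ and $\Z[\tfrac{1+\sqrt{-23}}{2}]$ noted after Theorem \ref{theorem:isolatednorms}. Once that is in hand, the remainder is routine bookkeeping on rational-prime valuations.
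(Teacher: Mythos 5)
Your proposal is correct and follows essentially the same route as the paper's proof: the reverse direction reduces to the unique factorization of $\alpha$ over inert and ramified primes (the content of Theorem~\ref{theorem:isolatednorms} and Lemmata~\ref{lem:inert3}, \ref{lem:ufd1}), and the forward direction exhibits a non-associate of the same norm by trading a factor of $\pi$ for $\overline{\pi}$ above a split prime. Your bookkeeping with all $c+1$ exponent pairs $(a,b)$ is a slightly more explicit version of the paper's single swap $\alpha=\pi_j\gamma \mapsto \overline{\pi_j}\gamma$, but the argument is the same.
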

\begin{proof}
Call such a number $\alpha$. It has a unique factorization, $\alpha = \prod_{i=1}^r\pi_i^{m_i}$ into distinct primes $\pi_i \in \Z[\omega]$, and its norm is $N(\alpha)=\prod_{i=1}^r[{N(\pi_i)}]^{m_i}$. However, since each norm $N(\pi_i)$ only consists of inert or ramified primes, by Lemmata~\ref{lem:inert3} and \ref{lem:ufd2}, this norm only comes from a unique number up to associates. This fixes any number with the same norm up to units. For the reverse direction, assume $p \mid N(\alpha)$ where $p \in \mathbb{N}$ is a split prime in $\Z[\omega]$. Then there is at least one prime $\pi_j$ in the factorization of $\alpha$ such that $N(\pi_j)=p$. Write $\alpha = \pi_j\cdot \gamma$ for some number $\gamma \in \Z[\omega]$. By Lemma~\ref{lem:ufd1}, the number $\alpha^{\prime} = \overline{\pi_j}\cdot \gamma$ has the same norm as and is not an associate to $\alpha$. 
\end{proof}

\bibliographystyle{amsplain}
\bibliography{main}

\end{document}